\documentclass{lmcs} 
\pdfoutput=1

\usepackage{lastpage}
\lmcsdoi{16}{4}{1}
\lmcsheading{}{\pageref{LastPage}}{}{}%
{Oct.~08,~2019}{Oct.~02,~2020}{}

\usepackage[utf8]{inputenc}

\keywords{Aggregate computing, field calculus, information propagation}

\usepackage{amsmath}
\usepackage{amssymb}
\usepackage{graphicx}
\usepackage{listings}
\usepackage{stmaryrd}
\usepackage[]{todonotes}
\usepackage{fancyvrb}

\usepackage{mathtools}

\theoremstyle{plain} 
\theoremstyle{remark}
\newtheorem{example}[thm]{Example}
\newtheorem{remark}[thm]{Remark}
\theoremstyle{definition}
\newtheorem{definition}[thm]{Definition}
\newtheorem{rewriting}[thm]{Rewriting}

\newcommand{\snsNumK}{\mathtt{snsNum}}
\newcommand{\minHoodK}{\mathtt{minHood}}
\newcommand{\nbrlt}{\mathtt{nbrlt}}


\newcommand{\corrstart}{}
\newcommand{\corrend}{}
\newcommand{\correction}[1]{\corrstart #1\corrend{}}

\newcommand{\corrstartB}{}
\newcommand{\correndB}{}
\newcommand{\correctionB}[1]{\corrstartB #1\correndB{}}

\newcommand{\FORGET}[1]{}


\newcommand{\BNFcce}{{\bf ::=}}
\newcommand{\BNFmid}{\;\bigr\rvert\;}

\newcommand{\PROGRAM}{\mathtt{P}}
\newcommand{\FUNCTION}{\mathtt{F}}

\newcommand{\main}{\mathtt{main}}
\newcommand{\s}{\mathtt{s}}
\newcommand{\e}{\mathtt{e}}

\newcommand{\emain}{\e_{\main}}
\newcommand{\fname}{\mathtt{d}}
\newcommand{\xname}{\mathtt{x}}
\newcommand{\yname}{\mathtt{y}}

\newcommand{\bname}{\mathtt{b}}

\newcommand{\anyvalue}{\mathtt{v}}
\newcommand{\lvalue}{\ell}
\newcommand{\fvalue}{\phi}
\newcommand{\fvaluealt}{\psi}
\newcommand{\funvalue}{\mathtt{f}}
\newcommand{\funvaluealt}{\mathtt{g}}

\newcommand{\truevalue}{\mathtt{true}}
\newcommand{\falsevalue}{\mathtt{false}}

\newcommand{\dc}{\mathtt{c}}
\newcommand{\dcOf}[2]{#1(#2)}

\newcommand{\auxNAME}{\textit{aux}}
\newcommand{\aux}[1]{\auxNAME(#1)}
\newcommand{\bodyNAME}{\textit{body}}
\newcommand{\body}[1]{\bodyNAME(#1)}
\newcommand{\argsNAME}{\textit{args}}
\newcommand{\args}[1]{\argsNAME(#1)}

\newcommand{\defK}{\mathtt{def}}
\newcommand{\nbrK}{\mathtt{nbr}}
\newcommand{\repK}{\mathtt{rep}}
\newcommand{\shareK}{\mathtt{share}}
\newcommand{\ifK}{\mathtt{if}}
\newcommand{\muxK}{\mathtt{mux}}
\newcommand{\elseK}{\,\mathtt{else}\,}

\newcommand{\letK}{\mathtt{let}\;}
\newcommand{\inK}{\;\mathtt{in}\;}

\newcommand{\toSymK}[1][]{\stackrel{#1}{\mathrm{\texttt{=>}}}}

\newcommand{\selfK}{\mathtt{self}}
\newcommand{\localK}{\mathtt{localHood}}
\newcommand{\localChange}{\mathtt{localChange}}
\newcommand{\minHoodLoc}{\mathtt{minHoodLoc}}

\newcommand{\anyHood}{\mathtt{anyHoodPlusSelf}}


\newcommand{\surfaceTyping}[3]{
  \begin{array}{l@{\;}c}
    \stackrel{~}{{\tiny \textrm{[#1]}}} & #2 \\ \hline 
    \multicolumn{2}{c}{#3}
  \end{array}
}
\newcommand{\nullsurfaceTyping}[2]{
  \surfaceTyping{#1}{}{#2}
}


\newcommand{\fstK}{\mathtt{fst}}





\newcommand{\deviceS}[0]{D}
\newcommand{\builtinop}[4]{\llparenthesis #1 \rrparenthesis_{#3}^{#4,#2}}
\newcommand{\filter}{F}

\newcommand{\Trees}{\Theta}
\newcommand{\emptyseq}{\bullet}

\newcommand{\devset}{I}
\newcommand{\Topo}{\tau}
\newcommand{\Sens}{\Sigma}
\newcommand{\Envi}{\textit{Env}}
\newcommand{\EnviS}[2]{#1,#2}
\newcommand{\SystS}[2]{\langle #1;#2\rangle}
\newcommand{\Field}{\Psi}
\newcommand{\Activation}{\alpha}
\newcommand{\actOFF}{\mathtt{false}}
\newcommand{\actON}{\mathtt{true}}

\newcommand{\Stat}{\textit{Stat}}
\newcommand{\Cfg}{N}
\newcommand{\System}{\mathcal{S}}
\newcommand{\wfn}[1]{\textit{WFE}(#1)}
\newcommand{\senstate}{\sigma}

\newcommand{\nettran}[3]{#1\xrightarrow{#2} #3}
\newcommand{\act}{\textit{act}}
\newcommand{\envact}{\textit{env}}

\newcommand{\envmap}[2]{#1\mapsto #2}
\newcommand{\mapupdate}[2]{#1[#2]}
\newcommand{\globalupdate}[2]{#1\llbracket #2 \rrbracket}
\newcommand{\proj}[2]{{#1}|_{#2}}

\newcommand{\ruleNameSize}[1]{{\scriptsize #1}}

\newcommand{\domofNAME}{\textbf{dom}}
\newcommand{\domof}[1]{\domofNAME(#1)}

\newcommand{\vtree}{\theta}
\newcommand{\mkvt}[2]{#1 \langle #2 \rangle}
\newcommand{\piB}[1]{\pi^{#1}}
\newcommand{\piBof}[2]{\piB{#1}(#2)}
\newcommand{\piI}[1]{\pi_{#1}}
\newcommand{\piIof}[2]{\piI{#1}(#2)}
\newcommand{\piIofOv}[1]{\overline{\pi}(#1)}

\newcommand{\bsopsem}[5]{#1;#2;#3\vdash #4\Downarrow #5}
\newcommand{\deviceId}{\delta}
\newcommand{\vroot}{\mathbf{\rho}}
\newcommand{\vrootOf}[1]{\vroot(#1)}
\newcommand{\substitution}[2]{#1:=#2}
\newcommand{\applySubstitution}[2]{#1[#2]}

\newcommand{\skiptransition}{\\[10pt]}

\newcommand{\netopsemRule}[3]{\surfaceTyping{#1}{#2}{#3}}


\newcommand{\builtindenot}[2]{\mathcal{#1}\llbracket #2 \rrbracket}

\newcommand{\neighof}[0]{\mathcal{N}}
\newcommand{\devof}[0]{d}
\newcommand{\GraphS}[0]{\mathbf{G}}
\newcommand{\aEventS}[0]{\mathbf{E}}

\newcommand{\eventS}[0]{E}
\newcommand{\eventId}[0]{\epsilon}
\newcommand{\dvalue}[0]{\mathrm{\Phi}}
\newcommand{\svalue}[0]{\mathrm{\Psi}}

\DeclareMathOperator{\nextev}{next}

\newcommand{\setVS}[0]{\textbf{V}}

\newcommand{\pto}{\mathrel{\ooalign{\hfil$\mapstochar$\hfil\cr$\to$\cr}}}

\newcommand{\neigh}{\rightsquigarrow}
\newcommand{\sneigh}{\rightarrowtail}
\newcommand{\tneigh}{\dashrightarrow}


\definecolor{dark-gray}{gray}{0}


\newcommand{\ap}[1]{\langle #1 \rangle}
\newcommand{\bp}[1]{\left\lbrace #1 \right\rbrace}
\newcommand{\vp}[1]{\left\lvert #1 \right\rvert}


\newcommand{\stval}[1]{\setVS(#1)}

\lstdefinelanguage{hfc}{
	basicstyle=\small\ttfamily,
	frame=single,
	basewidth=0.5em,
	sensitive=true,
	morestring=[b]",
	morecomment=[l]{//},
	morecomment=[n]{/*}{*/},
	commentstyle=\color{OliveGreen},
	keywordstyle=\color{blue}\textbf, keywords={def}, otherkeywords={=>},
	keywordstyle=[2]\color{red}\textbf, keywords=[2]{rep,nbr,if,let,in},
	keywordstyle=[3]\color{violet}, keywords=[3]{mux,dist,lag,max,sumHood,countHood,everyHood,anyHoodPlusSelf,minHood,maxHoodPlusSelf,foldHood,nbrRange},
	keywordstyle=[4]\color{orange}\textbf, keywords=[4]{spawn,share},
	keywordstyle=[5]\color{blue}, keywords=[5]{false,true,infinity}
}


\usepackage{wrapfig}
\usepackage{float}
\usepackage{afterpage}

\usepackage{hyperref}
\usepackage{cleveref}

\begin{document}
\title[Field-based Coordination with the Share Operator]{Field-based Coordination with the Share Operator}

\titlecomment{This work has been partially supported by Ateneo/CSP project ``AP: Aggregate Programming'' 
(\url{http://ap-project.di.unito.it/}) and by Italian PRIN 2017 project ``Fluidware''. This document does not contain technology or technical data controlled under either U.S. International Traffic in Arms Regulation or U.S. Export Administration Regulations.}

\author[G.~Audrito]{Giorgio Audrito\rsuper{a}}
\address{\lsuper{a}Dipartimento di Informatica,  University of Torino, Torino, Italy}
\email{giorgio.audrito@unito.it}
\email{ferruccio.damiani@unito.it}

\author[J.~Beal]{Jacob Beal\rsuper{b}}
\address{\lsuper{b}Raytheon BBN Technologies, Cambridge (MA), USA}
\email{jakebeal@ieee.org}

\author[F.~Damiani]{Ferruccio Damiani\rsuper{a}}

\address{\lsuper{c}Alma Mater Studiorum--Universit\`a di Bologna, Italy}
\email{danilo.pianini@unibo.it}
\email{mirko.viroli@unibo.it}
\author[D.~Pianini]{Danilo Pianini\rsuper{c}}

\author[M.~Viroli]{Mirko Viroli\rsuper{c}}

\begin{abstract}
	Field-based coordination has been proposed as a model for coordinating collective adaptive systems, promoting a view of distributed computations 
	as functions manipulating data structures spread over space and evolving over time, called computational fields.
	The field calculus is a formal foundation for field computations, providing specific constructs for evolution (time) and neighbour interaction (space), which are handled by separate operators (called {\tt rep} and {\tt nbr}, respectively).
	This approach, however, intrinsically limits the speed of information propagation that can be achieved by their combined use.
	In this paper, we propose a new field-based coordination operator called {\tt share}, which captures the space-time nature of 
	field computations in a single operator that declaratively achieves: \emph{(i)} observation of neighbours' values;
	 \emph{(ii)} reduction to a single local value; and \emph{(iii)} update and converse sharing to neighbours of a local variable.
	We show that for an important class of self-stabilising computations, {\tt share} can replace all occurrences of {\tt rep} and {\tt nbr} constructs.
	In addition to conceptual economy, use of the {\tt share} operator also allows many prior field calculus algorithms to be greatly accelerated, 
	which we validate empirically with simulations of frequently used network propagation and collection algorithms.
	
\end{abstract}

\maketitle
\lstset{language={hfc}}

\section{Introduction} \label{sec:introduction}


The number and density of networking computing devices distributed throughout our environment is continuing to increase rapidly.
In order to manage and make effective use of such systems, there is likewise an increasing need for software engineering paradigms that simplify the engineering of resilient distributed systems.
Aggregate programming~\cite{BPV-COMPUTER2015,VIROLI-ET-AL-JLAMP-2019} is one such promising approach, providing a layered architecture in which programmers can describe computations in terms of resilient operations on ``aggregate'' data structures with values spread over space and evolving in time.

The foundation of this approach is field computation, formalized by the field calculus~\cite{Viroli:TOMACS_selfstabilisation}, a terse mathematical model of distributed computation that simultaneously describes both collective system behavior and the independent, unsynchronized actions of individual devices that will produce that collective behavior~\cite{Viroli:HFC-TOCL}.
In this approach each construct and reusable component is a pure function from fields to fields---a field is a map from a set of space-time computational events to a set of values. In prior formulations, each primitive construct has also handled just one key aspect of computation: hence, one construct deals with time (i.e, \texttt{rep}, providing field evolution, in the form of periodic state updates) and one with space (i.e., \texttt{nbr}, handling neighbour interaction, in the form of reciprocal state sharing).

However, in recent work on the universality of the field calculus, we have identified that the combination of time evolution and neighbour interaction operators in the original field calculus induces a delay, limiting the speed of information propagation that can be achieved efficiently~\cite{a:fcuniversality}.
This limit is caused by the separation of state sharing (\texttt{nbr}) and state updates (\texttt{rep}), which means that any information received with a {\tt nbr} operation has to be remembered with a {\tt rep} before it can be shared onward during the next execution of the {\tt nbr} operation, as illustrated in Figure~\ref{f:intuition}.

\begin{figure}[t]
\centering
\includegraphics[width=0.6\textwidth]{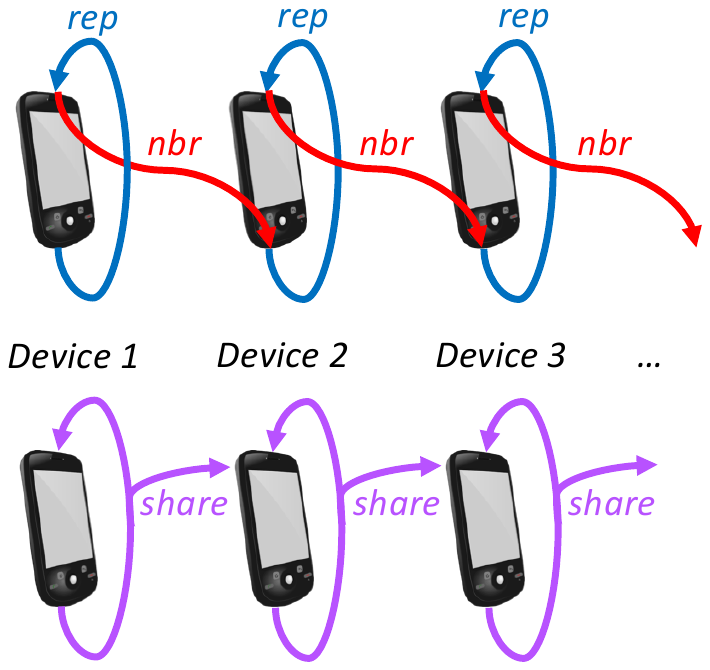}
\caption{Handling state sharing (\texttt{nbr}) and memory (\texttt{rep}) separately injects a delay while information ``loops around'' to where it can be shared (top), while combining state sharing and memory into the new \texttt{share} operator eliminates that delay (bottom).}
\label{f:intuition}
\end{figure}

In this paper, we address this limitation by extending the field calculus with the {\tt share} construct. \correction{Building on the underlying asynchronous protocol of field calculus, this extension combines time evolution and neighbour interaction into a single new atomic coordination primitive that simultaneously implements:} 
\emph{(i)} observation of neighbours' values; \emph{(ii)} reduction to a single local value; and \emph{(iii)} update of a local variable and sharing of the updated value with neighbours.
The {\tt share} construct thus allows the effects of information received from neighbours to be shared immediately after it is incorporated into state, rather than having to wait for the next round of computation.

\correction{Another contribution of this paper is the adaptation of the field calculus operational semantics in \cite{Viroli:TOMACS_selfstabilisation} to model \emph{true concurrency}, i.e.,~modelling non-instantaneous computation rounds. This extension, required to fully capture the semantics of the \texttt{share} construct, is shown to be conservative with respect to \cite{Viroli:TOMACS_selfstabilisation}, and extends the applicability of the calculus by mirroring the denotational semantics \cite{Viroli:HFC-TOCL} (which was already true concurrent) on \emph{augmented event structures} (a novel refined definition capturing physically realisable aggregate computations).}

The remainder of this paper formally develops and experimentally validates these concepts,
expanding on a prior version~\cite{audrito2019share} with an improved and extended presentation of the operators, 
complete formal semantics \correction{(including the true concurrent version of the network semantics in \cite{Viroli:TOMACS_selfstabilisation})}, analysis of key properties, and additional experimental validation.
Following a review of the field calculus and its motivating context in Section~\ref{sec:backgroundANDmotivation}, 
we introduce \correction{the novel network semantics in Section~\ref{sec:TCNS}, and
the {\tt share} construct in Section~\ref{sec:construct}, along with formal semantics and analysis of the relationship of the {\tt share} construct
 with the combined used of the $\repK$ and $\nbrK$ constructs.}
We then empirically validate the predicted acceleration of speed in frequently used network propagation and collection algorithms in Section~\ref{sec:experiments}, 
and conclude with a summary and discussion of future work in Section~\ref{sec:conclusions}.

\corrstart
\section{Related Work and Background} \label{sec:backgroundANDmotivation}
\corrend

Programming collective adaptive systems is a challenge that has been recognized and addressed in a wide variety of different contexts.
Despite the wide variety of goals and starting points, however, the commonalities in underlying challenges have tended to shape the resulting aggregate programming approaches into several clusters of common approaches, as enumerated in~\cite{SpatialIGI2013}:
\emph{(i)}
``device-abstraction'' methods that abstract and simplify the programming of individual devices and interactions 
(e.g., TOTA~\cite{tota}, Hood~\cite{hood}, chemical models~\cite{VPMSZ-SCP2015}, ``paintable computing''~\cite{butera}, Meld~\cite{Meld}) 
or entirely abstract away the network (e.g., BSP~\cite{valiant1990bsp}, MapReduce~\cite{dean2008mapreduce}, Kairos~\cite{kairos});
\emph{(ii)}
spatial patterning languages that focus on geometric or topological constructs (e.g., Growing Point Language~\cite{coorephd}, Origami Shape Language~\cite{nagpalphd}, self-healing geometries~\cite{clement2003self,kondacs}, cellular automata patterning~\cite{yamins});
\emph{(iii)}
information summarization languages that focus on collection and routing of information (e.g., TinyDB~\cite{Madden:SIGOPS-2002}, Cougar~\cite{Yao02thecougar}, TinyLime~\cite{Curino05mobiledata}, and Regiment~\cite{regiment});
\emph{(iv)}
general purpose space-time computing models (e.g., StarLisp~\cite{starlisp}, MGS~\cite{GiavittoMGS02,GiavittoMGS05}, Proto~\cite{proto06a}, aggregate programming~\cite{BPV-COMPUTER2015}).

The field calculus~\cite{Viroli:TOMACS_selfstabilisation,Viroli:HFC-TOCL} belongs to the last of these classes, the general purpose models.
Like other core calculi, such as $\lambda$-calculus~\cite{LambdaCalculus} or Featherweight Java~\cite{FJ}, 
the field calculus
provides a minimal, mathematically tractable programming language---in this case with the goal of unifying across a broad class of aggregate programming approaches and providing a principled basis for integration and composition.
Indeed, recent analysis~\cite{a:fcuniversality} has determined that the current formulation of field calculus is space-time universal, meaning that it is able to capture every possible computation over collections of devices sending messages. 
Field calculus can thus serve as a unifying abstraction for programming collective adaptive systems, and results regarding field calculus have potential implications for all other works in this field.
\correction{Indeed, all of the algorithms we discuss in this paper are generalized versions that unify across the common patterns found in all of the works cited above, as described in~\cite{SpatialIGI2013,FDMVA-NACO2013,Viroli:TOMACS_selfstabilisation}.}

\correction{In addition to establishing universality, however, the work in \cite{a:fcuniversality}} also identified a key limitation of the current formulation of the field calculus, which we are addressing in this paper.  In particular,
the operators for time evolution and neighbour interaction in field calculus interact such that 
for most programs either the message size grows with the distance that information must travel or else information must travel significantly slower than the maximum potential speed.
The remainder of this section provides a brief review of these key results: 
Section~\ref{ssec:model} introduces the underlying space-time computational model used by the field calculus \correction{(featuring a novel refined definition of \emph{augmented event structure} capturing the physically realisable aggregate computations),
Section~\ref{ssec:selfstabilisation} introduces the notion of self-stabilisation,}
Section~\ref{ssec:constructs} provides a review of the field calculus itself, 
followed by \correction{a review of its device semantics (modeling the local and asynchronous computation that takes place on a single device) in Section~\ref{sec:big-step}. 
The network semantics (modeling the overall network evolution) will then be presented in Section~\ref{sec:TCNS}.}

\subsection{Space-Time Computation} \label{ssec:model}

Field calculus considers a computational model in which a program $\PROGRAM$ is periodically and asynchronously executed by each device $\deviceId$.\correction{\footnote{\correction{We use $\deviceId$ as a metavariable ranging over a given denumerable set of device identifiers $\deviceS$.}}}
When an individual device performs a round of execution, that device follows these steps in order:
(i) collects information from sensors, local memory, and the most recent messages from neighbours,\footnote{Stale messages may expire after some timeout.} the latter \correction{organised into  \emph{neighbouring value maps} $\fvalue: \deviceId \rightarrow \anyvalue$ from neighbour identifiers to neighbour values,}
(ii)	evaluates program $\PROGRAM$ with the information collected as its input,
(iii)	stores the results of the computation locally, as well as broadcasting it to neighbours and possibly feeding it to actuators, and
(iv)	sleeps until it is time for the next round of execution.
Note that as execution is asynchronous, devices perform executions independently and without reference to the executions of other devices, except insofar as they use state that has arrived in messages. 
Messages, in turn, are assumed to be collected by some separate thread, independent of execution rounds.
\correction{Note that the  \texttt{share} operator we discuss in this paper works on top of the above execution model, hence it affects the local evaluation of the program, which in turn results in the exchange of asynchronous messages.}

If we take every such execution as an {\em event} $\eventId$, then the collection of such executions across space (i.e., across devices) and time (i.e., over multiple rounds) may be considered as the execution of a single aggregate machine with a topology based on information exchanges $\neigh$.
The causal relationship between events may then be formalized as defined in \correction{\cite{lamport:events}}:

\begin{definition}[Event Structure]\label{def:eventstructure} \label{def:nevents}
	An \emph{event structure} $\ap{\eventS,\neigh,<}$ is a countable set of \emph{events} $\eventS$ together with a neighbouring relation $\neigh \subseteq \eventS \times \eventS$ and a causality relation $< \subseteq \eventS \times \eventS$, such that the transitive closure of $\neigh$ forms the irreflexive partial order $<$, and the set $\correction{X_\eventId = } \bp{\eventId' \in \eventS \mid ~ \eventId' < \eventId} \correction{\cup \bp{\eventId' \in \eventS \mid ~ \eventId \neigh \eventId'}}$ is finite for all $\eventId$ (i.e., $<$ \correction{and $\neigh$} are locally finite).

	\correction{Thus, we say that $\eventId'$ is a neighbour of $\eventId$ iff $\eventId' \neigh \eventId$, and that $\neighof(\eventId) = \bp{\eventId' \in \eventS \mid ~ \eventId' \neigh \eventId}$ is the set of neighbours of $\eventId$.}
\end{definition}

\correction{
\begin{remark}[Event Structures and Petri Nets]
	Event structures for Petri Nets are used to model a spectrum of \emph{possible evolutions} of a system, hence include also an \emph{incompatibility} relation, discriminating between alternate future histories and modelling non-deterministic choice. However, following \cite{lamport:events}, we use event structures to model a ``timeless'' \emph{unitary history} of events, thus avoiding the need for an incompatibility relation.
\end{remark}
}

\correction{In aggregate computing, event structures need to be \emph{augmented} with device identifiers \cite{Viroli:HFC-TOCL,a:fcuniversality}, as in the following definition.}

\correction{
\begin{definition}[Augmented Event Structure] \label{def:augmentedES}
	Let $\aEventS = \ap{\eventS,\neigh,<,\devof}$ be such that $\ap{\eventS,\neigh,<}$ is an event structure and $\devof : \eventS \to \deviceS$ is a mapping from events to the devices where they happened. We define:
	\begin{itemize}
		\item
		$\nextev : \eventS \pto \eventS$ as the partial function\footnote{With $A \pto B$ we denote the space of partial functions from $A$ into $B$.} mapping an event $\eventId$ to the unique event $\nextev(\eventId)$ such that $\eventId \neigh \nextev(\eventId)$ and $\devof(\eventId) = \devof(\nextev(\eventId))$, if such an event exists and is unique; and
		\item
		$\tneigh \subseteq \eventS \times \eventS$ as the relation such that $\eventId \tneigh \eventId'$ ($\eventId$ \emph{implicitly precedes} $\eventId'$) if and only if $\eventId' \neigh \nextev(\eventId)$ and $\eventId' \not\neigh \eventId$.
	\end{itemize}
	We say that $\aEventS$ is an \emph{augmented event structure} if the following coherence constraints are satisfied:
	\begin{itemize}
		\item \textbf{linearity:} if $\eventId \neigh \eventId_i$ for $i=1,2$ and $\devof(\eventId) = \devof(\eventId_1) = \devof(\eventId_2)$, then $\eventId_1 = \eventId_2 = \nextev(\eventId)$ (i.e., every event $\eventId$ is a neighbour of at most another one on the same device);
		\item \textbf{uniqueness:} if $\eventId_i \neigh \eventId$ for $i=1,2$ and $\devof(\eventId_1) = \devof(\eventId_2)$, then $\eventId_1 = \eventId_2$ (i.e., neighbours of an event all happened on different devices);
		\item \textbf{impersistence:} if $\eventId \neigh \eventId_i$ for $i=1,2$ and $\devof(\eventId_1) = \devof(\eventId_2) = \deviceId$, then either $\eventId_2 = \nextev^n(\eventId_1)$ and $\eventId \neigh \nextev^k(\eventId_1)$ for all $k \le n$, or the same happens swapping $\eventId_1$ with $\eventId_2$ (i.e., an event reaches a contiguous set of events on a same device);
		\item \textbf{immediacy:} there is no cyclic sequence such that $\eventId_1 < \eventId_2 \tneigh \eventId_3 < \ldots < \eventId_{2n} \tneigh \eventId_1$ (i.e., explicit causal dependencies $<$ are consistent with implicit time dependencies $\tneigh$).
	\end{itemize}
\end{definition}
The first two constraints are necessary for defining the semantics of an aggregate program (denotational semantics in \cite{Viroli:HFC-TOCL,VIROLI-ET-AL-JLAMP-2019}). The third reflects that messages are not retrieved after they are first dropped (and in particular, they are all dropped on device reboots). The last constraint reflects the assumption that communication happens through broadcast (modeled as happening instantaneously). In this scenario, the explicit causal dependencies imply additional time dependencies $\eventId \tneigh \eventId'$: if $\eventId'$ was able to reach $\nextev(\eventId)$ but not $\eventId$, the broadcast of $\eventId'$ must have happened \emph{after} the start of $\eventId$ (additional details on this point may be found in the proof of Theorem \ref{thm:tcns:completeness} in \Cref{apx:proofs:tcns}).
}

\correction{
\begin{remark}[On Augmented Event Structures]
	Augmented event structures were first implicitly used in \cite{Viroli:HFC-TOCL} for defining the denotational semantics (with the \emph{linearity} and \emph{uniqueness} constraints only), then formalised in \cite{a:fcuniversality} (without any explicit constraint } \correction{embedded in the definition). In this paper, we gathered all necessary constraints to capture exactly which augmented event structures correspond to physically plausible executions of an aggregate system (see Theorem \ref{thm:tcns:completeness}): this includes both the \emph{linearity} and \emph{uniqueness} from \cite{Viroli:HFC-TOCL}, together with the new \emph{impersistence} and \emph{immediacy} constraints.
\end{remark}
}

\begin{figure}
\centering
\includegraphics[width=0.8\textwidth]{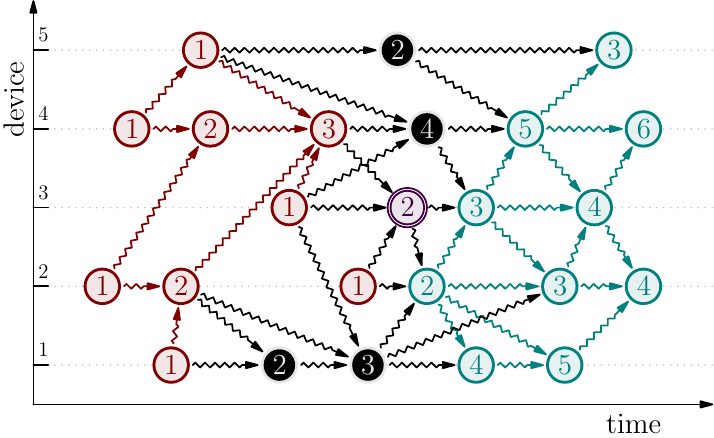}	
\caption{\corrend{}Example of a space-time \correction{augmented} event structure, comprising events (circles), neighbour relations (arrows), \correction{devices (ordinate axis).} Colors indicate causal structure with respect to the doubly-circled event (magenta), splitting events into causal past (red), causal future (cyan) and concurrent (non-ordered, in black). The numbers written within events represent a sample space-time value (cf.~Def.~\ref{def:stvalue}) associated with that event structure. {\correction{Note that the doubly-circled event has 3 neighbouring events: event $1$ at the same device (its previous round), event $3$ at device 4, and event $1$ at device $2$. }}
 Figure adapted from~\cite{a:fcuniversality}.}
\label{fig:structure}
\end{figure}

Figure~\ref{fig:structure} shows an example of such an \correction{augmented} event structure, showing how these relations partition events into ``causal past'', ``causal future'', and non-ordered ``concurrent'' subspaces with respect to any given event.
Interpreting this in terms of physical devices and message passing, 
a physical device is instantiated as a chain of events connected by $\neigh$ relations (representing evolution of state over time with the device carrying state from one event to the next), 
and any $\neigh$ relation between devices represents information exchange from the tail neighbour to the head neighbour.
Notice that this is a very flexible and permissive model: there are no assumptions about synchronization, shared identifiers or clocks, or even regularity of events (though of course these things are not prohibited either).

In principle, an execution at $\eventId$ can depend on information from any event in its past and its results can influence any event in its future.
As we will see in Section~\ref{sec:motivation}, however, this is problematic for the field calculus as it has been previously defined.

Our aggregate constructs then manipulate space-time data values (see Figure \ref{fig:structure}) that map events to values for each event in an event structure:

\begin{definition}[Space-Time Value]\label{def:stvalue}
	Let $\setVS$ be any domain of computational values and $\aEventS = \ap{\eventS,\neigh,<,\devof}$ be \correction{an augmented} event structure. A space-time value $\dvalue = \ap{\aEventS,f}$ is a pair comprising the \correction{event structure} and a function $f : \eventS \to \setVS$ that maps the events \correction{$\eventId \in E$} to values \correction{$\anyvalue \in \setVS$}.
\end{definition}

We can then understand an aggregate computer as a ``collective'' device manipulating such space-time values, and the field calculus as a definition of operations defined both on individual events and simultaneously on aggregate computers, \correction{modelled as space-time functions.}

\correction{
\begin{definition}[Space-Time Function] \label{def:stfunc}
	Let $\stval{\aEventS} = \bp{\ap{\aEventS, f} \mid ~ f : \eventS \to \setVS}$ be the set of space-time values in an augmented event structure $\aEventS$. 
	Then, an \emph{$n$-ary space-time function in $\aEventS$} is a partial map $\funvalue : \stval{\aEventS}^n \pto \stval{\aEventS}$.
\end{definition}
}

\subsection{\correction{Stabilisation and spatial model}} \label{ssec:selfstabilisation}

\corrstart{}
Even though the global interpretation of a program has to be given in spatio-temporal terms in general, for a relevant class of programs a space-only representation is also possible. In this representation, \emph{event structures}, \emph{space-time values} and \emph{space-time functions} are replaced by \emph{network graphs}, \emph{computational fields} and \emph{field functions}.

\begin{definition}[Network Graph] \label{def:graph}
	A \emph{network graph} $\GraphS = \ap{\deviceS, \sneigh}$ is a finite set $\deviceS$ of \emph{devices} $\deviceId$ together with a reflexive neighbouring relation $\sneigh \subseteq \deviceS \times \deviceS$, i.e.,~such that $\deviceId \sneigh \deviceId$ for each $\deviceId \in  \deviceS$.
	Thus, we  say that $\deviceId'$ is a neighbour of $\deviceId$ iff $\deviceId' \sneigh \deviceId$, and that $\neighof(\deviceId) = \bp{\deviceId' \in \deviceS \mid ~ \deviceId' \sneigh \deviceId}$ is the set of neighbours of $\deviceId$.
\end{definition}

Notice that $\sneigh$ does not necessarily have to be symmetric.

\begin{definition}[Computational Field]
	Let $\setVS$ be any domain of computational values and $\GraphS = \ap{\deviceS, \sneigh}$ be a network graph. A computational field $\svalue = \ap{\GraphS, g}$ is a pair comprising the network graph and a function $g : \deviceS \to \setVS$ mapping devices $\deviceId \in \deviceS$ to values $\anyvalue \in \setVS$.
\end{definition}

\begin{definition}[Field Function]
	Let $\stval{\GraphS} = \bp{\ap{\GraphS, g} \mid ~ g : \deviceS \to \setVS}$ be the set of computational fields in a network graph $G$. Then, an \emph{$n$-ary field function in $G$} is a partial map $\funvaluealt : \stval{\GraphS}^n \pto \stval{\GraphS}$.
\end{definition}

These space-only, time-independent representations are to be interpreted as \emph{``limits for time going to infinity''} of their traditional time-dependent counterparts, where the limit is defined as in the following.

\begin{definition}[Stabilising Event Structure and Limit] \label{def:ESlimit}
	Let $\aEventS = \ap{\eventS, \neigh, <, \devof}$ be an infinite augmented event structure. We say that $\aEventS$ is \emph{stabilising} to its limit $\GraphS = \ap{\deviceS, \sneigh} = \lim \aEventS$ iff $\deviceS = \bp{\deviceId \mid ~ \exists^\infty \eventId \in \eventS. ~ \devof(\eventId) = \deviceId}$ is the set of devices appearing infinitely often in $\aEventS$, and for all except finitely many $\eventId \in \eventS$, the devices of neighbours are the neighbours of the device of $\eventId$:
	\[
	\bp{\devof(\eventId') \mid ~ \eventId' \neigh \eventId} = \bp{\deviceId' \mid ~ \deviceId' \sneigh \devof(\eventId)}
	\]
\end{definition}

\begin{definition}[Stabilising Value and Limit]
	Let $\dvalue = \ap{\aEventS, f}$ be a space-time value on a stabilising event structure $\aEventS = \ap{\eventS, \neigh, <, \devof}$ with limit $\GraphS$. We say that $\dvalue$ is stabilising to its limit $\svalue = \ap{\GraphS, g} = \lim \dvalue$ iff for all except finitely many $\eventId \in \eventS$, $f(\eventId) = g(\devof(\eventId))$.
\end{definition}

\begin{definition}[Self-Stabilising Function and Limit] \label{def:selfstabilisation}
	Let $\funvalue : \stval{\aEventS}^n \pto \stval{\aEventS}$ be an $n$-ary space-time function in a stabilising $\aEventS$ with limit $\GraphS$. We say that $\funvalue$ is \emph{self-stabilising} with limit $\funvaluealt : \stval{\GraphS}^n \pto \stval{\GraphS}$ iff for any $\ap{\dvalue_1, \ldots, \dvalue_n}$ with limit $\ap{\svalue_1, \ldots, \svalue_n}$, $\funvalue(\dvalue_1, \ldots, \dvalue_n) = \dvalue$ with limit $\svalue = \funvaluealt(\svalue_1, \ldots, \svalue_n) = \lim \dvalue$.
\end{definition}

Many of the most commonly used routines in aggregate computing compute self-stabilising functions, and in fact belong to a self-stabilising class identified in \cite{Viroli:TOMACS_selfstabilisation}. In Section \ref{ssec:selfstab}, we shall prove that the convergence dynamics of this class can be improved by use of the $\shareK$ construct, without changing the overall limit (see Theorem \ref{thm:selfstab}).
\corrend{}

\subsection{Field Calculus} \label{ssec:constructs}

The field calculus is a tiny universal language for computation of space-time values.
Figure~\ref{fig:syntax} gives an abstract syntax for field calculus based on the presentation in~\cite{Viroli:TOMACS_selfstabilisation}
(covering a subset of the higher-order field calculus in~\cite{Viroli:HFC-TOCL}, but including all of the issues addressed by the {\tt share} construct).
In this syntax, 
the overbar notation $\overline\e$ indicates a sequence of elements (e.g., $\overline\e$ stands for $\e_1, \e_2, \ldots, \e_n$), 
and multiple overbars are expanded together (e.g., $\envmap{\overline\deviceId}{\overline\lvalue}$ stands for 
$\deviceId_1 \mapsto \lvalue_1, \deviceId_2 \mapsto \lvalue_2, \ldots, \deviceId_n \mapsto \lvalue_n$).
There are four keywords in this syntax:
$\defK$ and $ \ifK$ respectively correspond to the standard function definition and the branching expression constructs,
while $\repK$ and $\nbrK$ correspond to the two peculiar field calculus constructs that are the focus of this paper, respectively responsible for evolution of state over time and for sharing information between neighbours.

\begin{figure}[t]
\centering
\centerline{\framebox[\linewidth]{$
        \begin{array}{lcl@{\hspace{12mm}}r}
                \PROGRAM & \BNFcce & \overline{\FUNCTION}  \; \e
                &{ \mbox{\footnotesize program}}
                \\[3pt]
                \FUNCTION & \BNFcce &  \defK \,\; \fname (\overline{\xname}) \; \{ \e \}
                &{ \mbox{\footnotesize function declaration}}
                \\[3pt]
                \e & \BNFcce &  \xname \;\BNFmid\; \anyvalue \;\BNFmid\; \correction{\letK \xname = \e \inK \e} \;\BNFmid\; \funvalue(\overline\e) \;\BNFmid\; \ifK (\e) \{\e\} \{\e\}
                &{ \mbox{\footnotesize expression}}
                \\
                && \;\BNFmid\; \nbrK\{\e\} \;\BNFmid\; \repK(\e)\{ (\xname) \toSymK \e \}
                &
                \\[3pt]
                \funvalue & \BNFcce &  \fname \; \BNFmid \; \bname
                &{ \mbox{\footnotesize function name}}
                \\[3pt]
                \anyvalue & \BNFcce &  \lvalue \; \BNFmid \; \fvalue
                &{ \mbox{\footnotesize value}}
                \\[3pt]
              \lvalue & \BNFcce &  \dcOf{\dc}{\overline\lvalue} 
                &{ \mbox{\footnotesize local value}}
                \\[3pt]
                \fvalue & \BNFcce &  \envmap{\overline\deviceId}{\overline\lvalue}
                &{ \mbox{\footnotesize neighbouring value}}
                \\[3pt]
        \end{array}
        $}
}
\caption{Abstract syntax of the field calculus, adapted from~\cite{Viroli:TOMACS_selfstabilisation}}
\label{fig:syntax}
\end{figure}

A field calculus program $\PROGRAM$ is a set of function declarations $\overline\FUNCTION$ and the main expression $\e$.
This main expression $\e$ simultaneously defines both the aggregate computation executed on the overall event structure of an aggregate computer and the local computation executed at each of the individual events therein.
An expression $\e$ can be: 
\begin{itemize}
	\item
	A \emph{variable} $\xname$, e.g. a function parameter.
	\item
	A \emph{value} $\anyvalue$, which can be of the following two kinds: 
	\begin{itemize}
		\item
		a \emph{local value} $\lvalue$, defined via data constructor $\dc$ and arguments $\overline\lvalue$, such as a Boolean, number, string, pair, tuple, etc;
		\item  
		A \emph{neighbouring (field) value} $\fvalue$ that associates neighbour devices $\deviceId$ to local values $\lvalue$, e.g., a map of neighbours to the distances to those neighbours.
	\end{itemize}
	\correction{\item A \emph{$\mathtt{let}$-expression} $\letK \, \xname = \e_0 \, \inK \, \e$, which is evaluated by first computing the value $\anyvalue_0$ of $\e_0$ and then yielding as result the value of the expression obtained from $\e$ by replacing all the occurrences of the variable $\xname$ with the value $\anyvalue_0$.}
	\item
	A function call $\funvalue(\overline\e)$ to either a \emph{user-declared function} $\fname$ (declared with the $\defK$ keyword) or a \emph{built-in function} $\bname$, such as a mathematical or logical operator, a data structure operation, or a function returning the value of a sensor.
	\item
	A \emph{branching expression} $\ifK (\e_1) \{\e_2\} \elseK \{\e_3\}$, used to split a computation into operations on two isolated event structures, where/when $\e_1$ evaluates to $\truevalue$ or $\falsevalue$: the result is the local value produced by the computation of $\e_2$ in the former area, and the local value produced by the computation of $\e_3$ in the latter.
	
	\item
	The $\nbrK\{\e\}$ construct, where $\e$ evaluates to a local value, creates a neighbouring value mapping neighbours to their latest available result of evaluating $\e$. In particular, each device $\deviceId$:
	\begin{enumerate}
		\item
		shares its value of $\e$ with its neighbours, and
		\item
		evaluates the expression into a neighbouring value $\fvalue$ mapping each neighbour $\deviceId'$ of $\deviceId$ to the latest value that $\deviceId'$ has shared for $\e$.
	\end{enumerate}
	Note that within an $\ifK$ branch, sharing is restricted to work on device events within the subspace of the branch.
	\item
	The $\repK(\e_1)\{(\xname) \toSymK{} \e_2\}$ construct, where $\e_1$ and $\e_2$
	 evaluate to local values, models state evolution over time:
	the value of $\xname$ is initialized to $\e_1$, then evolved at each execution by evaluating $\e_2$ where $\xname$ is the result at previous round.
\end{itemize}

Thus, for example, distance to the closest member of a set of ``source'' devices can be computed with the following simple function:
\begin{lstlisting}[]
def mux(b, x, y) { if (b) {x} {y} }
def distanceTo(source) {
  rep (infinity) { (d) => 
    mux( source, 0, minHood(nbr{d}+nbrRange()) )
} }
\end{lstlisting}
Here, we use the \lstinline|def| construct to define a \lstinline|distanceTo| function that takes a Boolean {\tt source} variable as input.
The \lstinline|rep| construct defines a distance estimate \lstinline|d| that starts at infinity, then decreases in one of two ways.
If the \lstinline|source| variable is true, then the device is currently a source, and its distance to itself is zero.
Otherwise, distance is estimated via the triangle inequality, taking the minimum of a neighbouring value (built-in function \lstinline|minHood|) of the distance to each neighbour (built-in function \lstinline|nbrRange|) plus that neighbour's distance estimate \lstinline|nbr{d}|. Function \lstinline|mux| ensures that all its arguments are evaluated before being selected.


\subsection{Device Semantics}\label{sec:big-step}

The \correction{local and asynchronous} computation that takes place on a single device \correction{was  formalized in~\cite{Viroli:TOMACS_selfstabilisation}} by a big-step semantics, expressed by the judgement $\bsopsem{\deviceId}{\Trees}{\senstate}{\emain}{\vtree}$, to be read ``expression $\emain$ evaluates to $\vtree$ on device $\deviceId$  with respect to \correction{the locally-available} environment $\Trees$ and \correction{locally-available} sensor state $\senstate$''.
The result of evaluation is a \emph{value-tree} $\vtree$, which is an ordered tree of values that tracks the results of all evaluated subexpressions of $\emain$. Such a result is made available to $\deviceId$'s neighbours for their subsequent firing (including $\deviceId$ itself, so as to support a form of state across computation rounds) \correction{through asynchronous message passing.} The \correction{value-trees recently received as messages} from neighbours are then collected into a \emph{value-tree environment} $\Trees$, implemented as a map from device identifiers to value-trees (written $\envmap{\overline\deviceId}{\overline\vtree}$ as short for $\envmap{\deviceId_1}{\vtree_1},\ldots,\envmap{\deviceId_n}{\vtree_n}$).
Intuitively, the outcome of the evaluation will depend on those value-trees. Figure~\ref{fig:deviceSemantics} (top) defines value-trees  and  value-tree 
environments.

\begin{example}\label{exa:value-trees}
The graphical representation of the value trees $\mkvt{6}{\mkvt{2}{},\mkvt{3}{}}$ and \linebreak $\mkvt{6}{\mkvt{2}{},\mkvt{3}{\mkvt{7}{},\mkvt{1}{},\mkvt{4}{}}}$ is as follows:
\begin{samepage}
\begin{verbatim}
           6                   6
          / \                 / \
         2   3               2   3
                                /|\
                               7 1 4
\end{verbatim}
\end{samepage}
\end{example}

In the following, for sake of readability, we sometimes write the value $\anyvalue$ as short for the value-tree $\mkvt{\anyvalue}{}$.
 Following this convention, the value-tree 
$\mkvt{6}{\mkvt{2}{},\mkvt{3}{}}$
is shortened to $\mkvt{6}{2,3}$,  and the value-tree $\mkvt{6}{\mkvt{2}{},\mkvt{3}{\mkvt{7}{},\mkvt{4}{},\mkvt{4}{}}}$ is shortened to $\mkvt{6}{2,\mkvt{3}{7,1,4}}$.

Figure~\ref{fig:deviceSemantics} (bottom) defines the judgement $\bsopsem{\deviceId}{\Trees}{\senstate}{\e}{\vtree}$, where:
\emph{(i)} $\deviceId$ is the identifier of the current device;
\emph{(ii)} $\Trees$ is the neighbouring value of the value-trees produced by the most recent evaluation of (an expression corresponding to) $\e$ on $\deviceId$'s neighbours;
\emph{(iii)} $\e$ is a closed run-time expression (i.e., a closed  expression that may contain neighbouring values);
\emph{(iv)} the value-tree $\vtree$  represents the values computed for all the expressions encountered during the evaluation of $\e$---in particular the root of the value tree $\vtree$, denoted by $\vrootOf{\vtree}$, is the  value computed for expression $\e$. 
The auxiliary function $\vroot$ is defined in Figure~\ref{fig:deviceSemantics} (second frame).

The operational semantics rules are based on rather standard rules for functional languages, extended so as to be able to evaluate a subexpression 
$\e'$ of $\e$ with respect to  the value-tree environment $\Trees'$ obtained from $\Trees$ by extracting the corresponding subtree (when present) 
in the value-trees in the range of $\Trees$. This process, called \emph{alignment}, is modelled by the auxiliary function $\pi$ 
 defined in Figure~\ref{fig:deviceSemantics} (second frame). This function has two different behaviors (specified by its subscript or superscript): 
$\piIof{i}{\vtree}$ extracts the $i$-th subtree of $\vtree$; while $\piBof{\lvalue}{\vtree}$ extracts the last subtree of $\vtree$, \emph{if} the root of the first
 subtree of $\vtree$ is equal to the local (boolean) value $\lvalue$ (thus implementing a filter specifically designed for the $\ifK$ construct). 
Auxiliary functions $\vroot$ and $\pi$   apply pointwise on value-tree environments, as defined in Figure~\ref{fig:deviceSemantics} (second frame, \correction{rules for $\auxNAME \in \rho,\piI{i},\piB{\lvalue}$}).

\begin{figure}[!t]{
 \framebox[1\textwidth]{
 $\begin{array}{l}
 \textbf{Value-trees and value-tree environments:}\\
\begin{array}{lcl@{\hspace{6.8cm}}r}
\vtree & \BNFcce &  \mkvt{\anyvalue}{\overline{\vtree}}    &   {\footnotesize \mbox{value-tree}} \\
\Trees & \BNFcce & \envmap{\overline{\deviceId}}{\overline{\vtree}}   &   {\footnotesize \mbox{value-tree environment}}
\end{array}\\[10pt]
\hline\\[-8pt]
\textbf{Auxiliary functions:}\\
\begin{array}{l}
\begin{array}{l@{\hspace{1cm}}l}
\vrootOf{\mkvt{\anyvalue}{\overline{\vtree}}}  =   \anyvalue
&
\\
\piIof{i}{\mkvt{\anyvalue}{\vtree_1,\ldots,\vtree_n}}  =   \vtree_i
\quad \mbox{if} \; 1\le i \le n
&
\piBof{\lvalue}{\mkvt{\anyvalue}{\vtree_1,\vtree_2}}  =   \vtree_2
\quad \mbox{if} \;  \vrootOf{\vtree_1} = \lvalue
\\
\piIof{i}{\vtree}  =   \emptyseq \quad \mbox{otherwise} 
&
 \piBof{\lvalue}{\vtree}  =   \emptyseq \quad \mbox{otherwise}
\\  
\end{array}
\\
\mbox{For } \auxNAME\in\rho,\piI{i},\piB{\lvalue}:
\quad 
\left\{\begin{array}{lcll}
 \aux{\envmap{\deviceId}{\vtree}}  & =  & \envmap{\deviceId}{\aux{\vtree}} & \quad \mbox{if} \; \aux{\vtree} \not=\emptyseq  
\\
\aux{\envmap{\deviceId}{\vtree}}  & =   & \emptyseq  & \quad \mbox{if} \; \aux{\vtree}=\emptyseq  
\\
\aux{\Trees,\Trees'}  & =  &  \aux{\Trees},\aux{\Trees'}
\end{array}\right.   
\\
\begin{array}{l@{\hspace{2.28cm}}l}
\args{\fname} = \overline{\xname} \quad \mbox{if } \, \defK \; \fname (\overline{\xname}) \; \{\e\}
&
\body{\fname} = \e  \quad \mbox{if } \, \defK \; \fname (\overline{\xname}) \; \{\e\}
\end{array}
\end{array}\\
\hline\\[-10pt]
\textbf{Syntactic shorthands:}\\
\begin{array}{l@{\hspace{5pt}}l@{\hspace{5pt}}l}
\bsopsem{\deviceId}{\piIofOv{\Trees}}{\senstate}{\overline{\e}}{\overline{\vtree}}
&
  \textrm{where~~} |\overline{\e}|=n
&
  \textrm{for~~}
  \bsopsem{\deviceId}{\piIof{1}{\Trees}}{\senstate}{\e_1}{\vtree_1}
    \cdots
    \bsopsem{\deviceId}{\piIof{n}{\Trees}}{\senstate}{\e_n}{\vtree_n} \!\!\!\!\!\!\!\!\!\!\!\! \\
\vrootOf{\overline{\vtree}}
&
  \textrm{where~~} |\overline{\vtree}|=n
  & \textrm{for~~}
\vrootOf{\vtree_1},\ldots,\vrootOf{\vtree_n}\\
\substitution{\overline{\xname}}{\vrootOf{\overline{\vtree}}}
&   \textrm{where~~} |\overline{\xname}|=n
  &
  \textrm{for~~}
\substitution{\xname_1}{\vrootOf{\vtree_1}}~\ldots\quad\substitution{\xname_n}{\vrootOf{\vtree_n}}
\end{array}\\
\hline\\[-10pt]
\textbf{Rules for expression evaluation:} \hspace{4.4cm} 
  \boxed{\bsopsem{\deviceId}{\Trees}{\senstate}{\e}{\vtree}}
\skiptransition
\begin{array}{c}
\nullsurfaceTyping{E-LOC}{
\bsopsem{\deviceId}{\Trees}{\senstate}{\lvalue}{\mkvt{\lvalue}{}}
}
\qquad\qquad
\surfaceTyping{E-FLD}{\qquad \fvalue' = \proj{\fvalue}{\domof{\Trees}\cup\{\deviceId\}}}{
\bsopsem{\deviceId}{\Trees}{\senstate}{\fvalue}{\mkvt{\fvalue'}{}}
}
\skiptransition\\[-8pt]
\correction{\surfaceTyping{E-LET}{ \quad
\begin{array}{c}
  \bsopsem{\deviceId}{\piIof{1}{\Trees}}{\senstate}{\e_1}{\vtree_1} \qquad 
  \bsopsem{\deviceId}{\piIof{2}{\Trees}}{\senstate}{\applySubstitution{\e_2}{\substitution{\xname}{\vrootOf{\vtree_1}}}}{\vtree_2}
\end{array}
 }{
\bsopsem{\deviceId}{\Trees}{\senstate}{\letK \xname = \e_1 \inK \e_2}{\mkvt{\vrootOf{\vtree_2}}{\vtree_1,\vtree_2}}
}}
\skiptransition\\[-6pt]
\surfaceTyping{E-B-APP}{  \quad
\begin{array}{c}
  \bsopsem{\deviceId}{\piIofOv{\Trees}}{\senstate}{\overline{\e}}{\overline{\vtree}}
  \qquad \anyvalue=\builtinop{\bname}{\correction{\senstate}}{\deviceId}{\correction{\Trees}}(\vrootOf{\overline{\vtree}})
\end{array}
 }{
\bsopsem{\deviceId}{\Trees}{\senstate}{\bname(\overline{\e})}{\mkvt{\anyvalue}{\overline{\vtree}}}
}
\skiptransition\\[-6pt]
\surfaceTyping{E-D-APP}{ \quad
\begin{array}{c}
  \bsopsem{\deviceId}{\piIofOv{\Trees}}{\senstate}{\overline{\e}}{\overline{\vtree}} \qquad 
  \bsopsem{\deviceId}{\Trees}{\senstate}{\applySubstitution{\body{\fname}}{\substitution{\args{\fname}}{\vrootOf{\overline{\vtree}}}}}{\vtree'}
\end{array}
 }{
\bsopsem{\deviceId}{\Trees}{\senstate}{\fname(\overline{\e})}{\mkvt{\vrootOf{\vtree'}}{\overline{\vtree},\vtree'}}
}
\skiptransition\\[-5pt]
\surfaceTyping{E-NBR}{
         \qquad
     \bsopsem{\deviceId}{\piIof{1}{\Trees}}{\senstate}{\e}{\vtree}
\qquad
 \fvalue=\mapupdate{\vrootOf{\piIof{1}{\Trees}}}{\envmap{\deviceId}{\vrootOf{\vtree}}}
 }{
\bsopsem{\deviceId}{\Trees}{\senstate}{\nbrK\{\e\}}{\mkvt{\fvalue}{\vtree}}
}
%
%
\skiptransition\\[-6pt]
\surfaceTyping{E-REP}{
        \quad
        \begin{array}{l}
     \bsopsem{\deviceId}{\piIof{1}{\Trees}}{\senstate}{\e_1}{\vtree_1} \\
     \bsopsem{\deviceId}{\piIof{2}{\Trees}}{\senstate}{\applySubstitution{\e_2}{\substitution{\xname}{\lvalue_0}}}{\vtree_2}~~
        \end{array}
        \quad
        \lvalue_0 \! = \!\left\{\begin{array}{ll}
                             \vrootOf{\piIof{2}{\Trees}}(\deviceId) & \mbox{if} \;  \deviceId \in \domof{\Trees} \\
                             \vrootOf{\vtree_{1}} & \mbox{otherwise}
                           \end{array}\right.
 }{
\bsopsem{\deviceId}{\Trees}{\senstate}{\repK(\e_1)\{(\xname) \; \toSymK \; \e_2\}}{\mkvt{\vrootOf{\vtree_{2}}}{\vtree_1,\vtree_2}}
}
%
\skiptransition\\[-4pt]
\surfaceTyping{E-IF}{
     \bsopsem{\deviceId}{\piIof{1}{\Trees}}{\senstate}{\e}{\vtree_1}
\qquad
\vrootOf{\vtree_{1}}\in\{\truevalue,\falsevalue\}
\qquad
     \bsopsem{\deviceId}{\piBof{\vrootOf{\vtree_{1}}}{\Trees}}{\senstate}{\e_{\vrootOf{\vtree_{1}}}}{\vtree}
 }{
\bsopsem{\deviceId}{\Trees}{\senstate}{\ifK (\e) \{\e_\truevalue\} \{\e_\falsevalue\}}{\mkvt{\vrootOf{\vtree}}{\vtree_1,\vtree}}
}
%
%
\end{array}
\end{array}$}
}
 \caption{Big-step operational semantics for expression evaluation, \correction{adapted from~\cite{Viroli:TOMACS_selfstabilisation}.}\corrend} \label{fig:deviceSemantics}
\end{figure}

Rules \ruleNameSize{[E-LOC]} and \ruleNameSize{[E-FLD]} model the evaluation of expressions that are either a local value or a neighbouring value, respectively: note that in \ruleNameSize{[E-FLD]} we take care of restricting the domain of a neighbouring value to the only set of neighbour devices as reported in $\Trees$.

\correction{Rule \ruleNameSize{[E-LET]} is fairly standard: it first evaluates  $\e_1$ and then evaluates  the expression obtained from $\e_2$ by replacing all the occurrences of the variable $\xname$ with the value of $\e_1$.}

Rule \ruleNameSize{[E-B-APP]} models the application of built-in functions.
It is used to evaluate expressions of the form $\bname(\e_1 \cdots \e_n)$, where $n\ge 0$. 
It produces the value-tree $\mkvt{\anyvalue}{\vtree_{1},\ldots,\vtree_{n}}$, where  $\vtree_{1},\ldots,\vtree_{n}$ 
are the value-trees produced by the evaluation of the actual parameters  $\e_{1},\ldots,\e_{n}$  and $\anyvalue$ is the value returned by the function.
The rule exploits the special auxiliary function $\builtinop{\bname}{\correction{\senstate}}{\deviceId}{\Trees}$. This \correction{function} is such that $\builtinop{\bname}{\correction{\senstate}}{\deviceId}{\Trees}(\overline\anyvalue)$ computes the result of applying built-in function $\bname$ to values $\overline\anyvalue$ in the current environment of the device $\deviceId$.\correction{\footnote{\correction{We do not give the explicit definition of $\builtinop{\bname}{\senstate}{\deviceId}{\Trees}(\overline\anyvalue)$ for each $\bname$ in this paper, and leave it as an implementation detail of the semantics.}}}
In particular: the built-in 0-ary function $\selfK$ gets evaluated to the current device identifier (i.e.,  $\builtinop{\selfK}{\correction{\senstate}}{\deviceId}{\Trees}() =\deviceId$), and
  mathematical operators have their standard meaning, which is independent from $\deviceId$ and $\Trees$ (e.g., $\builtinop{*}{\correction{\senstate}}{\deviceId}{\Trees}(2,3)=6$).
\begin{example}
Evaluating the expression $\mathtt{*(2, 3)}$ produces the value-tree 
$\mkvt{6}{2,3}$.
The value of the whole expression, $6$, has been computed by using rule \ruleNameSize{[E-B-APP]} to evaluate the application of the 
multiplication operator $*$ to the values $2$  (the root of the first subtree of the
 value-tree) and $3$  (the root of the second subtree of the value-tree). 
\end{example}
\corrend{}

The $\builtinop{\bname}{\correction{\senstate}}{\deviceId}{\Trees}$ function also encapsulates measurement variables such as \texttt{nbrRange} 
and interactions with the external world via sensors and actuators.

Rule \ruleNameSize{[E-D-APP]} models the application of a user-defined function. 
It is used to evaluate expressions of the form $\fname(\e_1 \ldots \e_n)$, where $n\ge 0$. 
It resembles rule \ruleNameSize{[E-B-APP]} while producing a value-tree with one more subtree $\vtree'$, 
which is produced by evaluating the body of the function $\fname$ (denoted by $\body{\fname}$)  
substituting the formal  parameters of the function (denoted by $\args{\fname}$) with the values obtained evaluating $\e_1, \ldots \e_n$.

Rule \ruleNameSize{[E-REP]} implements internal state evolution through computational rounds: expression $\repK(\e_1)\{(\xname) \toSymK{} \e_2\}$ evaluates to $\applySubstitution{\e_2}{\substitution{\xname}{\anyvalue}}$ where $\anyvalue$ is obtained from $\e_1$ on the first evaluation, and from the previous value of the whole $\repK$-expression on other evaluations.

\begin{example}
To illustrate rule \ruleNameSize{[E-REP]}, as well as computational rounds, we consider program \mbox{\texttt{rep(1)\{(x) => *(x, 2)\}}}.
The first firing of a device $\deviceId$  is performed against the empty tree environment. Therefore, according to rule  \ruleNameSize{[E-REP]}, 
to evaluate  \mbox{\texttt{rep(1)\{(x) => *(x, 2)\}}} means to evaluate the subexpression \mbox{\texttt{*(1, 2)}}, obtained from \mbox{\texttt{*(x, 2)}} 
by replacing \mbox{\texttt{x}} with \mbox{\texttt{1}}. This produces the  value-tree $\vtree=\mkvt{2}{1, \mkvt{2}{1,2}}$, 
where root $2$ is the overall result as usual,
 while its sub-trees are the result of evaluating the first and second argument respectively.
Any subsequent firing of the device $\deviceId$ is performed with respect to\ a tree environment $\Trees$ that associates to $\deviceId$ the outcome 
$\vtree$ of the most recent firing of $\deviceId$. 
Therefore, evaluating    \mbox{\texttt{rep(1)\{(x) => *(x, 2)\}}} at the second firing means evaluating the subexpression \mbox{\texttt{*(2, 2)}},
 obtained from \mbox{\texttt{*(x, 2)}} by replacing \mbox{\texttt{x}} with \mbox{\texttt{2}}, which is the root of $\vtree$.
 Hence the results of computation are $2$, $4$, $8$, and so on.
\end{example}

Rule \ruleNameSize{[E-NBR]} models device interaction. It first collects neighbours' values for expressions $\e$ as $\fvalue = \vrootOf{\piIof{1}{\Trees}}$, 
then evaluates $\e$ in $\deviceId$ and updates the corresponding entry in $\fvalue$.

\begin{example}\label{exa-NBR-device-semantics}
To illustrate rule \ruleNameSize{[E-NBR]}, consider $\e'=\texttt{minHood}(\nbrK\{\snsNumK()\})$, where the 1-ary built-in function $\minHoodK$ returns the lower limit of values in the range of its neighbouring value argument, and  the 0-ary built-in function $\snsNumK$ returns the numeric value measured by a sensor. Suppose that the program runs  on a network of three devices $\deviceId_A$,  $\deviceId_B$, and $\deviceId_C$ where:
 \begin{itemize}
 \item
  $\deviceId_B$ and  $\deviceId_A$ are mutually connected,  $\deviceId_B$ and $\deviceId_C$ are mutually connected, while  $\deviceId_A$ and  $\deviceId_C$ are not connected;
 \item
  $\snsNumK$ returns  \texttt{1} on $\deviceId_A$,  \texttt{2} on $\deviceId_B$, and \texttt{3} on $\deviceId_C$; and
  \item
   all devices have an initial empty tree-environment $\emptyset$.
  \end{itemize}
 Suppose that device $\deviceId_A$ is the first device that fires:
the evaluation of $\snsNumK()$ on $\deviceId_A$ yields $1$ (by rules  \ruleNameSize{[E-LOC]} and \ruleNameSize{[E-B-APP]}, since
 $\builtinop{\snsNumK}{\correction{\senstate}}{\deviceId_A}{\emptyset}()=1$); the evaluation of  $\nbrK\{\snsNumK()\}$ on $\deviceId_A$ yields $\mkvt{(\envmap{\deviceId_A}{1})}{\correction{1}}$ 
(by rule \ruleNameSize{[E-NBR]}, \correction{since no device has yet communicated with $\deviceId_A$}); and the evaluation of  $\e'$ on $\deviceId_A$ yields
\[
\begin{array}{l@{\quad}c@{\quad}l}
    \vtree_A & = & \mkvt{1}{\mkvt{(\envmap{\deviceId_A}{1})}{1}} 
\end{array}
\] 
(by rule \ruleNameSize{[E-B-APP]}, since $\builtinop{\minHoodK}{\correction{\senstate}}{\deviceId_A}{\emptyset}(\envmap{\deviceId_A}{1})=1$). Therefore, at its first \correction{firing}, device $\deviceId_A$ produces the value-tree $\vtree_A$.
 Similarly, if device $\deviceId_C$ is the second device that fires, it produces the value-tree
\[
\begin{array}{l@{\quad}c@{\quad}l}
        \vtree_C & = & \mkvt{3}{\mkvt{(\envmap{\deviceId_C}{3})}{3}}
\end{array}
\]
Suppose that device $\deviceId_B$ is the third device that  fires. Then the  evaluation of $\e'$ on $\deviceId_B$ is performed with respect to the
environment \mbox{$\Trees_{B} = (\envmap{\deviceId_A}{\vtree_A},\;\envmap{\deviceId_C}{\vtree_C})$} and the evaluation of its subexpressions  
$\nbrK\{\snsNumK()\}$  and $\snsNumK()$ is performed, respectively, with respect to the  following  value-tree environments  obtained from $\Trees_{B}$ by alignment:
\[
\begin{array}{l}
        \Trees'_{B}  \; = \;  \piIof{1}{\Trees_{B}} \; = \; 
(\envmap{\deviceId_A}{\mkvt{(\envmap{\deviceId_A}{1})}{1}},\;\;\envmap{\deviceId_C}{\mkvt{(\envmap{\deviceId_C}{3})}{3}})
        \\
        \Trees''_{B}     \; = \;  \piIof{1}{\Trees'_{B}}  \; = \;  
 (\envmap{\deviceId_A}{1},\;\;\envmap{\deviceId_C}{3})
\end{array}
\] 
We thus have that $\builtinop{\snsNumK}{\correction{\senstate}}{\deviceId_B}{\Trees''_B}()=2$; the evaluation of  $\nbrK\{\snsNumK()\}$ on $\deviceId_B$ with respect to\ $\Trees'_B$ produces the value-tree
$\mkvt{\phi}{2}$ where $\phi = (\envmap{\deviceId_A}{1},\envmap{\deviceId_B}{2},\envmap{\deviceId_C}{3})$; and $\builtinop{\minHoodK}{\correction{\senstate}}{\deviceId_B}{\Trees_B}(\phi)=1$. 
Therefore the  evaluation of $\e'$ on $\deviceId_B$ produces the value-tree $\vtree_B  = \mkvt{1}{\mkvt{\phi}{2}}$.
Note that, if the network topology and the values of the sensors will not change, then: any subsequent \correction{firing} of device $\deviceId_B$ will yield a value-tree with root $1$ (which is the minimum of $\snsNumK$ across $\deviceId_A$,  $\deviceId_B$ and $\deviceId_C$); any subsequent \correction{firing} of device $\deviceId_A$ will yield a value-tree with root $1$ (which is the minimum of $\snsNumK$ across $\deviceId_A$ and  $\deviceId_B$); and any subsequent \correction{firing} of device $\deviceId_C$ will yield a value-tree with root $2$ (which is the minimum of $\snsNumK$ across $\deviceId_B$ and  
$\deviceId_C$).
\end{example}

Rule \ruleNameSize{[E-IF]} is almost standard, except that it performs domain restriction $\piBof{\truevalue}{\Trees}$ (resp. $\piBof{\falsevalue}{\Trees}$) in order to guarantee that subexpression $\e_\truevalue$ is not matched against value-trees obtained from $\e_\falsevalue$ (and vice-versa).

%

\corrstart
\section{Network Semantics} \label{sec:TCNS}
\corrend

\correction{In \cite{Viroli:TOMACS_selfstabilisation}, the overall network evolution was described in terms of an interleaving network semantics (INS for short). Unfortunately, the INS is not able to model every possible message interaction describable by an augmented event structure.
Therefore, in this section  we  present a novel network semantics that overcomes this limitation.
Namely, in Section~\ref{sec:small-step} we present a  true concurrent network semantics (TCNS for short) and then, in  Section~\ref{ssec:opsem:prop},
we show that the TCNS is
\begin{enumerate}
	\item a conservative extension of the INS given in~\cite{Viroli:TOMACS_selfstabilisation}, and 
	\item models every possible message interaction describable by an augmented event structure.
\end{enumerate}
Because of (2) the TCNS is adequate for formalizing the relations between the $\shareK$ construct  and the combined use of the $\repK$ and $\nbrK$ constructs.}

\subsection{\correction{True Concurrent Network Semantics}}\label{sec:small-step}

The overall network evolution is formalized by the \correction{nondeterministic} small-step operational semantics given in Figure~\ref{fig:networkSemantics} as a transition system on network configurations $\Cfg$.
Figure \ref{fig:networkSemantics} (top) defines key syntactic elements to this end.
$\Field$ models the overall status of the devices in the network at a given time, as a map from device identifiers to value-tree environments. 
From it, we can define the state of the field at that time by summarizing the current values held by devices.
\correction{The \emph{activation predicate} $\Activation$  specifies whether each device is currently activated.
Then, $\Stat$ (a pair of status field and activation predicate) models overall device status.}
$\Topo$ models \emph{network topology}, namely, a directed neighbouring graph, as a map from device identifiers to set of identifiers (denoted as $I$).
$\Sens$ models \emph{sensor (distributed) state}, as a map from device identifiers to (local) sensors (i.e., sensor name/value maps denoted as $\senstate$).
Then, $\Envi$ (a couple of topology and sensor state) models the system's environment.
\correction{Finally,} a whole network configuration $\Cfg$ is a couple of a \correction{status} and environment.

\begin{figure}[!t]{
 \framebox[1\textwidth]{
 $\begin{array}{l}
 \textbf{System configurations and action labels:}\\
\begin{array}{lcl@{\hspace{44mm}}r}
\Field & \BNFcce &  \envmap{\overline\deviceId}{\overline\Trees}    &   {\footnotesize \mbox{status field}} \\
\correction{\Activation} & \correction{\BNFcce} &  \correction{\envmap{\overline\deviceId}{\overline a} \text{ with } a \in \{\actOFF,\actON\}}   &   \correction{\footnotesize \mbox{activation predicate}} \\
\correction{\Stat} & \correction{\BNFcce} &  \correction{\EnviS{\Field}{\Activation}}    &   \correction{\footnotesize \mbox{status}} \\
\Topo & \BNFcce &  \envmap{\overline\deviceId}{\overline\devset}    &   {\footnotesize \mbox{topology}} \\
\Sens & \BNFcce &  \envmap{\overline\deviceId}{\overline\senstate}    &   {\footnotesize \mbox{sensors-map}} \\
\Envi & \BNFcce &  \EnviS{\Topo}{\Sens}    &   {\footnotesize \mbox{environment}} \\
\Cfg & \BNFcce &  \SystS{\Envi}{\correction{\Stat}}    &   {\footnotesize \mbox{network configuration}} \\
\act & \BNFcce &  \correction{\deviceId+ \;\BNFmid\; \deviceId-} \;\BNFmid\; \envact    &   {\footnotesize \mbox{action label}} \\
\end{array}\\
\hline\\[-8pt]
\textbf{Environment well-formedness:}\\
\begin{array}{l}
\wfn{\EnviS{\Topo}{\Sens}} \textrm{~~holds iff {$\domof{\Topo}=\domof{\Sens}$ and $\Topo(\deviceId) \subseteq \domof{\Sens}$ for all $\deviceId \in \domof{\Sens}$.}}
\\
\end{array}\\
\hline\\[-8pt]
\textbf{Transition rules for network evolution:} \hspace{5.5cm}
  \boxed{\nettran{\Cfg}{\act}{\Cfg}}
  \\[0.7cm]
\begin{array}{c}
\correction{
\netopsemRule{N-COMP}{
	\;\; \Activation(\deviceId)\!=\!\actOFF  
	\quad \Trees' = \filter(\Field(\deviceId))
	\quad {\bsopsem{\deviceId}{\Trees'}{\Sens(\deviceId)}{\emain}{\vtree}}
	\quad \Trees\!=\!\mapupdate{\Trees'}{\envmap{\deviceId}{\vtree}}
}{
	\nettran{\SystS{\EnviS{\Topo}{\Sens}}{\EnviS{\Field}{\Activation}}}{\deviceId+}{\SystS{\EnviS{\Topo}{\Sens}}{\EnviS{\mapupdate{\Field}{\envmap{\deviceId}{\Trees}}}{\mapupdate{\Activation}{\envmap{\deviceId}{\actON}}}}}
}
}
\\[15pt]
\correction{
\netopsemRule{N-SEND}{
	\quad \Activation(\deviceId)\!=\!\actON       
	\quad \Topo(\deviceId)= \overline\deviceId 
	\quad \vtree = \Field(\deviceId)(\deviceId)
	\quad \Trees = \envmap{\deviceId}{\vtree}
}{
	\nettran{\SystS{\EnviS{\Topo}{\Sens}}{\EnviS{\Field}{\Activation}}}{\deviceId-}{\SystS{\EnviS{\Topo}{\Sens}}{\EnviS{\globalupdate{\Field}{\envmap{\overline\deviceId}{\Trees}}}{\mapupdate{\Activation}{\envmap{\deviceId}{\actOFF}}}}}
}
}
\\[15pt]
\netopsemRule{N-ENV}{
	\quad \wfn{\Envi'}
	\quad 
	\Envi'=\EnviS{\envmap{\overline\deviceId}{\overline\devset}}{\envmap{\overline\deviceId}{\overline\senstate}}
	 \quad
	\Field_0=\envmap{\overline\deviceId}{\emptyset} 
	\quad
	\correction{\Activation_0=\envmap{\overline\deviceId}{\actOFF}}
}{
	\nettran{\SystS{\Envi}{\Field,\Activation}}{\envact}{\SystS{\Envi'}{\mapupdate{\Field_0}{\Field}, \correction{\mapupdate{\Activation_0}{\Activation}}}}
}\\
\end{array}\\
\end{array}$}}
 \caption{Small-step operational \correction{true concurrent} semantics for network evolution.} 
  \label{fig:networkSemantics}
\end{figure}

We use the following notation for \correction{maps. Let $\envmap{\overline x}{y}$ denote a map sending each element in the sequence $\overline x$ to the same element $y$. Let $\mapupdate{m_0}{m_1}$ denote the map with domain $\domof{m_0} \cup \domof{m_1}$ coinciding with $m_1$ in the domain of $m_1$ and with $m_0$ otherwise. Let $\globalupdate{m_0}{m_1}$ (where $m_i$ are maps to maps) denote the map with the \emph{same domain} as $m_0$ made of $\envmap{x}{\mapupdate{m_0(x)}{m_1(x)}}$ for all $x$ in the domain of $m_1$, $\envmap{x}{m_0(x)}$ otherwise.}

We consider transitions $\nettran{\Cfg}{\act}{\Cfg'}$ of \correction{three kinds: firing starts on a given device (for which $\act$ is $\deviceId+$ where $\deviceId$ is the corresponding device identifier), firing ends and messages are sent on a given device (for which $\act$ is $\deviceId-$)}, and environment changes, where $\act$ is the special label $\envact$. This is formalized in Figure \ref{fig:networkSemantics} (bottom).
Rule \correction{\ruleNameSize{[N-COMP]}} \correction{(available for sleeping devices, i.e., with $\Activation(\deviceId) = \actOFF$, and setting them to executing, i.e., $\Activation(\deviceId) = \actON$)} models a computation round at device $\deviceId$: it takes the local value-tree environment filtered out of old values $\correction{\Trees' =} \filter(\Field(\deviceId))$;\footnote{Function $\filter(\correction{\Trees})$ in rule \ruleNameSize{[N-FIR]} models a filtering operation that clears out old stored values from the \correction{value-tree environment $\Trees$,} implicitly based on space/time tags. \correction{Notice that this mechanism allows messages to persist across rounds.}} then by the single
 device semantics it obtains the device's value-tree 
$\vtree$, which is used to update the system configuration of  $\deviceId$ \correction{to $\Trees = \mapupdate{\Trees'}{\envmap{\deviceId}{\vtree}}$.
It is worth observing that, although this rule updates a device's system configuration istantaneously, it models computations taking an arbitrarily long time, since the update is not visible until the following rule \ruleNameSize{[N-SEND]}. Notice also that all values used to compute $\vtree$ are  locally available
(at the beginning of the computation), thus allowing for a fully-distributed implementation without global knowledge.}

\corrstart
\begin{remark}[On termination of device firing]
\corrend
We shall assume that any device firing is guaranteed to terminate in any environmental condition. 
Termination of a device firing is clearly not decidable, but we shall assume that a decidable subset of the termination fragment can be identified 
(e.g., by ruling out recursive user-defined functions or by applying standard static analysis techniques for termination). \correction{It is worth noticing that this assumption does not impact the results of this paper, since the programs that are relevant are terminating (a device performing a firing that does not terminate
  would be equivalent on a global network perspective to a shut-down device).}
\end{remark}
\corrend

\correction{
Rule \ruleNameSize{[N-SEND]} (available for running devices with $\Activation(\deviceId) = \actON$, and setting them to non-running) models the message sending happening at the end of a computation round at a device $\deviceId$. It takes the local value-tree $\vtree = \Field(\deviceId)(\deviceId)$ computed by last rule \ruleNameSize{[N-COMP]}, and uses it to update neighbours' $\overline\deviceId$ values of $\Field(\overline\deviceId)$. Notice that the usage of $\Activation$ ensures that occurrences of rules \ruleNameSize{[N-COMP]} and \ruleNameSize{[N-SEND]} for a device are alternated.
}

Rule \ruleNameSize{[N-ENV]} takes into account the change of the environment to a new \emph{well-formed} environment $\Envi'$---environment well-formedness is specified by the predicate $\wfn{\Envi}$ in Figure~\ref{fig:networkSemantics} (middle)---\correction{thus modelling node mobility as well as changes in environmental parameters}. Let $\overline\deviceId$ be the domain of $\Envi'$. We first construct a status field $\Field_0$ \correction{and an activation predicate $\Activation_0$} associating to all the devices of $\Envi'$ the empty context $\emptyset$ \correction{and the $\actOFF$ activation.} Then, we adapt the existing status field $\Field$ \correction{and activation predicate $\Activation$} to the new set of devices: $\mapupdate{\Field_0}{\Field}$, \correction{$\mapupdate{\Activation_0}{\Activation}$} 
automatically handles removal of devices, \correction{mapping} of new devices to the empty context \correction{and $\actOFF$ activation,} and retention of existing contexts \correction{and activation} in the other devices. \correction{We remark that this rule is also used to model communication failure as topology changes.}

\begin{example}\label{exa:network-semantics}
Consider a network of devices with $\e'=\minHoodK(\nbrK\{\snsNumK()\})$ as introduced in Example~\ref{exa-NBR-device-semantics}. The network configuration illustrated at the beginning of Example~\ref{exa-NBR-device-semantics} can be generated by applying rule \ruleNameSize{[N-ENV]} to the empty network configuration. I.e., we have
\[
\nettran{\SystS{\EnviS{\emptyset}{\emptyset}}{\EnviS{\emptyset}{\correction{\emptyset}}}}{\envact}{\SystS{\Envi_0}{\correction{\Stat_0}}}
\]
where \correction{$\Envi_0=\EnviS{\Topo_0}{\Sens_0}$, $\Stat_0=\EnviS{\Field_0}{\Activation_0}$ and}
\begin{itemize}
\item
$\Topo_0=(\envmap{\deviceId_A}{\{\deviceId_B\}},\envmap{\deviceId_B}{\{\deviceId_A,\deviceId_C\}},\envmap{\deviceId_C}{\{\deviceId_B\}})$,
\item
$\Sens_0=(\envmap{\deviceId_A}{(\envmap{\snsNumK}{1})},\envmap{\deviceId_B}{(\envmap{\snsNumK}{2})},\envmap{\deviceId_C}{(\envmap{\snsNumK}{3})})$, and
\item
$\Field_0=(\envmap{\deviceId_A}{\emptyset},\envmap{\deviceId_B}{\emptyset},\envmap{\deviceId_C}{\emptyset})$,
\correction{
\item 
$\Activation_0 = (\envmap{\deviceId_A}{\actOFF},\envmap{\deviceId_B}{\actOFF},\envmap{\deviceId_C}{\actOFF})$.
}
\end{itemize}
Then, the three firings of devices $\deviceId_A$, $\deviceId_C$ and $\deviceId_B$ illustrated in Example~\ref{exa-NBR-device-semantics} correspond to the following transitions, respectively.
\begin{enumerate}
\item
	\correction{$
	\nettran{\SystS{\Envi_0}{\EnviS{\Field_0}{\Activation_0}}}{\deviceId_A+}{\SystS{\Envi_0}{\EnviS{\Field_1}{\Activation_A}}}
	$,
	where}
	\begin{itemize}
	\item
	$\vtree_A  =  \mkvt{1}{\mkvt{(\envmap{\deviceId_A}{1})}{1}}$;
	\correction{\item
	$\Field_1 = (\envmap{\deviceId_A}{(\envmap{\deviceId_A}{\vtree_A})},\;\envmap{\deviceId_B}{\emptyset},\;\envmap{\deviceId_C}{\emptyset})$;
	\item
	$\Activation_A = (\envmap{\deviceId_A}{\actON},\envmap{\deviceId_B}{\actOFF},\envmap{\deviceId_C}{\actOFF})$.
	}
	\end{itemize}
\item
	\correction{$
	\nettran{\SystS{\Envi_0}{\EnviS{\Field_1}{\Activation_A}}}{\deviceId_A-}{\SystS{\Envi_0}{\EnviS{\Field_2}{\Activation_0}}}
	$,
	where}
	\begin{itemize}
	\item
	$\correction{\Field_2} =(\envmap{\deviceId_A}{(\envmap{\deviceId_A}{\vtree_A})},\;\envmap{\deviceId_B}{(\envmap{\deviceId_A}{\vtree_A})},\;\envmap{\deviceId_C}{\emptyset})$.
	\end{itemize}
\item
	\correction{$
	\nettran{\SystS{\Envi_0}{\EnviS{\Field_2}{\Activation_0}}}{\deviceId_C+}{\SystS{\Envi_0}{\EnviS{\Field_3}{\Activation_C}}}
	$,
	where}
	\begin{itemize}
	\item
	$\vtree_C  =  \mkvt{1}{\mkvt{(\envmap{\deviceId_C}{3})}{3}}$; 
	\correction{\item
	$\Field_3 = (\envmap{\deviceId_A}{(\envmap{\deviceId_A}{\vtree_A})},\;\envmap{\deviceId_B}{(\envmap{\deviceId_A}{\vtree_A})},\;\envmap{\deviceId_C}{(\envmap{\deviceId_C}{\vtree_C})})$;
	\item
	$\Activation_C = (\envmap{\deviceId_A}{\actOFF},\envmap{\deviceId_B}{\actOFF},\envmap{\deviceId_C}{\actON})$.
	}
	\end{itemize}
\item
	\correction{$
	\nettran{\SystS{\Envi_0}{\EnviS{\Field_3}{\Activation_C}}}{\deviceId_C-}{\SystS{\Envi_0}{\EnviS{\Field_4}{\Activation_0}}}
	$,
	where}
	\begin{itemize}
	\item
	$\correction{\Field_4} =(\envmap{\deviceId_A}{(\envmap{\deviceId_A}{\vtree_A})},\;\envmap{\deviceId_B}{(\envmap{\deviceId_A}{\vtree_A},\envmap{\deviceId_C}{\vtree_C})},\;\envmap{\deviceId_C}{(\envmap{\deviceId_C}{\vtree_C})})$.
	\end{itemize}
\item
	\correction{$
	\nettran{\SystS{\Envi_0}{\EnviS{\Field_4}{\Activation_0}}}{\deviceId_B+}{\SystS{\Envi_0}{\EnviS{\Field_5}{\Activation_B}}}
	$,
	where}
	\begin{itemize}
	\item
	$\vtree_B  = \mkvt{1}{\mkvt{\phi}{2}}$ \correction{where} $\phi = (\envmap{\deviceId_A}{1},\envmap{\deviceId_B}{2},\envmap{\deviceId_C}{3})$;
	\correction{\item
	$\Field_5 = (\envmap{\deviceId_A}{(\envmap{\deviceId_A}{\vtree_A})},\;\envmap{\deviceId_B}{(\envmap{\deviceId_A}{\vtree_A},\envmap{\deviceId_B}{\vtree_B},\envmap{\deviceId_C}{\vtree_C})},\;\envmap{\deviceId_C}{(\envmap{\deviceId_C}{\vtree_C})})$;
	\item
	$\Activation_B = (\envmap{\deviceId_A}{\actOFF},\envmap{\deviceId_B}{\actON},\envmap{\deviceId_C}{\actOFF})$.
	}
	\end{itemize}
\item 
	\correction{$
	\nettran{\SystS{\Envi_0}{\EnviS{\Field_5}{\Activation_B}}}{\deviceId_B-}{\SystS{\Envi_0}{\EnviS{\Field_6}{\Activation_0}}}
	$,
	where}
	\begin{itemize}
	\item
	$\correction{\Field_6} =(\envmap{\deviceId_A}{(\envmap{\deviceId_A}{\vtree_A},\envmap{\deviceId_B}{\vtree_B})},$
	\\
	$\mbox{$\qquad\quad\!\!$}$ $\envmap{\deviceId_B}{(\envmap{\deviceId_A}{\vtree_A},\envmap{\deviceId_B}{\vtree_B},\envmap{\deviceId_C}{\vtree_C})},$
	\\
	$\mbox{$\qquad\quad\!\!$}$ $\envmap{\deviceId_C}{(\envmap{\deviceId_B}{\vtree_B},\envmap{\deviceId_C}{\vtree_C})})$,
	\end{itemize}
\end{enumerate}
\end{example}
\correction{Notice also that swapping the order of transitions $\deviceId_A-$ and $\deviceId_C+$ would not change the following results, only their intermediate step $\EnviS{\Field_2'}{\Activation'}$ where:
\begin{itemize}
	\item $\Field_2' = (\envmap{\deviceId_A}{(\envmap{\deviceId_A}{\vtree_A})},\;\envmap{\deviceId_B}{\emptyset},\;\envmap{\deviceId_C}{(\envmap{\deviceId_C}{\vtree_C})})$;
	\item $\Activation' = (\envmap{\deviceId_A}{\actON},\envmap{\deviceId_B}{\actOFF},\envmap{\deviceId_C}{\actON})$.
\end{itemize}
}

\subsection{\correction{Properties of the Network Semantics}} \label{ssec:opsem:prop}

\corrstart
The INS given in~\cite{Viroli:TOMACS_selfstabilisation} can be modeled by replacing the rules \ruleNameSize{[N-COMP]} and \ruleNameSize{[N-SEND]} 
of the TCNS
in Figure~\ref {fig:networkSemantics}
 by the following single rule \ruleNameSize{[N-FIR]} modelling an instantaneous round of computation (including both computing and sending messages):
\[
\netopsemRule{N-FIR}{
	\quad \Activation(\deviceId) = \actOFF
	\quad \Topo(\deviceId)= \overline\deviceId 
	\quad \Trees' = \filter(\Field(\deviceId))
	\quad {\bsopsem{\deviceId}{\Trees'}{\Sens(\deviceId)}{\emain}{\vtree}}
	\quad \Trees = \envmap{\deviceId}{\vtree}
}{
	\nettran{\SystS{\EnviS{\Topo}{\Sens}}{\EnviS{\Field}{\Activation}}}{\deviceId}{\SystS{\EnviS{\Topo}{\Sens}}{\EnviS{\globalupdate{\mapupdate{\Field}{\envmap{\deviceId}{\Trees'}}}{\envmap{\overline\deviceId}{\Trees}}}{\Activation}}}
}
\]
and by considering only network statuses $\SystS{\Envi}{\EnviS{\Field}{\Activation}}$ 
where $\Activation = \envmap{\overline\deviceId}{\actOFF}$.\footnote{\correction{Actually, in the INS rules given in~\cite{Viroli:TOMACS_selfstabilisation}
there is no  activation predicate $\Activation$.}} Notice that this restriction is consistent since rules \ruleNameSize{[N-FIR]} and \ruleNameSize{[N-ENV]} both preserve the condition $\Activation = \envmap{\overline\deviceId}{\actOFF}$.

The TCNS  is a conservative extension of the INS, extending it to model non-instantaneous rounds of computations by splitting the \emph{computation} and \emph{sending} parts. This is formally stated by the following theorem.

\begin{thm}[TCNS is a conservative extension of INS]
	Let $\Cfg = \SystS{\Envi}{\Field,\Activation}$ be a TCNS network configuration such that $\Activation(\deviceId) = \actOFF$. Then a sequence of $\deviceId+$ and $\deviceId-$ transitions $\Cfg \nettran{}{\deviceId+}{} \Cfg'  \nettran{}{\deviceId-}{} \Cfg''$ (rules \ruleNameSize{[N-COMP]}, \ruleNameSize{[N-SEND]}) leads to the same configuration $\Cfg''$ as the single $\deviceId$ transition $\Cfg \nettran{}{\deviceId}{} \Cfg''$ (rule \ruleNameSize{[N-FIR]}).
	
	Thus, any INS system evolution $\Cfg_1 \nettran{}{\act}{} \ldots \nettran{}{\act}{} \Cfg_n$ corresponds to an analogous TCNS system evolution where each $\deviceId$ transition is replaced by a pair of $\deviceId+$ and $\deviceId-$ transitions.
\end{thm}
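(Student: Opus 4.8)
The plan is to establish the single-step claim by unfolding the three rules directly and checking that both routes from $\Cfg$ to $\Cfg''$ yield syntactically the same network configuration; the displayed consequence then follows by a routine induction on the length of the INS run. A preliminary observation simplifies everything: rules \ruleNameSize{[N-COMP]}, \ruleNameSize{[N-SEND]} and \ruleNameSize{[N-FIR]} all leave the environment component $\Envi = \EnviS{\Topo}{\Sens}$ untouched, so it is enough to track how the status $\EnviS{\Field}{\Activation}$ evolves. Moreover, the filtered environment $\Trees' = \filter(\Field(\deviceId))$ and the big-step premise $\bsopsem{\deviceId}{\Trees'}{\Sens(\deviceId)}{\emain}{\vtree}$ are verbatim the same in \ruleNameSize{[N-COMP]} and in \ruleNameSize{[N-FIR]} (same device, filtered environment, sensor state and main expression), so we may use the same value-tree $\vtree$ in both derivations; with genuinely nondeterministic built-ins this amounts to choosing matching instantiations of $\builtinop{\bname}{\Sens(\deviceId)}{\deviceId}{\Trees}$, rather than to determinism of $\Downarrow$.

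For the activation predicate the accounting is immediate: starting from $\Activation(\deviceId) = \actOFF$, rule \ruleNameSize{[N-COMP]} is enabled and flips it to $\actON$, which is exactly the side condition enabling \ruleNameSize{[N-SEND]}, and \ruleNameSize{[N-SEND]} flips it back to $\actOFF$; since $\Activation(\deviceId) = \actOFF$ by hypothesis, the composite $\mapupdate{\mapupdate{\Activation}{\envmap{\deviceId}{\actON}}}{\envmap{\deviceId}{\actOFF}}$ equals $\Activation$, matching the unchanged activation predicate of \ruleNameSize{[N-FIR]}. This also records the fact, needed for the corollary, that a $\deviceId+$ transition can only be followed, as far as $\deviceId$ is concerned, by a $\deviceId-$ transition.

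For the status field, write $\overline\deviceId = \Topo(\deviceId)$. Rule \ruleNameSize{[N-COMP]} produces $\Field_1 = \mapupdate{\Field}{\envmap{\deviceId}{\mapupdate{\Trees'}{\envmap{\deviceId}{\vtree}}}}$; then in \ruleNameSize{[N-SEND]} one reads back $\Field_1(\deviceId)(\deviceId) = \vtree$ (the same value-tree, since the preceding update stored it there) and obtains $\Field_2 = \globalupdate{\Field_1}{\envmap{\overline\deviceId}{(\envmap{\deviceId}{\vtree})}}$. Rule \ruleNameSize{[N-FIR]} instead produces $\Field' = \globalupdate{\mapupdate{\Field}{\envmap{\deviceId}{\Trees'}}}{\envmap{\overline\deviceId}{(\envmap{\deviceId}{\vtree})}}$. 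I would prove $\Field_2 = \Field'$ by a case split on a device $\deviceId'$, using the standing assumption that the neighbouring topology $\Topo$ is reflexive, so $\deviceId \in \overline\deviceId$: (i) if $\deviceId' = \deviceId$, both sides reduce to $\mapupdate{\Trees'}{\envmap{\deviceId}{\vtree}}$, because the outer $\globalupdate{\cdot}{\cdot}$ re-inserts $\envmap{\deviceId}{\vtree}$ on top of an entry that is already $\mapupdate{\Trees'}{\envmap{\deviceId}{\vtree}}$ on the left and is $\Trees'$ on the right, and $\mapupdate{\cdot}{\envmap{\deviceId}{\vtree}}$ is idempotent in that position; (ii) if $\deviceId' \in \overline\deviceId$ and $\deviceId' \neq \deviceId$, neither \ruleNameSize{[N-COMP]} nor the $\envmap{\deviceId}{\Trees'}$ step of \ruleNameSize{[N-FIR]} has altered entry $\deviceId'$, so both sides reduce to $\mapupdate{\Field(\deviceId')}{\envmap{\deviceId}{\vtree}}$; (iii) if $\deviceId' \notin \overline\deviceId$, then $\deviceId' \neq \deviceId$ and both sides reduce to $\Field(\deviceId')$. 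Hence $\Cfg''$ is the same on both routes.

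The consequence is then proved by induction on the length of an INS evolution $\Cfg_1 \nettran{}{\act}{} \cdots \nettran{}{\act}{} \Cfg_n$: an $\envact$-step is kept verbatim (rule \ruleNameSize{[N-ENV]} belongs to both semantics and preserves $\Activation \equiv \actOFF$), while a $\deviceId$-step is replaced by the pair $\Cfg_i \nettran{}{\deviceId+}{} \Cfg' \nettran{}{\deviceId-}{} \Cfg_{i+1}$, which by the first part lands on the same $\Cfg_{i+1}$, the intermediate $\Cfg'$ being a legitimate transient TCNS configuration. I do not anticipate a conceptual obstacle: the whole proof is bookkeeping with the two update operators $\mapupdate{\cdot}{\cdot}$ and $\globalupdate{\cdot}{\cdot}$ in cases (i)--(ii), and the one point worth stating explicitly is the reflexivity of $\Topo$, which is precisely what makes \ruleNameSize{[N-FIR]} store device $\deviceId$'s own freshly computed value just as \ruleNameSize{[N-COMP]} does by hand --- case (i) above would fail without it.
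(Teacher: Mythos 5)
Your proposal is correct and takes essentially the same route as the paper's proof: both unfold \ruleNameSize{[N-COMP]}, \ruleNameSize{[N-SEND]} and \ruleNameSize{[N-FIR]} and check that the resulting activation predicate and status field coincide, the only difference being that you verify $\Field''=\Field'$ by a pointwise case split on $\deviceId'$ where the paper uses an equational chain on the update operators. It is worth noting that the reflexivity hypothesis $\deviceId \in \overline\deviceId$ that you invoke explicitly (and correctly identify as essential for case (i)) is also silently required by the paper's step $\globalupdate{\globalupdate{\mapupdate{\Field}{\envmap{\deviceId}{\Trees'}}}{\envmap{\deviceId}{\Trees}}}{\envmap{\overline\deviceId}{\Trees}} = \globalupdate{\mapupdate{\Field}{\envmap{\deviceId}{\Trees'}}}{\envmap{\overline\deviceId}{\Trees}}$, even though $\wfn{\cdot}$ does not enforce $\deviceId\in\Topo(\deviceId)$ and Example~\ref{exa:network-semantics} uses a non-reflexive $\Topo$ --- so this is a gap in the paper's setup rather than in your argument.
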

\begin{proof}
	Assume that $\Envi = \EnviS{\Topo}{\Sens}$ and $\Topo(\deviceId) = \overline\deviceId$. Furthermore, suppose that $\Trees' = \filter(\Field(\deviceId))$,  $\bsopsem{\deviceId}{\Trees'}{\Sens(\deviceId)}{\emain}{\vtree}$, $\Trees = \envmap{\deviceId}{\vtree}$ and $\Trees'' = \mapupdate{\Trees'}{\Trees}$.
	
	Then by rule \ruleNameSize{[N-COMP]}, $\Cfg' = \SystS{\Envi}{\EnviS{\Field'}{\Activation'}}$ where
	$\Field' = \mapupdate{\Field}{\envmap{\deviceId}{\Trees''}} = \globalupdate{\mapupdate{\Field}{\envmap{\deviceId}{\Trees'}}}{\envmap{\deviceId}{\Trees}}$ and $\Activation' = \mapupdate{\Activation}{\envmap{\deviceId}{\actON}}$.
	Finally, by rule \ruleNameSize{[N-SEND]}, $\Cfg'' = \SystS{\Envi}{\EnviS{\Field''}{\Activation''}}$ where:
	\begin{itemize}
		\item 
		$\Field'' = \globalupdate{\Field'}{\envmap{\overline\deviceId}{\Trees}} = \globalupdate{\globalupdate{\mapupdate{\Field}{\envmap{\deviceId}{\Trees'}}}{\envmap{\deviceId}{\Trees}}}{\envmap{\overline\deviceId}{\Trees}} = \globalupdate{\mapupdate{\Field}{\envmap{\deviceId}{\Trees'}}}{\envmap{\overline\deviceId}{\Trees}}$
		\item 
		$\Activation'' = \mapupdate{\Activation'}{\envmap{\deviceId}{\actOFF}} = \mapupdate{\mapupdate{\Activation}{\envmap{\deviceId}{\actON}}}{\envmap{\deviceId}{\actOFF}} = \Activation$.
	\end{itemize}
	Thus, $\Cfg''$ is the same as in the conclusion of rule \ruleNameSize{[N-FIR]}.
\end{proof}

Notice that every (TCNS or INS) system evolution implies an underlying augmented event structure (c.f.~Definition \ref{def:augmentedES}) describing its message passing details, as per the following definition.

\begin{definition}[Space-Time Value Underlying a System Evolution] \label{def:ESfromSE}
	Let $\System = \Cfg_0 \nettran{}{\act_1}{} \ldots \nettran{}{\act_n}{} \Cfg_n$ with $\Cfg_0 = \SystS{\emptyset,\emptyset}{\emptyset,\emptyset}$ be any system evolution. We say that:
	\begin{itemize}
		\item $\deviceS = \bp{\deviceId \mid ~ \exists i. ~ \act_i = \deviceId+ ~\vee~ \act_i = \deviceId-}$ are the device identifiers appearing in $\System$;
		\item $C^\deviceId = \ap{i \le n \mid ~ \act_i = \deviceId+}$ are the indexes of transitions applying rule \ruleNameSize{[N-COMP]};
		\item $S^\deviceId = \ap{i \le n \mid ~ \act_i = \deviceId-}$ are the indexes of transitions applying rule \ruleNameSize{[N-SEND]};
		
		\item $\eventS = \bp{\ap{\deviceId, i} \mid ~ \deviceId \in \deviceS ~\wedge~ 1 \le i \le \vp{C^\deviceId}}$ is the set of events in $\System$;
		\item $\devof : \eventS \to \deviceS$ maps each event $\eventId = \ap{\deviceId, i}$ to the device $\deviceId$ where it is happening;
		\item $\eventId_1 \neigh \eventId_2$ where $\eventId_k = \ap{\deviceId_k, i_k}$ and $j_1 = S^{\deviceId_1}_{i_1}$, $j_2 = C^{\deviceId_2}_{i_2}$ if and only if:
		\begin{itemize}
			\item $\Cfg_{j_1}$ has topology $\Topo$ such that $\deviceId_2 \in \Topo(\deviceId_1)$ (the message from $\eventId_1$ reaches $\deviceId_2$),
			\item there is no $j' \in (j_1; j_2)$ with $j' \in S^{\deviceId_1}$ and $\Cfg_{j'}$ with topology $\Topo$ such that $\deviceId_2 \in \Topo(\deviceId_1)$ (there are no more recent messages from $\deviceId_1$ to $\eventId_2$),
			\item for every $j' \in (j_1; j_2]$ with $j' \in S^{\deviceId_2}$ and $\Cfg_{j'}$ with status field $\Field$, then $\deviceId_1 \in \domof{\Field(\deviceId_2)}$ (the message from $\eventId_1$ to $\deviceId_2$ is not filtered out as obsolete before $\eventId_2$);
		\end{itemize}
		\item $<$ is the transitive closure of $\neigh$;
		\item $f : \eventS \to \setVS$ is such that $f(\ap{\deviceId, i}) = \vrootOf{\Field(\deviceId)(\deviceId)}$ where $\Cfg_{C^\deviceId_i} = \SystS{\Envi}{\Field,\Activation}$.
	\end{itemize}
	Then we say that $\System$ \emph{follows} $\aEventS = \ap{\eventS, \neigh, <, \devof}$, and $\dvalue = \ap{\aEventS, f}$ is the \emph{space-time value underlying to} $\System$.
\end{definition}

Notice that the $\aEventS$ and $\dvalue$ defined above are unique given $\System$.
Furthermore, as stated by the following theorem, the TCNS is sufficiently expressive to model every possible message interaction describable by an augmented event structure.

\begin{thm}[TCNS completeness] \label{thm:tcns:completeness}
	Let $\aEventS = \ap{\eventS, \neigh, <, \devof}$ be an augmented event structure. Then there exist (infinitely many) system evolutions following $\aEventS$.
\end{thm}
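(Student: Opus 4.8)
The plan is to build, directly from the data $\aEventS = \ap{\eventS,\neigh,<,\devof}$, one system evolution $\System$ starting from $\Cfg_0 = \SystS{\emptyset,\emptyset}{\emptyset,\emptyset}$ that follows $\aEventS$ in the sense of Definition~\ref{def:ESfromSE}, and then to get infinitely many others for free. The first remark is that ``following $\aEventS$'' constrains only the quadruple $\ap{\eventS,\neigh,<,\devof}$ induced by $\System$, never the value function $f$; hence the program, the sensor maps, and the value-trees actually produced are irrelevant, and I would fix any terminating program and keep all sensor maps empty. The real content is then: in \emph{which order} should the devices perform their rounds (i.e.\ in which order should the \ruleNameSize{[N-COMP]} and \ruleNameSize{[N-SEND]} transitions fire), and \emph{how} should the topology $\Topo$ be reshaped by interleaved \ruleNameSize{[N-ENV]} transitions so that every broadcast is delivered to exactly the intended recipient events and survives in their contexts for exactly the intended rounds.

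\emph{Scheduling.} For every $\eventId \in \eventS$ I would introduce two slots, a computation slot $\eventId^{+}$ (a future \ruleNameSize{[N-COMP]}) and a sending slot $\eventId^{-}$ (a future \ruleNameSize{[N-SEND]}), and let $\prec$ be the transitive closure of the relation generated by: $\eventId^{+} \to \eventId^{-}$ always; $\eventId^{-} \to \eventId'^{+}$ whenever $\eventId \neigh \eventId'$ (a message is sent before it is received, and since $\eventId \neigh \nextev(\eventId)$ this forces device timelines); and $\eventId^{+} \to \eventId'^{-}$ whenever $\eventId \tneigh \eventId'$ (by the discussion after Definition~\ref{def:augmentedES}, if $\eventId'$ reached $\nextev(\eventId)$ but not $\eventId$ then the broadcast of $\eventId'$ started after $\eventId$ did). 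I would then check $\prec$ is a strict partial order: deleting the trivial $\eventId^{+}\to\eventId^{-}$ links and collapsing maximal runs of $\neigh$-steps into single $<$-steps (legitimate since $\neigh \subseteq <$), a $\prec$-cycle becomes a cycle of the shape $\eventId_1 < \eventId_2 \tneigh \eventId_3 < \cdots \tneigh \eventId_1$, which the \emph{immediacy} constraint forbids (and a cycle using no $\tneigh$-step would be a $<$-cycle, impossible). Since the local-finiteness clauses built into Definitions~\ref{def:eventstructure} and \ref{def:augmentedES} make $\prec$ well-founded with finite principal down-sets, it linearly extends; fix such an extension $u_1, u_2, u_3, \ldots$.

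\emph{Construction.} I would process $u_1, u_2, \ldots$ in order, emitting at each step the corresponding \ruleNameSize{[N-COMP]} or \ruleNameSize{[N-SEND]} transition, and inserting \ruleNameSize{[N-ENV]} transitions whenever the topology must change. A device $\deviceId$ is introduced (by an \ruleNameSize{[N-ENV]} step adding it with empty context) just before the first slot mentioning it, the slots mentioning $\deviceId$ being all $\eventId^{\pm}$ with $\devof(\eventId)=\deviceId$ together with every $\eventId'^{-}$ such that $\eventId'\neigh\eventId$ for some $\eventId$ on $\deviceId$, so that $\deviceId$ already exists when a message destined for it is broadcast. Just before a sending slot $\eventId^{-}$ I would use an \ruleNameSize{[N-ENV]} to set $\Topo(\devof(\eventId))$ to the set $\{\devof(\eventId') \mid \eventId \neigh \eventId'\}$ of recipient devices, fire \ruleNameSize{[N-SEND]}, and afterwards toggle the relevant out-edges of $\devof(\eventId)$ off and on, using the $\filter$ step inside \ruleNameSize{[N-COMP]} to drop an incoming value precisely at the round after the last one it is meant to reach. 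By \emph{impersistence} that set of reached rounds is, on each recipient device, a contiguous block, which is what makes such a toggle-and-filter schedule realisable; by \emph{linearity} the events of $\deviceId$ form a single chain of rounds consistent with the order of $\deviceId$'s \ruleNameSize{[N-COMP]}/\ruleNameSize{[N-SEND]} pairs; and by \emph{uniqueness} the events $\eventId'$ with $\eventId'\neigh\eventId$ lie on pairwise distinct devices, so $\eventId$'s context is unambiguous. Every \ruleNameSize{[N-COMP]} fires because device firing is assumed to terminate in any environment.

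\emph{Verification, multiplicity, and the hard part.} By construction the $i$-th \ruleNameSize{[N-COMP]} of $\deviceId$ is the event $\ap{\deviceId,i}$ of Definition~\ref{def:ESfromSE}, so the induced event set, $\devof$, and hence $<$ coincide with those of $\aEventS$; and the three clauses there defining $\neigh$ (edge present at the send, no fresher send in between, value not filtered before the recipient's round) hold for a pair of slots exactly when I arranged them to, i.e.\ exactly when $\eventId\neigh\eventId'$ in $\aEventS$, so $\System$ follows $\aEventS$. For ``infinitely many'', I would prepend to $\System$ any number of transitions $\nettran{\SystS{\emptyset,\emptyset}{\emptyset,\emptyset}}{\envact}{\SystS{\emptyset,\emptyset}{\emptyset,\emptyset}}$ (well-formed, since the empty environment satisfies $\wfn{\cdot}$): these add no events and leave the induced $\aEventS$ unchanged. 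I expect the main obstacle to be the construction step: realising, with only topology changes and the (space/time-tagged) filter $\filter$, the \emph{exact} delivery pattern, each broadcast reaching precisely its prescribed recipient events and persisting in each recipient's context over precisely the prescribed contiguous block, with no spurious early delivery and no leftover stale value. This forces one to schedule all toggle and filter actions simultaneously and check their joint consistency: \emph{impersistence} supplies the contiguity that makes each individual pattern expressible, while \emph{immediacy}, encoded into $\prec$ through the $\tneigh$ relation, is exactly what guarantees the global schedule contains no cyclic ``must-precede'' obligations and is therefore executable.
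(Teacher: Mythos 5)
Your proposal is correct and follows essentially the same route as the paper's proof: the same splitting of each event into a computation slot and a send slot, the same three-part precedence relation (including the $\tneigh$-induced edges), the same acyclicity argument from the \emph{immediacy} constraint, and the same linearise-then-emit construction with interleaved \ruleNameSize{[N-ENV]} transitions to shape the topology and exploit the filter. The only cosmetic differences are that you make the roles of \emph{linearity}, \emph{uniqueness} and \emph{impersistence} in the delivery bookkeeping explicit, and you obtain ``infinitely many'' by padding with identity $\envact$ transitions rather than by citing the multiplicity of linearisations, both of which are fine.
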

\begin{proof}
	See Appendix \ref{apx:proofs:tcns}.
\end{proof}

Notice as well that this expressiveness is not the case for INS. For example, no INS system evolution can follow this augmented event structure:
\begin{center}
	\includegraphics[width=40mm]{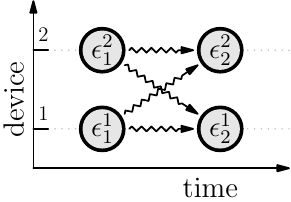}
\end{center}
In fact, the transitions corresponding to $\eventId^i_1$ would need to have $\Topo(\deviceId_i) = \bp{\deviceId_1, \deviceId_2}$, since both events reach both devices. Then if w.l.o.g. the transition corresponding to $\eventId^1_1$ happens before the one corresponding $\eventId^2_1$, since $\eventId^1_1 \neigh \eventId^2_1$ does not hold, the transition corresponding to $\eventId^2_1$ must filter out the message coming from $\deviceId_1$. If follows that $\eventId^1_1$ does not reach $\eventId^2_2$ as well, a contradiction.
\corrend

\section{The Share Construct} \label{sec:construct}

\correction{Section~\ref{sec:motivation} explains and illustrates the problematic interaction between time evolution and neighbour interaction constructs.
Section~\ref{sec:beyond} then shows how the {\tt share} construct overcomes this problematic interaction.} 
Section \ref{ssec:sharesemantics} presents the 
operational semantics of the $\shareK$ construct. Section \ref{ssec:rewritings} introduces automatic rewritings of $\repK$ constructs into $\shareK$ constructs: two preserving the behavior, thus showing that $\shareK$ has the expressive power to substitute most usages of $\repK$ and $\nbrK$ in programs; and one changing the behavior (in fact, improving it in many cases). Section \ref{ssec:sharespeed} demonstrates the automatic behavior improvement for the example in Section \ref{sec:motivation}, while estimating the general communication speed improvement induced by the rewriting. Section \ref{ssec:limitations} shows examples for which the rewriting fails to preserve the intended behavior, and Section \ref{ssec:selfstab} concludes by showing that behavior is preserved for the relevant subset of field calculus pinpointed in \cite{Viroli:TOMACS_selfstabilisation}.

\subsection{Problematic Interaction between $\repK$ and $\nbrK$ Constructs} \label{sec:motivation}

Unfortunately, the apparently straight-forward combination of state evolution with $\nbrK$ and state sharing with $\repK$ turns out to contain a hidden delay, which was identified and explained in~\cite{a:fcuniversality}.
This problem may be illustrated by attempting to construct a simple function that spreads information from an event as quickly as possible.
Let us say there is a Boolean space-time value \lstinline|condition|, and we wish to compute a space-time function \lstinline|ever| that returns true precisely at events where \lstinline|condition| is true and in the causal future of those events---i.e., spreading out at the maximum theoretical speed throughout the network of devices.
One might expect this could be implemented as follows in field calculus:
\begin{lstlisting}[]
def ever1(condition) {
  rep (false) { (old) => anyHoodPlusSelf(nbr{old}) || condition }
}
\end{lstlisting}
where \lstinline|anyHoodPlusSelf| is a built-in function that returns true if any value is true in its neighbouring value input (including the value \lstinline|old| held for the current device).
Walking through the evaluation of this function, however, reveals that there is a hidden delay.
In each round, the \lstinline|old| variable is updated, and will become true if either \lstinline|condition| is true now for the current device or if \lstinline|old| was true in the previous round for the current device or for any of its neighbours. Once \lstinline|old| becomes true, it stays true for the rest of the computation.
Notice, however, that a neighbouring device does not actually learn that \lstinline|condition| is true, but that \lstinline|old| is true. 
In an event where \lstinline|condition| first becomes true, the value of \lstinline|old| that is shared is still false, since the $\repK$ does not update its value until after the $\nbrK$ has already been evaluated.
Only in the next round do neighbours see an updated value of \lstinline|old|, meaning that \lstinline|ever1| is not spreading information fast enough to be a correct implementation of \lstinline|ever|.

We might try to improve this routine by directly sharing the value of \lstinline|condition|:
\begin{lstlisting}[]
def ever2(condition) {
  rep (false) { (old) => anyHoodPlusSelf(nbr{old || condition}) }
}
\end{lstlisting}

This solves the problem for immediate neighbours, but does not solve the problem for neighbours of neighbours, which still have to wait an additional round before \lstinline|old| is updated \correction{(see Example \ref{exa:share} for a sample execution of these functions, showcasing how some devices realise that \lstinline|condition| has become true with a delay).}

In fact, \correction{in order to avoid delays, communication cannot use $\repK$ but only $\nbrK$. Since a single $\nbrK$ can only reach values in immediate neighbours, in order to reach values in the arbitrary past of a device, it is necessary to use an arbitrary number of nested $\nbrK$ statements (each of them contributing to the total message size exchanged). This can be achieved} by using unbounded recursion, as previously outlined in \cite{a:fcuniversality}:
\begin{lstlisting}[]
def ever3(condition) {
  let new = anyHoodPlusSelf(nbr{condition}) in
  if (countHood() == 0) { new } { ever3(new) }
}
\end{lstlisting}
where \lstinline|countHood| counts the number of neighbours, i.e., determining whether any neighbour has reached the same depth of recursion in the branch.
Thus, in \lstinline|ever3|, neighbours' values of \lstinline|condition| are fed to a nested call to \lstinline|ever3| (if there are any); and this process is iterated until no more values to be considered are present. 
This function therefore has a recursion depth equal to the longest sequence of events $\eventId_0 \neigh \ldots \neigh \eventId$ ending in the current event $\eventId$, inducing a linearly increasing computational time and message size and making the routine effectively infeasible for long-running systems. 

This case study illustrates the more general problem of delays induced by the interaction of $\repK$ and $\nbrK$ constructs in field calculus, as identified in~\cite{a:fcuniversality}.
With these constructs, it is never possible to build computations involving long-range communication that are as fast as possible and also lightweight in the amount of communication required.

\subsection{\correction{Beyond  $\repK$ and  $\nbrK$}} \label{sec:beyond}

In order to overcome the problematic interaction between $\repK$ and $\nbrK$, we propose a new construct that combines aspects of both:
\[
\shareK (\e_1) \{ (\xname) \toSymK{} \e_2 \}
\]
\correction{where: $\e_1$ is the initial local expression; $\xname$ is the state variable, holding a neighbouring value; $\e_2$ is an aggregation expression, taking $\xname$ and producing a local value; and the whole expected result is a local value.
Informally,  at each firing, $\shareK$ works in the following way:
\begin{enumerate}
\item
it constructs a neighbouring value $\fvalue$ with the outcomes of its evaluation in neighbouring events (cf.~Def.~\ref{def:nevents})---namely, $\fvalue$  maps the local device to the result of this $\shareK$ at the previous round (or, if absent, to $\e_1$ as with $\repK$), and the neighbouring devices to the latest available result of this $\shareK$ (involving communication of values as with $\nbrK$); and 
\item
it evaluates the aggregation expression $\e_2$ by using $\fvalue$ as the value of $\xname$  to obtain a local result, which is both sent to neighbours (for their future rounds) and kept locally (for the next local firing).
\end{enumerate}}
\correction{So, although the syntactic structure of the} $\shareK$ construct is identical to that of $\repK$, the two constructs differ in the way the construct variable $\xname$ is interpreted at each \correction{firing:}
\begin{itemize}
	\item in $\repK$, the value of $\xname$ is the local value produced by evaluating the construct in the previous round, or the result of evaluating $\e_1$ if there is no prior-round value;
	\item in $\shareK$, \correction{instead, $\xname$ is a \emph{neighbouring value} comprising that same value for the current device plus the values of the construct produced by neighbours in their most recent evaluation (thus $\shareK$ incorporates communication as well).}
\end{itemize}
\correction{Moreover, in $\shareK$,   $\e_2$ is responsible for \emph{aggregating} the neighbouring value $\xname$ into a local value that is shared with neighbours at the end of the evaluation (thus $\shareK$ incorporates aggregation as well).}

\correction{As illustrated by the following example, using the $\shareK$ construct   allows to overcome the problematic interaction between $\repK$ and $\nbrK$ (see Section~\ref{sec:motivation}).}

\corrstart
\begin{example}[Share Construct] \label{exa:share}
	Consider the body of function \lstinline|ever|:
\begin{lstlisting}[]
def ever(condition) {
  share (false) { (old) => anyHoodPlusSelf(old) || condition }
}
\end{lstlisting}
	Assume this program is run on a network of 5 devices, executing rounds according to the following augmented event structure (\lstinline|condition| input values are on the left, output of the \lstinline|ever| function is on the right):
	\begin{center}
		\raisebox{-0.5\height}{\includegraphics[width=0.4 \linewidth]{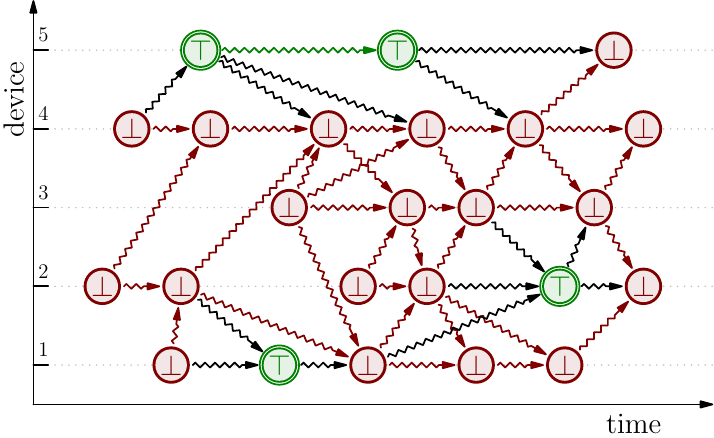}} {\LARGE$\mapsto$} \raisebox{-0.5\height}{\includegraphics[width=0.4 \linewidth]{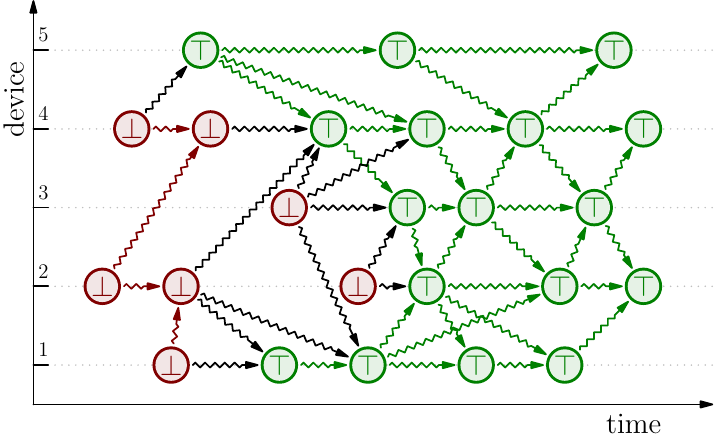}}
	\end{center}
	At the first round of any device $\deviceId$, no messages has been received yet, thus the \lstinline|share| construct is evaluated by substituting \lstinline|old| with the neighbouring value $\envmap{\deviceId}{\bot}$. It follows that \lstinline|anyHoodPlusSelf(old)| is false, hence the result of the whole construct is equal to \lstinline|condition| (which is true only for $\deviceId = 5$). After the computation is complete, the result of the \lstinline|share| construct is sent to neighbours.
	
	At the second round of device $4$, the only message received is a \lstinline|false| from device $2$ (and another \lstinline|false| persisting from device $4$ itself), thus the overall result is still false. At the third round of device $4$, a \lstinline|true| message from device $5$ joins the pool, switching the overall result to \lstinline|true|. In following rounds, there is always a \lstinline|true| message persisting from device $4$ itself, so the result stays true. Similar reasoning can be applied to the other devices.
	
	Notice that the outputs of the \lstinline|ever1| (left) and \lstinline|ever2| (right) functions, from Section~\ref{sec:motivation}, would instead be:
	\begin{center}
		\raisebox{-0.5\height}{\includegraphics[width=0.4 \linewidth]{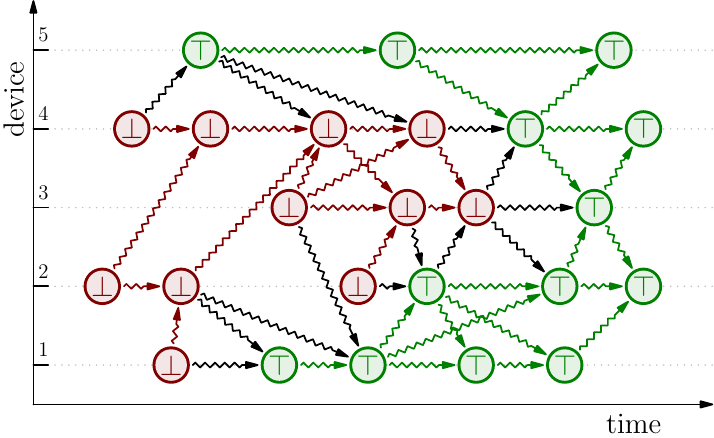}} {\LARGE$~$} \raisebox{-0.5\height}{\includegraphics[width=0.4 \linewidth]{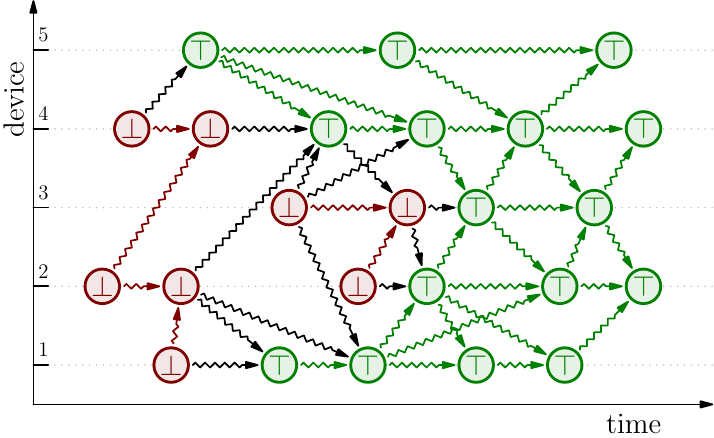}}
	\end{center}
	In \lstinline|ever1|, devices 3 and 4 converge to $\top$ with two rounds of delay; while in \lstinline|ever2| device 3 converges to $\top$ with one round of delay.
	\correctionB{Function \lstinline|ever3|, instead, behaves exactly as \lstinline|ever|, although requiring unbounded recursion depth (hence greater computational complexity in every round).}
\end{example}
\corrend

\subsection{Operational Semantics} \label{ssec:sharesemantics}

\begin{figure}[t]{
\framebox[1\textwidth]{$\begin{array}{l}
\textbf{Auxiliary functions:} \\
\begin{array}{l}
	\begin{array}{l@{\hspace{0.4cm}}l}
		\fvalue_0[\fvalue_1] = \fvalue_2 \; \text{ where } \fvalue_2(\deviceId) = \left\lbrace \begin{array}{ll}
			\fvalue_1(\deviceId) & \text{if } \deviceId \in \domof{\fvalue_1} \\
			\fvalue_0(\deviceId) & \text{otherwise}
		\end{array} \right. \\
	\end{array} \\[9pt]
\end{array}\\
\hline\\[-10pt]
\textbf{Rule 
for 
expression evaluation:} \\
\begin{array}{c}
\surfaceTyping{E-SHARE}{ \qquad
	\begin{array}{l@{\hspace{0.5em}}l}
     \bsopsem{\deviceId}{\piIof{1}{\Trees}}{{\senstate}}{\e_1}{\vtree_1} & \fvalue' = \vrootOf{\piIof{2}{\Trees}} 
      \qquad \qquad \fvalue = (\envmap{\deviceId}{\vrootOf{\vtree_1}})[\fvalue']
     \\
     \bsopsem{\deviceId}{\piIof{2}{\Trees}}{{\senstate}}{\applySubstitution{\e_2}{\substitution{\xname}{\fvalue}}}{\vtree_2} 
	\end{array}
	\!\!\!\!
 }{
	\bsopsem{\deviceId}{\Trees}{\senstate}{\shareK(\e_1)\{(\xname) \; \toSymK \; \e_2\}}{\mkvt{\vrootOf{\vtree_{2}}}{\vtree_1,\vtree_2}}
}
\end{array}
\end{array}$}}
 \caption{Operational semantics for the $\shareK$ construct.} \label{fig:shareSemantics}
\end{figure}

Formal 
operational semantics for the $\shareK$ construct is presented in Figure \ref{fig:shareSemantics} (bottom frame), as an extension to the semantics given in 
Section~\ref{sec:big-step}.
%
%
The evaluation rule is based on the auxiliary functions given in Figure \ref{fig:shareSemantics} (top frame), plus the auxiliary functions in 
Figure~\ref{fig:deviceSemantics} (second frame).
In particular, we use the notation $\fvalue_0[\fvalue_1]$ to represent ``field update'', so that its result $\fvalue_2$ has $\domof{\fvalue_2} = \domof{\fvalue_0} \cup \domof{\fvalue_1}$ and coincides with $\fvalue_1$ on its domain, or with $\fvalue_0$ otherwise.

The evaluation rule \ruleNameSize{[E-SHARE]} produces a value-tree with two branches (for $\e_1$ and $\e_2$ respectively). First, it evaluates $\e_1$ with respect to the corresponding branches of neighbours $\piIof{1}{\Trees}$ obtaining $\vtree_1$. Then, it collects the results for the construct from neighbours into the neighbouring value $\fvalue' = \vrootOf{\piIof{2}{\Trees}}$. In case $\fvalue'$ does not have an entry for $\deviceId$, $\vrootOf{\vtree_1}$ is used obtaining $\fvalue = (\envmap{\deviceId}{\vrootOf{\vtree_1}})[\fvalue']$. Finally, $\fvalue$ is substituted for $\xname$ in the evaluation of $\e_2$ (with respect to the corresponding branches of neighbours $\piIof{2}{\Trees}$) obtaining $\vtree_2$, setting $\vrootOf{\vtree_2}$ to be the overall value.

\begin{example}[Operational Semantics]
	Consider the body of function \lstinline|ever|:
\begin{lstlisting}[]
def ever(condition) {
  share (false) { (old) => anyHoodPlusSelf(old) || condition }
}
\end{lstlisting}
	Suppose that device $\deviceId = 0$ first executes a round of computation without neighbours (i.e., $\Trees$ is empty), and with \lstinline|condition| equal to $\falsevalue$. The evaluation of the $\shareK$ construct proceeds by evaluating $\falsevalue$ into $\vtree_1 = \mkvt{\falsevalue}{}$, gathering neighbour values into $\fvalue' = \emptyseq$ (no values are present), and adding the value for the current device obtaining $\fvalue = (\envmap{0}{\falsevalue})[\emptyseq] = \envmap{0}{\falsevalue}$. Finally, the evaluation completes by storing in $\vtree_2$ the result of $\anyHood(\envmap{0}{\falsevalue}) \texttt{||} \falsevalue$ (which is $\mkvt{\falsevalue}{\ldots}$\footnote{We omit the part of the value tree that are produced by semantic rules not included in this paper, and refer to\cite[Electronic Appendix]{Viroli:TOMACS_selfstabilisation} for the missing parts.}). At the end of the round, device $0$ sends a broadcast message containing the result of its overall evaluation, and thus including $\vtree^0 = \mkvt{\falsevalue}{\falsevalue, \mkvt{\falsevalue}{\ldots}}$.
	
	Suppose now that device $\deviceId = 1$ receives the broadcast message and then executes a round of computation where \lstinline|condition| is $\truevalue$. The evaluation of the $\shareK$ constructs starts similarly as before with $\vtree_1 = \mkvt{\falsevalue}{}$, $\fvalue' = \envmap{0}{\falsevalue}$, $\fvalue = \envmap{0}{\falsevalue}, \envmap{1}{\falsevalue}$. Then the body of the $\shareK$ is evaluated as $\anyHood(\envmap{0}{\falsevalue}, \envmap{1}{\falsevalue}) \texttt{||} \truevalue$ into $\vtree_2$, which is $\mkvt{\truevalue}{\ldots}$. At the end of the round, device $1$ broadcasts the result of its overall evaluation, including $\vtree^1 = \mkvt{\truevalue}{\falsevalue, \mkvt{\truevalue}{\ldots}}$.
	
	Then, suppose that device $\deviceId = 0$ receives the broadcast from device $1$ and then performs another round of computation with \lstinline|condition| equal to $\falsevalue$. As before, $\vtree_1 = \mkvt{\falsevalue}{}$, $\fvalue = \fvalue' = \envmap{0}{\falsevalue}, \envmap{1}{\truevalue}$ and the body is evaluated as $\anyHood(\envmap{0}{\falsevalue}, \envmap{1}{\truevalue}) \texttt{||} \falsevalue$ which produces $\mkvt{\truevalue}{\ldots}$ for an overall result of $\vtree^2 = \mkvt{\truevalue}{\falsevalue, \mkvt{\truevalue}{\ldots}}$.
	
	Finally, suppose that device $\deviceId = 1$ does not receive that broadcast and discards $0$ from its list of \correction{neighbours} before performing another round of computation with \lstinline|condition| equal to $\falsevalue$. Then, $\vtree_1 = \mkvt{\falsevalue}{}$, $\fvalue' = \envmap{1}{\truevalue}$, $\fvalue = (\envmap{1}{\falsevalue})[\envmap{1}{\truevalue}] = \envmap{1}{\truevalue}$, and the body is evaluated as $\anyHood(\envmap{1}{\truevalue}) \texttt{||} \falsevalue$ which produces $\mkvt{\truevalue}{\ldots}$. 
\end{example}

\subsection{Automatic Rewritings of $\repK$ Constructs into $\shareK$ Constructs} \label{ssec:rewritings}

The $\shareK$ construct can be automatically incorporated into programs using $\repK$ and $\nbrK$ in \correction{a} few ways. 
First, we may want to rewrite a program while maintaining the behavior unchanged, thus showing that the expressive power of $\shareK$ is enough to replace 
other constructs to some extent. In particular, we can fully replace the $\repK$ construct through the following rewriting, 
expressed through the notation $\applySubstitution{\e}{\e_1:=\e'_1,\ldots,\e_n:=\e'_n}$ representing an expression $\e$ in which
the distinct subexpressions $\e_1, \ldots, \e_n$ have been simultaneously replaced by the corresponding expressions $\e'_1, \ldots, \e'_n$---if
$\e_i$ is a subexpression of $\e_j$ (for some $i\not= j$) then the occurrences $\e_j$ are  replaced by $\e'_j$.

\begin{rewriting}[$\repK$-elimination] \label{re:repcut}
	\[
	\repK (\e_1) 
      \{ (\xname) \toSymK{} \e_2 \} \longrightarrow \shareK (\e_1) \{ (\xname) \toSymK{} \applySubstitution{\e_2}{\xname:=\localK(\xname)}\}
	\]
	where $\localK$ is a built-in operator that given a neighbouring value $\fvalue$ returns the local value $\fvalue(\deviceId)$ for the current device.
\end{rewriting}

\begin{thm}
	Rewriting \ref{re:repcut} preserves the program behavior.
\end{thm}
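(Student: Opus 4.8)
The plan is to prove the stronger claim that Rewriting~\ref{re:repcut}, applied uniformly to a whole program $\PROGRAM$, preserves at every event of every (augmented) event structure the \emph{root value} computed for every subexpression --- in particular for the main expression --- so that the space-time value output by the program is unchanged. Since the rewriting is a purely local, compositional transformation of expressions and function bodies, write $\mathcal R$ for its recursive closure over all subexpressions and declarations of $\PROGRAM$. I would extend $\mathcal R$ to value-trees and value-tree environments: $\mathcal R(\vtree)$ is the value-tree produced by $\mathcal R(\PROGRAM)$ at the same event that produced $\vtree$ under $\PROGRAM$. Note that $\mathcal R$ changes the \emph{shape} of a value-tree --- each leaf $\mkvt{\lvalue_0}{}$ arising from an occurrence of a $\repK$-bound variable $\xname$ inside a body $\e_2$ becomes the small subtree produced by evaluating $\localK(\fvalue)$, namely $\mkvt{\lvalue_0}{\mkvt{\fvalue''}{}}$ with $\fvalue''=\proj{\fvalue}{\domof{\Trees}\cup\{\deviceId\}}$ --- but it preserves the root and the top-level arity of every value-tree. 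The main technical step is a lemma proved by induction on the derivation of the device judgement $\bsopsem{\deviceId}{\Trees}{\senstate}{\e}{\vtree}$: if $\Trees$ is the $\mathcal R$-image of an environment admissible for $\PROGRAM$, then $\bsopsem{\deviceId}{\Trees}{\senstate}{\mathcal R(\e)}{\mathcal R(\vtree)}$ holds under $\mathcal R(\PROGRAM)$, and $\vrootOf{\mathcal R(\vtree)}=\vrootOf{\vtree}$.

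All cases except $\ruleNameSize{[E-REP]}$ are immediate: $\mathcal R$ commutes with value substitution into run-time expressions, and --- because it preserves roots and top-level arities --- it commutes with the alignment operators in the sense that $\piIof{i}{\mathcal R(\Trees)}$ collects exactly the $\mathcal R$-images of the entries collected by $\piIof{i}{\Trees}$, and $\piBof{\lvalue}{\mathcal R(\Trees)}$ selects the same branch as $\piBof{\lvalue}{\Trees}$ (since $\vrootOf{\mathcal R(\vtree_1)}=\vrootOf{\vtree_1}$). Hence the inductive hypotheses for the subexpressions line up with the premises of the corresponding rule for $\mathcal R(\e)$.

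The heart of the argument is the $\ruleNameSize{[E-REP]}$ case, where $\mathcal R(\repK(\e_1)\{(\xname)\toSymK{}\e_2\}) = \shareK(\mathcal R(\e_1))\{(\xname)\toSymK{}\applySubstitution{\mathcal R(\e_2)}{\xname:=\localK(\xname)}\}$ is evaluated by $\ruleNameSize{[E-SHARE]}$. I claim the value substituted for each occurrence of $\xname$ in the body coincides in the two rules. In $\ruleNameSize{[E-SHARE]}$ the state variable is bound to $\fvalue = (\envmap{\deviceId}{\vrootOf{\vtree_1}})[\fvalue']$ with $\fvalue'=\vrootOf{\piIof{2}{\Trees}}$, so evaluating $\localK(\xname)$ yields $\fvalue(\deviceId)$ --- the $\ruleNameSize{[E-FLD]}$ domain restriction is harmless because $\deviceId$ always lies in the restricted domain, and $\deviceId\in\domof{\fvalue}$ by construction. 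Thus the substituted value is $\fvalue(\deviceId)=\fvalue'(\deviceId)$ if $\deviceId\in\domof{\fvalue'}$ and $\vrootOf{\vtree_1}$ otherwise, which matches $\lvalue_0$ of $\ruleNameSize{[E-REP]}$ once one checks: (i) $\domof{\fvalue'}=\domof{\piIof{2}{\Trees}}=\domof{\Trees}$, using that every $\repK$/$\shareK$ value-tree has a second subtree and that $\mathcal R$ preserves environment domains; and (ii) $\fvalue'(\deviceId)=\vrootOf{\piIof{2}{\Trees}(\deviceId)}=\vrootOf{\mathcal R(\vtree_2^{\mathrm{prev}})}=\vrootOf{\vtree_2^{\mathrm{prev}}}$, where $\vtree_2^{\mathrm{prev}}$ is the second subtree stored for $\deviceId$ and the middle equality is the induction hypothesis for the previous round at $\deviceId$; the same quantity equals $\vrootOf{\piIof{2}{\Trees_0}}(\deviceId)$, i.e.\ the $\ruleNameSize{[E-REP]}$ value, relative to the un-rewritten environment $\Trees_0$. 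Since $\e_1$ is evaluated against $\piIof{1}{\cdot}$ in both rules, the first branches agree up to $\mathcal R$, and the resulting value-trees $\mkvt{\vrootOf{\vtree_2}}{\vtree_1,\vtree_2}$ agree in their root, the right one being $\mathcal R$ of the left by construction. (Standard hygiene makes the substitution $\applySubstitution{\e_2}{\xname:=\localK(\xname)}$ capture-free.)

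Finally I would lift this to the network semantics, and hence to space-time values: rules $\ruleNameSize{[N-COMP]}$, $\ruleNameSize{[N-SEND]}$, $\ruleNameSize{[N-ENV]}$ (and $\ruleNameSize{[N-FIR]}$) treat value-trees opaquely apart from passing them through $\filter$ and the map-update operators, so a straightforward induction on transition sequences shows that $\mathcal R(\PROGRAM)$ produces at each event the $\mathcal R$-image of the value-tree produced by $\PROGRAM$; taking roots gives an identical output field. The one genuinely delicate point, where I would spend the most care, is items (i)--(ii) above: making precise that $\mathcal R$ preserves the \emph{domain} of aligned environments and the \emph{root at the second subtree} of their entries is exactly what ties the $\shareK$ state value to the $\repK$ state value, and it relies on transporting the (already standard) well-formedness of $\repK$ value-trees --- second subtree always present --- along the rewriting; the shape change introduced by the extra $\localK$ nodes must also be checked not to disturb alignment of any $\nbrK$ inside $\e_2$, which is exactly why the rewriting has to be applied uniformly across all devices.
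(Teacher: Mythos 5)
Your proof is correct and takes essentially the same route as the paper's: the paper's entire proof consists of the single observation at the heart of your \ruleNameSize{[E-REP]}/\ruleNameSize{[E-SHARE]} case, namely that the value $\fvalue(\deviceId)$ in the neighbouring value $\fvalue$ bound by $\shareK$ (and extracted by $\localK$) coincides exactly with the value $\lvalue_0$ that $\repK$ substitutes for $\xname$. The surrounding apparatus you develop --- the recursive closure $\mathcal{R}$, the induction on derivations, the checks on alignment and value-tree shape, and the lift to the network semantics --- is all left implicit in the paper's one-sentence argument.
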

\begin{proof}
	Correctness follows since the value $\fvalue(\deviceId)$ in the neighbouring value $\fvalue$ substituted for $\xname$ in the $\shareK$ construct corresponds exactly to the value that is substituted for $\xname$ in the corresponding $\repK$ construct.
\end{proof}

In addition to eliminating $\repK$ occurrences, the $\shareK$ construct is able to factor out many common usages of the $\nbrK$ construct as well (even though not all of them), as per the following equivalent rewriting. For ease of presentation, we extend the syntax of share to handling multiple input-output values: $\shareK (\e_1,\e_2) \{ (\xname_1,\xname_2) \toSymK{} \e'_1,\e'_2 \}$, to be interpreted as a shorthand for a single-argument construct where the multiple input-output values have been gathered into a tuple (unpacking them before computing $\e'_1,\e'_2$ and then packing their result).

\begin{rewriting}[$\nbrK$-elimination] \label{re:nbrcut}
	\begin{align*}
	\repK (\e_1) \{ (\xname) \toSymK{}\e_2 \} & \longrightarrow \\
	\fstK(\shareK (\e_1,\e_1) \{ &(\xname,\yname) \toSymK{} \\
	&\applySubstitution{\e_2}{\correction{\xname}:=\localK(\xname), \nbrK\{\xname\}:=\localChange(\yname,\localK(\xname))}, \\
	&\localK(\xname) \\
	\})
	\end{align*}
	where $\yname$ is \correction{a} fresh variable and $\localChange(\fvalue, \lvalue)$ updates the value of $\fvalue$ for the current device $\deviceId$ with $\lvalue$, returning $\applySubstitution{\fvalue}{\envmap{\deviceId}{\lvalue}}$.
\end{rewriting}

\begin{thm}
	Rewriting \ref{re:nbrcut} preserves the program behavior.
\end{thm}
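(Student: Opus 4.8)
The plan is to reduce the claim to an equality of the underlying space-time values and prove that by well-founded induction on the causal order of events. Fix an arbitrary system evolution $\System'$ of the rewritten program and let $\System$ be the system evolution of the original program that replays the \emph{same} sequence of action labels; such a $\System$ exists because the applicability of rules \ruleNameSize{[N-COMP]}, \ruleNameSize{[N-SEND]} and \ruleNameSize{[N-ENV]} depends only on topology, sensor state and (for \ruleNameSize{[N-COMP]}) on termination of device firing, which is assumed, and never on the program text. By Definition~\ref{def:ESfromSE}, $\System$ and $\System'$ follow the \emph{same} augmented event structure $\aEventS = \ap{\eventS,\neigh,<,\devof}$, since $\neigh$ is fixed by the topology and the filtering operation alone. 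Hence it suffices to prove that the space-time values underlying $\System$ and $\System'$ coincide, i.e.~that at every event the two runs produce value-trees with the same root. I would prove the stronger statement that the value-trees agree at every subexpression occurrence lying outside the rewritten construct, together with an invariant at the construct itself, by well-founded induction on $<$ (well-founded by local finiteness of $\aEventS$).

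Write $v_\eventId$ and $w_\eventId$ for the first and second components of the result that the rewritten run produces for the $\shareK$ construct at event $\eventId$. The invariant at the construct is: \emph{(a)} the value bound to $\xname$ by rule \ruleNameSize{[E-REP]} in the original run equals $\localK(\xname)$ (the local entry of the first-component neighbouring value) in the rewritten run, and both equal $v_{\eventId'}$ for $\eventId'$ the preceding round of $\devof(\eventId)$ --- or the value of $\e_1$ if $\eventId$ is a first round --- which is precisely the content of Rewriting~\ref{re:repcut}, already established; \emph{(b)} $w_\eventId = \localK(\xname) = v_{\eventId'}$ (resp.~the value of $\e_1$), because the rewritten $\shareK$ body outputs $\localK(\xname)$ as its second component --- this delay-by-one is the crux; and \emph{(c)} for every replaced occurrence of $\nbrK\{\xname\}$, the neighbouring value built by \ruleNameSize{[E-NBR]} in the original run equals $\localChange(\yname,\localK(\xname))$ in the rewritten run. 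For \emph{(c)}: the local entries agree by \emph{(a)} (\ruleNameSize{[E-NBR]} overwrites with the freshly evaluated $\xname$, $\localChange$ overwrites with $\localK(\xname)$); and for a neighbour $\deviceId'$ whose most recent round heard by $\eventId$ is $\eventId''$ (the same event in both runs, being $\aEventS$-determined), the entry is, on the original side, $\deviceId'$'s value of $\xname$ at $\eventId''$, i.e.~$v_{\eventId'''}$ for $\eventId'''$ the round preceding $\eventId''$ (using the induction hypothesis at $\eventId''' < \eventId$), and on the rewritten side $\yname(\deviceId') = w_{\eventId''} = v_{\eventId'''}$ by \emph{(b)} at $\eventId'' < \eventId$. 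With \emph{(a)}--\emph{(c)} in hand, a structural induction on the expression evaluated at $\eventId$ closes the step: outside the rewritten node the two programs are syntactically identical and have equal inputs (leaves, outputs of smaller subexpressions, or --- for $\repK$, $\nbrK$, $\shareK$ nested in $\e_2$ --- neighbouring values determined by strictly earlier events, to which the induction hypothesis applies); at the $\repK$ node, the only subexpressions whose value the rewriting changes, namely $\xname$ and $\nbrK\{\xname\}$, are replaced by $\localK(\xname)$ and $\localChange(\yname,\localK(\xname))$, which by \emph{(a)} and \emph{(c)} evaluate to the same values, while any other $\nbrK$ inside $\e_2$ only has its inner free $\xname$ turned into $\localK(\xname)$, again value-preserving by \emph{(a)}; the enclosing $\fstK$ then returns $v_\eventId$ unchanged.

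The step I expect to require the most care is \emph{(c)}, specifically alignment and the \emph{domains} of the neighbouring values. The rewriting changes value-tree shapes (a $\shareK$ node plus tuple packing/unpacking in place of a $\repK$ node, built-in calls in place of $\nbrK$ nodes), so one must verify that the auxiliary function $\pi$ still routes each stored neighbour value-tree to the semantically corresponding subtree in both runs; this follows because the rewriting is purely local --- it touches only the $\repK$ node and its two body expressions --- and which messages reach a given event is fixed by $\aEventS$ independently of the program, the value-tree shapes mattering only through self-consistency within a single run. The genuinely delicate point is that the $\shareK$ replacing the $\repK$ sits \emph{above} any $\ifK$ occurring inside $\e_2$, so the neighbouring value $\yname$ carries the full neighbour domain, whereas an occurrence of $\nbrK\{\xname\}$ syntactically guarded by such an internal $\ifK$ would, in the original run, be domain-restricted to the devices that took the same branch. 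Establishing \emph{(c)} therefore needs the side condition (which should be stated) that the replaced occurrences of $\nbrK\{\xname\}$ are not so guarded --- true of the common usages the rewriting targets, where $\nbrK\{\xname\}$ occurs directly or only under function calls and $\muxK$, as in \lstinline|distanceTo|, \lstinline|ever1| and \lstinline|ever2| --- so that the two domains coincide. The remaining care items are routine: the tuple shorthand expands into auxiliary value-tree nodes that do not affect roots, and the $\piB{\lvalue}$-based alignment of any $\ifK$ inside $\e_2$ is itself $\aEventS$-determined and hence identical in both runs.
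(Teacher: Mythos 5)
Your proof is correct and follows essentially the same route as the paper's: an induction establishing that the first and second components of the $\shareK$ translation track the current and previous $\repK$ results respectively (your invariants \emph{(a)}--\emph{(c)} are a formalised version of the paper's informal round-by-round induction, lifted to well-founded induction on the underlying event structure). The one place you go beyond the paper is the observation that an occurrence of $\nbrK\{\xname\}$ guarded by an $\ifK$ inside $\e_2$ would be domain-restricted by alignment in the original run but not in the rewritten one (since $\yname$ is bound above the branch); the paper's proof silently assumes these domains coincide, so your proposed side condition is a genuine and needed refinement rather than a divergence in method.
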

\begin{proof}
	We prove by induction that the two components of the $\shareK$ translation correspond to the $\repK$ current and previous results (respectively, 
using $\e_1$ if no such previous value is available). On initial rounds of evaluation, the $\shareK$ construct evaluates 
to $\applySubstitution{\e_2}{\xname:=\e_1, \nbrK\{\xname\}:=\nbrK\{\e_1\}}, \e_1$ (by substituting $\xname$, $\yname$ by $\e_1$), as the $\repK$ construct. 
On other rounds, the second component of $\shareK$ is $\localK(\xname)$, which is the previous result of the first component of $\shareK$, 
which is the previous result of the $\repK$ construct by inductive hypothesis. Furthermore, the first component of $\shareK$ is $\e_2$ with 
arguments $\localK(\xname)$ (again, the previous result of the $\repK$ construct) and $\localChange(\yname, \localK(\xname))$, 
which is the neighbours' values for the second argument together with the previous value of the $\repK$ construct for the current device. 
On the other hand, $\nbrK\{\xname\}$ is the neighbours' values for the old value of the rep construct, together with the local previous value of the rep construct. 
By inductive hypothesis, the two things coincide, concluding the proof.
\end{proof}

However, a more interesting rewriting is the following \emph{non-equivalent} one, which for many algorithms is able to automatically improve the communication speed while preserving the overall meaning.

\begin{rewriting}[non-equivalent] \label{re:noneq}
\[
\repK (\e_1) \{ (\xname) \toSymK{} \e_2 \} \longrightarrow 
\shareK (\e_1) \{ (\xname) \toSymK{} \applySubstitution{\e_2}{\xname:=\localK(\xname),\nbrK\{\xname\}:=\xname} \}
\]
\end{rewriting}

In particular, we shall see in Section \ref{ssec:sharespeed} how this rewriting translates the inefficient \lstinline|ever1| routine into a program equivalent to \lstinline|ever3|, and in Section \ref{ssec:selfstab} that this rewriting preserves the eventual behavior of a whole fragment of field calculus programs, while improving its efficiency. In particular, the improvement in communication speed can be estimated to be at least three-fold (see Section \ref{ssec:sharespeed}). Unfortunately, programs may exist for which this translation fails to preserve the intended meaning (see Section \ref{ssec:limitations}). This usually happens for time-based algorithms where the one-round delay is incorporated into the logic of the algorithm, or weakly characterised functions which may need reduced responsiveness for allowing results to stabilise. Thus, better performing alternatives using $\shareK$ may still exist after the program logic has been accordingly fixed.

\subsection{The $\shareK$ Construct Improves Communication Speed} \label{ssec:sharespeed}

To illustrate how $\shareK$ solves the problem illustrated in Section~\ref{sec:motivation}, let us once again consider the
\lstinline|ever|  function discussed in that section, for propagating when a \lstinline|condition| Boolean has ever become true.
By applying Rewriting \ref{re:noneq} to the \lstinline|ever1| function introduced in Section~\ref{sec:motivation}  we obtain exactly the  \lstinline|ever| function introduced in 
Section~\ref{ssec:sharesemantics}:
\begin{lstlisting}[]
def ever(condition) {
  share (false) { (old) => anyHoodPlusSelf(old) || condition }
}
\end{lstlisting}
Function \lstinline|ever| is simultaneously \emph{(i)} compact and readable, even more so than \lstinline|ever1| and \lstinline|ever2| (note that we no longer need to include the $\nbrK$ construct);
\emph{(ii)} lightweight, as it involves the communication of a single Boolean value each round and few operations; and
\emph{(iii)} optimally efficient in communication speed, since it is true for any event $\eventId$ with a causal predecessor $\eventId' \leq \eventId$ where \lstinline|condition| was true. 
In particular
\begin{itemize}
	\item in such an event $\eventId'$ the overall $\shareK$ construct is true,
    since it goes to
    \begin{center}
    \lstinline/anyHoodPlusSelf(old) || true/
    \end{center}
    regardless of the values in \lstinline|old|;
	\item in any subsequent event $\eventId''$ (i.e.~$\eventId' \neigh \eventId''$) the $\shareK$ construct is true since the field value \lstinline|old| contains a true value (the one coming from $\eventId'$), and
	\item the same holds for further following events $\eventId$ by inductive arguments.
\end{itemize}
In field calculus without $\shareK$, such optimal communication speed can be achieved only through unbounded recursion, as argued in \cite{a:fcuniversality} and reviewed above in Section~\ref{sec:motivation}.

As a further example of successful application of Rewriting \ref{re:noneq}, consider the following routine where \lstinline|maxHoodPlusSelf| 
is a built-in function returning the maximum value in the range of a numeric neighbouring value.
\begin{lstlisting}[]
def sharedcounter1() {
   rep (0) { (old) => max(maxHoodPlusSelf(nbr{old}), rep(0){(c)=>c+1}) }
}
\end{lstlisting}
This function computes a local counter through \lstinline|rep(0)\{(c)=>c+1\}| and then uses it 
to compute the maximum number of rounds a device in the network has performed (even though information 
about the number of rounds for other devices propagates at reduced speed). If we rewrite this function by eliminating the first \lstinline|rep| through Rewriting \ref{re:noneq}, 
we obtain:
\begin{lstlisting}[]
def sharedcounter2() {
  share (0) { (old) => max(maxHoodPlusSelf(old), rep(0){(c)=>c+1}) }
}
\end{lstlisting}
where information about the number of rounds for other devices is propagated to neighbours at the full multi-path speed allowed by $\shareK$.
It is worth observing that eliminating the remaining \lstinline|rep| by further applying Rewriting \ref{re:noneq} would produce the same result of applying Rewriting 1, i.e:
\begin{lstlisting}[]
def sharedcounter() {
  share (0) { (old) => max(maxHoodPlusSelf(old), share(0){(c)=>localHood(c)+1}) }
}
\end{lstlisting}
and therefore would not affect the information propagation speed.

The average improvement in communication speed of a routine being converted from the usage of $\repK + \nbrK$ to $\shareK$ according to 
Rewriting \ref{re:noneq} can also be statistically estimated, depending on the communication pattern used by the routine.

An algorithm follows a \emph{single-path} communication pattern if its outcome in an event depends essentially on the value of a single selected neighbour: prototypical examples of such algorithms are distance estimations \cite{audrito:bisgradient,a:bisjournal,audrito2017ULT}, which are computed out of the value of the single neighbour on the optimal path to the source. In this case, letting $T$ be the average interval between subsequent rounds, the expected communication delay of an hop is $T/2$ with $\shareK$ (since it can randomly vary from $0$ to $T$) and $T/2 + T = 3/2 T$ with $\repK+\nbrK$ (since a full additional round $T$ is wasted in this case). Thus, the usage of $\shareK$ allows for an expected three-fold improvement in communication speed for these algorithms.

An algorithm follows a \emph{multi-path} communication pattern if its outcome in an event is obtained from the values of all neighbours: prototypical examples of such algorithms are data collections \cite{a:weightedmultipath}, especially when they are idempotent (e.g.~minimums or maximums). In this case, the existence of a single communication path $\eventId_0 \neigh \ldots \neigh \eventId$ is sufficient for the value in $\eventId_0$ to be taken into account in $\eventId$. Even though the delay of any one of such paths follows the same distribution as for single-path algorithms ($0$ to $T$ per step with $\shareK$, $T$ to $2T$ per step with $\repK+\nbrK$), the overall delay is \emph{minimized} among each existing path. It follows that for sufficiently large numbers of paths, the delay is closer to the minimum of a single hop ($0$ with $\shareK$, $T$ with $\repK+\nbrK$) resulting in an even larger improvement.

\subsection{Limitations of the Automatic Rewriting} \label{ssec:limitations}

In the previous section, we showed how the non-equivalent rewriting of $\repK+\nbrK$ statements into $\shareK$ statements is able to improve 
the performance of algorithms, both in the specific case of the \lstinline|ever| and \lstinline|sharedcounter|  functions,
 and statistically for the communication speed of general algorithms. 
However, this procedure may \emph{not} work for all functions: for example, consider the following routine
\begin{lstlisting}[]
def fragilesharedcounter() {
  rep (0) { (old) => maxHoodPlusSelf(nbr{old})+1 }
}
\end{lstlisting}
that, if the scheduling of computation rounds is sufficiently regular across the network, is able to approximate the maximum 
number of rounds a device in the network has performed (even though information 
about the number of rounds for other devices propagates at reduced speed). If we rewrite this function through Rewriting \ref{re:noneq}, we obtain:
\begin{lstlisting}[]
def fragilesharedcounter1() {
  share (0) { (old) => maxHoodPlusSelf(old)+1 }
}
\end{lstlisting}
which does \emph{not} approximate the same quantity.
 Instead, it computes the maximum length of a path of messages reaching the current event,
 which may be unboundedly higher than round counts in case of dense networks.

In fact, the fragile shared counter function is a paradigmatic example of rewriting failure:
 it is a time-based function, whose results are strongly altered by removing the one-round wait generated by $\repK+\nbrK$.
Another class of programs for which the rewriting fails is that of functions with weakly defined behavior, 
usually based on heuristics, for which the increase in responsiveness may increase the fluctuations in results 
(or even prevent stabilisation to a meaningful value). 

\subsection{The $\shareK$ Construct Preserves Self-stabilisation} \label{ssec:selfstab}

In this section, we prove that the automatic rewriting is able to improve an important class of functions with strongly defined behavior: the \emph{self-stabilising fragment} of field calculus identified in \cite{Viroli:TOMACS_selfstabilisation}. Functions complying to the syntactic and semantic restrictions imposed by this fragment are guaranteed to be \emph{self-stabilising}, that is, whenever the function inputs and network structure stop changing, the output values will eventually converge to a value which only depends on the limit inputs and network structure (and not on what happened before the convergence of the network). This property captures the ability of a function to react to input changes, self-adjusting to the new correct value, and is thus a commonly used notion for strongly defining the behavior of a distributed function.

\correction{Definition \ref{def:ESlimit} formalises the notion of self-stabilisation for space-time functions. This definition can be translated to field calculus functions and expressions by means of Definition \ref{def:ESfromSE}, as in the following definition:

\begin{definition}[Stabilising Expression] \label{def:stab:expr}
	A field calculus expression $\e$ is \emph{stabilising} with limit $\svalue$ on $\GraphS$ iff for any system evolution $\System$ of program $\e$ following $\aEventS$ with limit $\GraphS$, the space-time value $\dvalue$ corresponding to $\System$ is stabilising with limit $\svalue$. Similarly, a field calculus function $\funvalue(\xname_1, \ldots, \xname_n)$ is \emph{self-stabilising} with limit $\funvaluealt : \stval{\GraphS}^n \pto \stval{\GraphS}$ iff given any stabilising $\ap{\e_1, \ldots, \e_n}$ with limit $\ap{\svalue_1, \ldots, \svalue_n}$, $\funvalue(\e_1, \ldots, \e_n)$ is stabilising with limit $\svalue = \funvaluealt(\svalue_1, \ldots, \svalue_n)$.
\end{definition}
}

For example, function \lstinline|ever| is not self-stabilising: if the inputs stabilise to being false everywhere, the function output could still be true if some past input was indeed true. As a positive example, the following function is self-stabilising, and computes the hop-count distance from the closest device where \lstinline|source| is true.
\begin{lstlisting}[]
def hopcount(source) {
  share (infinity) { (old) => mux(source, 0, minHood(old)+1) }
}
\end{lstlisting}
Here, \lstinline|minHood| computes the minimum in the range of a numeric neighbouring value (excluding the current device), while \lstinline|mux| (multiplexer) selects between its second and third argument according to the value of the first (similarly as $\ifK$, but evaluating all arguments).

\begin{figure}[!t]
\centering
\begin{small}
\centerline{\framebox[\linewidth]{$
	\begin{array}{lcl@{\hspace{-20pt}}r}
		\s & \BNFcce &  \xname \; \BNFmid \; \anyvalue \; \BNFmid \; \correction{\letK \xname = \s \inK \s} \; \BNFmid \; \funvalue(\overline\s) \; \BNFmid \; \ifK (\s) \{ \s \} \{ \s \} \; \BNFmid \; \nbrK\{\s\}
		&   {\footnotesize \mbox{self-stab. expr. with $\repK$}} \\[3pt]
		&&  \; \BNFmid \;  \repK(\e)\{ (\xname) \toSymK{} \funvalue^\mathsf{C}(\nbrK\{\xname\}, \nbrK\{\s\}, \overline\e) \} \\[3pt]
		&&  \; \BNFmid \;  \repK(\e)\{ (\xname) \toSymK{} \funvalue(\muxK(\nbrlt(\s), \nbrK\{\xname\}, \s), \overline\s) \}\\[3pt]
		&&  \; \BNFmid \;\repK(\e)\{ (\xname) \toSymK{} \funvalue^\mathsf{R}(\minHoodLoc(\funvalue^\mathsf{MP}(\nbrK\{\xname\}, \overline\s), \s), \xname, \overline\e) \}
		\\[5pt] \hline \\[-5pt]
		\s & \BNFcce &  \xname \; \BNFmid \; \anyvalue \; \BNFmid \; \correction{\letK \xname = \s \inK \s} \; \BNFmid \; \funvalue(\overline\s) \; \BNFmid \; \ifK (\s) \{ \s \} \{ \s \} \; \BNFmid \; \nbrK\{\s\}
		&   {\footnotesize \mbox{self-stab. expr. with $\shareK$}} \\[3pt]
		&&  \; \BNFmid \; \shareK(\e)\{ (\xname) \toSymK{} \funvalue^\mathsf{C}(\xname, \nbrK\{\s\}, \overline\e) \}\\[3pt]
		&&  \; \BNFmid \; \shareK(\e)\{ (\xname) \toSymK{} \funvalue(\muxK(\nbrlt(\s), \xname, \s), \overline\s) \}\\[3pt]
		&&  \; \BNFmid \; \shareK(\e)\{ (\xname) \toSymK{} \funvalue^\mathsf{R}(\minHoodLoc(\funvalue^\mathsf{MP}(\xname, \overline\s), \s), \localK(\xname), \overline\e) \}
	\end{array}
	$}
}
\end{small}
\caption{Syntax of the self-stabilising fragment of field calculus introduced in \cite{Viroli:TOMACS_selfstabilisation}, together with its translation through Rewriting \ref{re:noneq}. Self-stabilising expressions $\s$ occurring inside $\repK$ and $\shareK$ statements cannot contain free occurrences of the $\shareK$-bound variable $\xname$.}
\label{fig:fragment}
\end{figure}

A rewriting of the self-stabilising fragment with $\shareK$ is given in Figure \ref{fig:fragment}, defining a class $\s$ of self-stabilising expressions, which may be:
\begin{itemize}
	\item any expression not containing a $\shareK$ or $\repK$ construct, comprising built-in functions;
	\item three special forms of $\shareK$-constructs, called \emph{converging}, \emph{acyclic} and \emph{minimising} pattern (respectively), defined by restricting both the syntax and the semantic of relevant functional parameters.
\end{itemize}
We recall here a brief description of the patterns: for a more detailed presentation, the interested reader may refer to \cite{Viroli:TOMACS_selfstabilisation}. The semantic restrictions on functions are the following.
\begin{description}
	\item[Converging ($\mathsf{C}$)]
	A function $\funvalue(\fvalue, \fvaluealt, \overline\anyvalue)$ is said converging iff, for every device $\deviceId$, its return value is closer to $\fvaluealt(\deviceId)$ than the maximal distance of $\fvalue$ to $\fvaluealt$.
	
	\item[Monotonic non-decreasing ($\mathsf{M}$)]
	a stateless\footnote{A function $\funvalue(\overline\xname)$ is \emph{stateless} iff its outputs depend only on its inputs and not on other external factors.} function $\funvalue(\xname, \overline\xname)$ with arguments of local type is \textsf{M} iff whenever $\lvalue_1 \leq \lvalue_2$, also $\funvalue(\lvalue_1, \overline\lvalue) \leq \funvalue(\lvalue_2, \overline\lvalue)$.
	
	\item[Progressive ($\mathsf{P}$)]
	a stateless function $\funvalue(\xname, \overline\xname)$ with arguments of local type is \textsf{P} iff $\funvalue(\lvalue, \overline\lvalue) > \lvalue$ or $\funvalue(\lvalue, \overline\lvalue) = \top$ (where $\top$ denotes the unique maximal element of the relevant type).
	
	\item[Raising ($\mathsf{R}$)]
	a function $\funvalue(\lvalue_1, \lvalue_2, \overline\anyvalue)$ is raising with respect to total partial orders $<$, $\vartriangleleft$ iff:
		\textit{(i)} $\funvalue(\lvalue, \lvalue, \overline\anyvalue) = \lvalue$;
		\textit{(ii)} $\funvalue(\lvalue_1, \lvalue_2, \overline\anyvalue) \geq \min(\lvalue_1, \lvalue_2)$;
		\textit{(iii)} either $\funvalue(\lvalue_1, \lvalue_2, \overline\anyvalue) \vartriangleright \lvalue_2$ or $\funvalue(\lvalue_1, \lvalue_2, \overline\anyvalue) = \lvalue_1$.
\end{description}
Hence, the three patterns can be described as follows.
\begin{description}
	\item[Converging]
	In this pattern, variable $\xname$ is repeatedly updated through function $\funvalue^\mathsf{C}$ and a self-stabilising value $\s$. The function $\funvalue^\mathsf{C}$ may also have additional (not necessarily self-stabilising) inputs $\overline\e$. If the range of the metric granting convergence of $\funvalue^\mathsf{C}$ is a well-founded set\footnote{An ordered set is \emph{well-founded} iff it does not contain any infinite descending chain.} of real numbers, the pattern self-stabilises since it gradually approaches the value given by $\s$.
	
	\item[Acyclic]
	In this pattern, the neighbourhood's values for $\xname$ are first filtered through a self-stabilising partially ordered ``potential'', keeping only values held in devices with lower potential (thus in particular discarding the device's own value of $\xname$). This is accomplished by the built-in function $\nbrlt$, which returns a field of booleans selecting the neighbours with lower argument values, and could be defined as $\defK \; \nbrlt(\xname) \, \{ \nbrK\{\xname\} < \xname\}$.
	
	The filtered values are then combined by a function $\funvalue$ (possibly together with other values obtained from self-stabilising expressions) to form the new value for $\xname$. No semantic restrictions are posed in this pattern, and intuitively it self-stabilises since there are no cyclic dependencies between devices.
	
	\item[Minimising]
	In this pattern, the neighbourhood's values for $\xname$ are first increased by a monotonic progressive function $\funvalue^\mathsf{MP}$ (possibly depending also on other self-stabilising inputs). As specified above, $\funvalue^\mathsf{MP}$ needs to operate on local values: in this pattern it is therefore implicitly promoted to operate (pointwise) on fields.
		
	Afterwards, the minimum among those values and a local self-stabilising value is then selected by function $\minHoodLoc(\fvalue, \lvalue)$ (which selects the ``minimum'' in $\applySubstitution{\fvalue}{\envmap{\deviceId}{\lvalue}}$).
	Finally, this minimum is fed to the \emph{raising} function $\funvalue^\mathsf{R}$ together with the old value for $\xname$ (and possibly any other inputs $\overline\e$), obtaining a result that is higher than at least one of the two parameters. We assume that the partial orders \correction{$<$,} $\vartriangleleft$ are \emph{noetherian},\footnote{A partial order is \emph{noetherian} iff it does not contain any infinite ascending chains.} so that the raising function is required to eventually conform to the given minimum.
	
	Intuitively, this pattern self-stabilises since it computes the minimum among the local values $\lvalue$ after being increased by $\funvalue^\mathsf{MP}$ along every possible path (and the effect of the raising function can be proved to be negligible).
\end{description}

For expressions in the self-stabilising fragment, we can prove that the non-equivalent rewriting preserves the limit behavior, and thus may be safely applied in most cases. \correction{Furthermore, the rewriting reduces the number of \emph{full rounds of execution} required for stabilisation.}

\correction{\begin{definition}[Full Round of Execution] \label{def:fullround}
	Let $\aEventS = \ap{\eventS, \neigh, <, \devof}$ be an augmented event structure and $\eventS_0 \subseteq \eventS$ be a set of events such that whenever $\eventId' < \eventId \in \eventS_0$ with $\devof(\eventId') = \devof(\eventId)$, then $\eventId' \in \eventS_0$. Define $r : \eventS \to \mathbb{N}$ as:
	\[
	r(\eventId) = \begin{cases}
		0 & \text{if } \eventId \in \eventS_0 \\
		\min\bp{r(\eventId')+1 \mid ~ \eventId' \neigh \eventId} & \text{otherwise}
	\end{cases}
	\]
	Then, we say that the $n$-th full round of execution after $\eventS_0$ comprises all events $\eventId \in \eventS$ such that $r(\eventId) = n$. If omitted, we assume $\eventS_0$ to be the $<$-closure of the finite set of events $\eventId$ not satisfying the equality in Definition \ref{def:ESlimit}.
\end{definition}}

\correction{Notice that function $r$ above is weakly increasing on the linear sequence of events on a given device: $r(\eventId) \le r(\eventId') \le r(\eventId)+1$ whenever $\eventId \neigh \eventId'$ and $\devof(\eventId) = \devof(\eventId')$.}

\begin{thm} \label{thm:selfstab}
	Assume that every built-in operator is self-stabilising. Then closed expressions $\s$ as in Figure \ref{fig:fragment} self-stabilise to the same limit for $\repK+\nbrK$ as their rewritings with $\shareK$, the latter with a tighter bound on the number of full rounds of execution of a network needed before stabilisation.
\end{thm}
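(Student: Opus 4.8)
\emph{Proof strategy.} The argument proceeds by structural induction on the grammar of the self-stabilising fragment in Figure~\ref{fig:fragment}. The base case --- an expression containing neither $\shareK$ nor $\repK$ --- is immediate, since Rewriting~\ref{re:noneq} is the identity there; the three $\shareK$-patterns (converging, acyclic, minimising) are the inductive cases, with inductive hypothesis that every nested self-stabilising subexpression $\s$ and every side-expression $\overline\e$ occurring inside a pattern rewrites to an expression with the same limit. For each pattern one has to establish, for the rewritten construct, three facts: that it is again stabilising (Definition~\ref{def:stab:expr}); that its limit agrees with that of the $\repK+\nbrK$ original; and that the number of full rounds of execution (Definition~\ref{def:fullround}) after which it has converged is no larger, and strictly smaller in the generic case. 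I would work throughout with the space-time value underlying a system evolution (Definition~\ref{def:ESfromSE}).

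\emph{Core comparison.} Fix an augmented event structure $\aEventS$ with limit $\GraphS$ and write $v(\eventId)$ for the value the construct produces at event $\eventId$. Unfolding rules \ruleNameSize{[E-NBR]}, \ruleNameSize{[E-REP]} and \ruleNameSize{[E-SHARE]} along $\aEventS$ shows that, as the neighbour part of the construct variable, the $\repK+\nbrK$ form feeds the \emph{stale} values $v(\eventId'')$ where $\eventId''$ is the round \emph{preceding} each neighbouring event $\eventId' \neigh \eventId$, whereas the rewritten $\shareK$ form feeds the values $v(\eventId')$ directly; in both forms the \emph{local} component of the construct variable is the previous local round's value (or $\e_1$ when absent). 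Two consequences follow. First, once the underlying space-time value is already constant on each device, $v(\eventId'') = v(\eventId')$, so the two one-step maps collapse to the \emph{same} fixpoint equation over $\GraphS$ --- exactly the one determining the limit of the $\repK+\nbrK$ pattern in~\cite{Viroli:TOMACS_selfstabilisation} --- and, using the inductive hypothesis on the sub-expressions and the uniqueness of the self-stabilising limit, the limits coincide. Second, the event-level dependency of the $\shareK$ form is obtained from that of the $\repK+\nbrK$ form by contracting each ``loop-around'' two-step $\eventId'' \neigh \eventId' \neigh \eventId$ to the single step $\eventId' \neigh \eventId$; since the rank function $r$ of Definition~\ref{def:fullround} increases by at most one along a device's timeline and by at most one across a $\neigh$ edge, information crossing a path of $k$ devices stabilises after about $k$ full rounds under $\shareK$ against about $2k$ under $\repK+\nbrK$. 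To turn this into the claimed round bound for each pattern I would re-run, essentially unchanged, the convergence proof of~\cite{Viroli:TOMACS_selfstabilisation} --- a strict descent of the convergence metric for the converging pattern, an induction on the (self-stabilising) potential for the acyclic pattern, and an induction on shortest-path length together with the noetherian order $\vartriangleleft$ for the minimising pattern --- checking each time that (i)~the proof never invokes the one-round staleness, so it applies verbatim to $\shareK$ and yields that the rewritten expression stabilises; and (ii)~each step that previously spent a full round to advance one device-hop now advances with fresher data, so the layer count does not increase and strictly decreases as soon as some shortest information path has positive length.

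\emph{Where the work is.} Three points need care. First, one must check that the syntactic substitutions in Rewriting~\ref{re:noneq} (and, where $\nbrK$-elimination is also performed, in Rewriting~\ref{re:nbrcut}) are well-typed and respect the $\ifK$ branch restriction, so that the neighbouring value handed to $\funvalue^\mathsf{C}$, to $\funvalue^\mathsf{MP}$ or to $\funvalue$ has precisely the domain intended at the limit; this amounts to matching the $\localK$/$\localChange$ bookkeeping against the domain restriction performed by rule \ruleNameSize{[E-FLD]} and by the $\pi$ functions, plus an $\alpha$-renaming argument ensuring the fresh variable $\yname$ of Rewriting~\ref{re:nbrcut} introduces no spurious self-dependency. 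Second, for the converging pattern the ``same limit'' and ``round bound'' claims must be read through well-foundedness of the metric range: convergence is exact after finitely many descent steps, one still obtains at least one descent step per full round, so the bound is preserved and improved only on its propagation component. Third --- and this I expect to be the genuine obstacle --- the minimising pattern requires decoupling two phenomena: the (now faster) propagation of the minimum through the monotonic progressive $\funvalue^\mathsf{MP}$, and the topology-independent ``settling'' of the raising function $\funvalue^\mathsf{R}$ along $\vartriangleleft$. One must argue that the former strictly improves the round count while the latter contributes the same additive term in both the $\repK+\nbrK$ and the $\shareK$ case, and that the two do not interact adversely; this calls for a uniform invariant over the events of a given rank that simultaneously controls the incoming minimum and the $\vartriangleleft$-distance of the current value from it.
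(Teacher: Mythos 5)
Your proposal matches the paper's proof in both structure and substance: the paper likewise proceeds by structural induction on the fragment (strengthened to open expressions under substitution of computational fields, Lemma~\ref{lem:stabilisation}), isolates the minimising pattern as the technically hard case in a separate preliminary lemma (Lemma~\ref{lem:minimising_termination}) proved by induction on devices ordered by minimal path weight together with the noetherian orders, and obtains the round-count improvement from exactly the stale-versus-fresh neighbour-value observation you make (appearing in the paper as the $n^r_{i-1}+2x$ versus $n^s_{i-1}+x$ accounting and the one-versus-two-round delays in the converging and acyclic cases). The proposal is correct and takes essentially the same route.
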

\begin{proof}
	See Appendix \ref{apx:proofs}.
\end{proof}

\section{Application and Empirical Validation} \label{sec:experiments}

Having developed the $\shareK$ construct and shown that it should be able to significantly improve the performance of field calculus programs, we have also applied this development by extending the Protelis~\cite{Protelis15} implementation of field calculus to support $\shareK$
(the implementation is a simple addition of another keyword and accompanying implementation code following the semantics expressed above).
We have further upgraded every function in the {\tt protelis-lang} library~\cite{francia2017library} with an applicable $\repK$/$\nbrK$ combination to use the $\shareK$ construct instead, thereby also improving every program that makes use of these libraries of resilient functions.
The official Protelis distribution includes these changes to the language and the library into the main distribution, starting with version 11.0.0.
To validate the efficacy of both our analysis and its applied implementation, we empirically validate the improvements in performance for a number of these upgraded functions in simulation.

\subsection{Evaluation Setup}

We experimentally validate the improvements of the $\shareK$ construct through two simulation examples.
In both, we deploy a number of mobile devices,
computing rounds asynchronously at a frequency of 1 $\pm{}$0.1 Hz,
and communicating within a range of 75 meters.
\correction{
Mobile devices were selected because they pose a further challenge with respect to static ones:
in fact, while in a statically deployed system only the transient to stability can be measured,
in a dynamic situation the coordination system must cope with continuous, small disruptions by
continuously adapting to an evolving situation.
}
All aggregate programs have been written in Protelis~\cite{Protelis15} and simulations performed in the Alchemist environment~\cite{PianiniJOS2013}.
All the results reported in this paper are the average of 200 simulations with different seeds,
which lead to different initial device locations, different waypoint generation, and different round frequency.
Data generated by the simulator has been processed with Xarray~\cite{xarray} and matplotlib~\cite{matplotlib}.
For the sake of brevity, we do not report the actual code in this paper; however, to guarantee complete reproducibility, the execution of the experiments has been entirely automated, and all the resources have been made publicly available along with instructions.\footnote{
Experiments are separated in two blocks, available on two separate repositories:

\url{https://bitbucket.org/danysk/experiment-2019-coordination-aggregate-share/}

\url{https://github.com/DanySK/Experiment-2019-LMCS-Share}}

In the first scenario, we position 2000 mobile devices into a corridor room with sides of, respectively, 200m and 2000m.
\correction{Two devices are ``sources'' and are fixed,
while the remaining 1998 are free to move within the corridor randomly.
We experiment with different locations for the two fixed devices, ranging from the opposite ends of the corridor to a distance of 100m.}
At every point of time, only one of the two sources is active, switching at 80 seconds and 200 seconds (i.e., the active one gets disabled, the disabled one is re-enabled).
Devices are programmed to compute a field yielding everywhere the farthest distance from any device to the current active source.
In order to do so, they \correction{apply three widely-used general coordination operations~\cite{FDMVA-NACO2013,Viroli:TOMACS_selfstabilisation}: estimation of shortest-path distances, accumulation of values across a region, and broadcast via local spreading.
In particular, we use the following specific algorithmic variants}:
\begin{enumerate}
 \item \correction{devices} compute a potential field measuring the distance from the active source through BIS \cite{a:bisjournal} (\lstinline|bisGradient| routine in \texttt{protelis:coord:spreading});
 \item \correction{devices then} accumulate the maximum distance value descending the potential towards the source, through Parametric Weighted Multi-Path C \cite{a:weightedmultipath} (an optimized version of \lstinline|C| in {\tt protelis:coord:accumulation});
 \item \correction{finally, devices} broadcast the \correction{accumulated value} along the potential,
 \correction{somewhat similar to the chemotaxis coordination pattern~\cite{FDMVA-NACO2013}},
 from the source to every other device in the system (an optimized version of the \lstinline|broadcast| algorithm available in \texttt{protelis:coord:spreading},
 which tags values from the source with a timestamp and propagates them by selecting more recent values).
\end{enumerate}
The choice of the algorithms to be used in validation \correction{is} critical. The usage of $\shareK$ is able to directly improve the performance of algorithms with solid theoretical guarantees; however, it may also exacerbate errors and instabilities for more ad-hoc algorithms, by allowing them to propagate quicker and more freely, preventing (or slowing down) the stabilization of the algorithm result whenever the network configuration and input is not constant. Of the set of available algorithms for spreading and collecting data, we thus selected variants with smoother recovery from perturbation: optimal single-path distance estimation (BIS gradient \cite{a:bisjournal}), optimal multi-path broadcast \cite{Viroli:TOMACS_selfstabilisation}, and the latest version of data collection (parametric weighted multi-path \cite{a:weightedmultipath}, fine-tuning the weight function).

We are interested in measuring the error of each step (namely, in distance vs. the true values),
together with the lag through which these values were generated
(namely, by propagating a time-stamp together with values, and computing the difference with the current time).
\correction{
We call this measurement error \emph{error in distance}, as it indicates how far the distance estimation is from reality.
Likewise, we call the measured information lag \emph{error in time},
as it indicates how long it takes for information to flow across the network from the source to other devices.
}
Moreover, we want to inspect how the improvements introduced by \texttt{share} accumulate across the composition of algorithms.
To do so, we measure the error in two conditions:
 (i) composite behavior, in which each step is fed the result computed by the previous step, and
 (ii) individual behavior, in which each step is fed an ideal result for the previous step, as provided by an oracle.

\begin{figure}[t]
 \begin{center}
    \includegraphics[width=.49\textwidth]{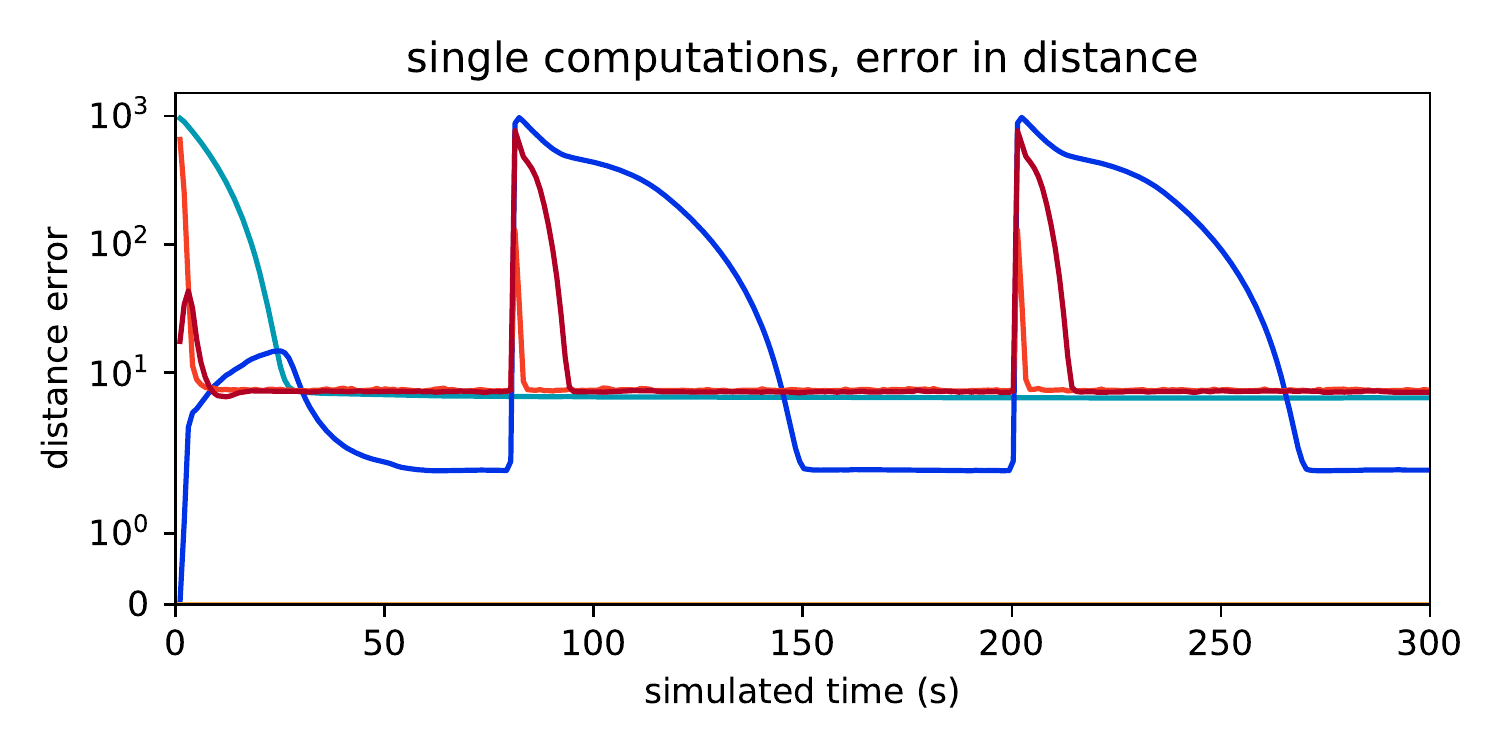}
    \includegraphics[width=.49\textwidth]{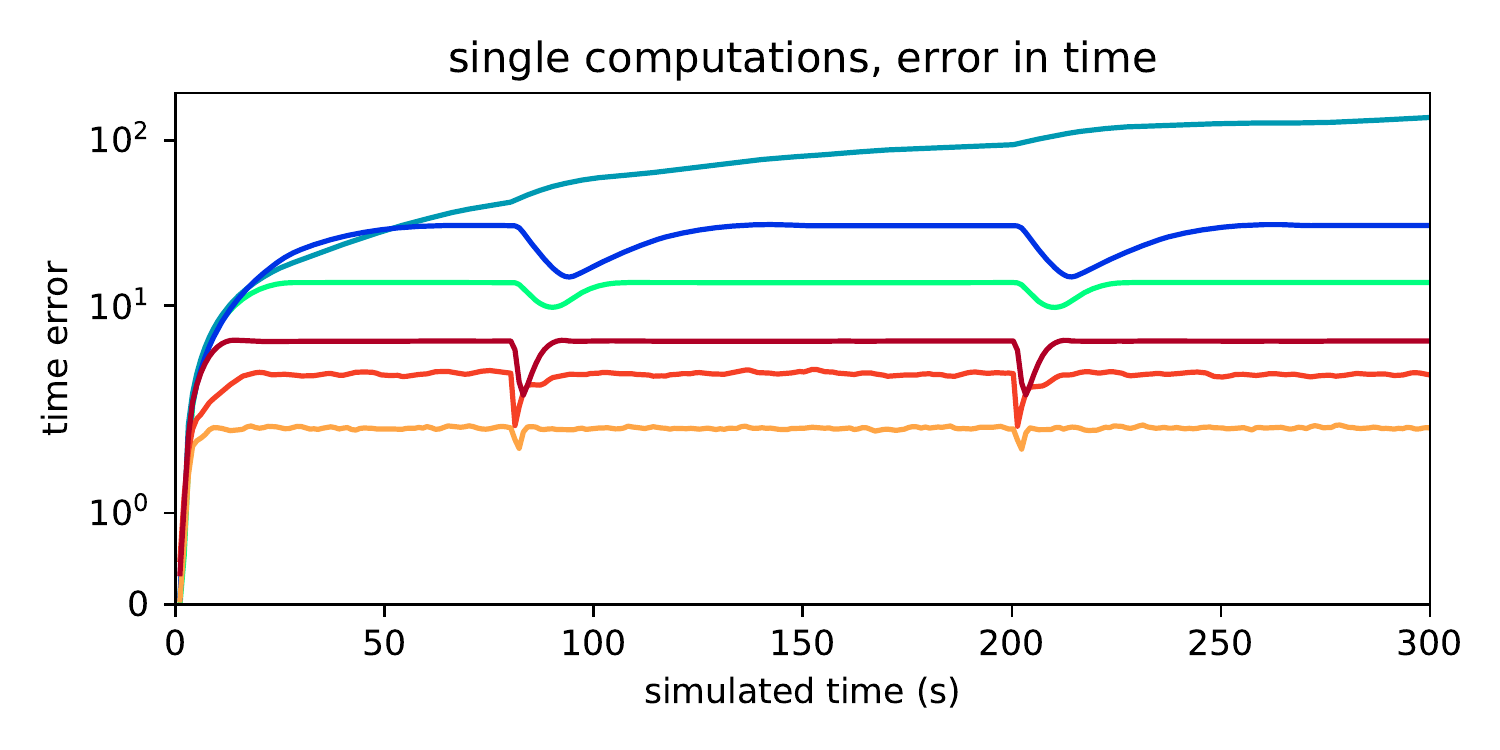}
    \includegraphics[width=.49\textwidth]{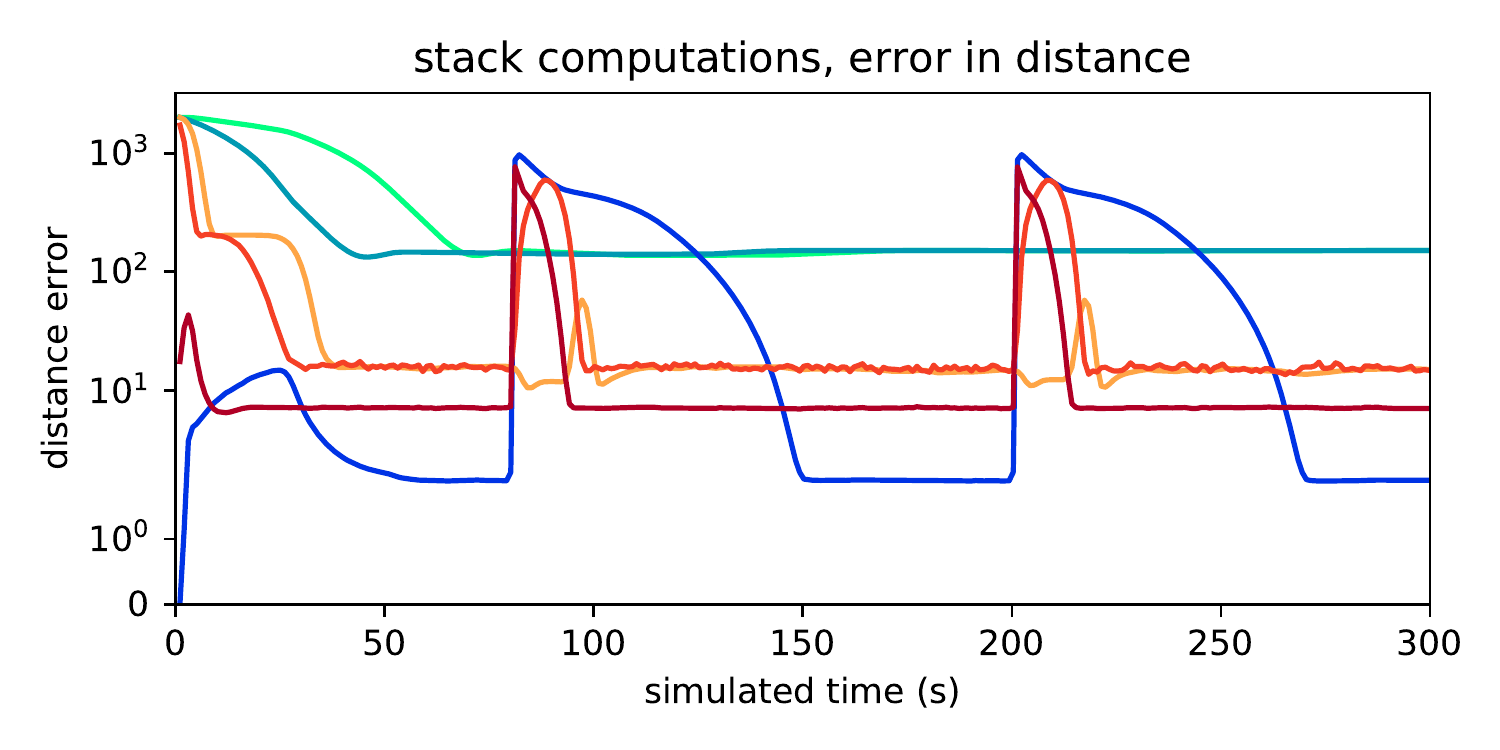}
    \includegraphics[width=.49\textwidth]{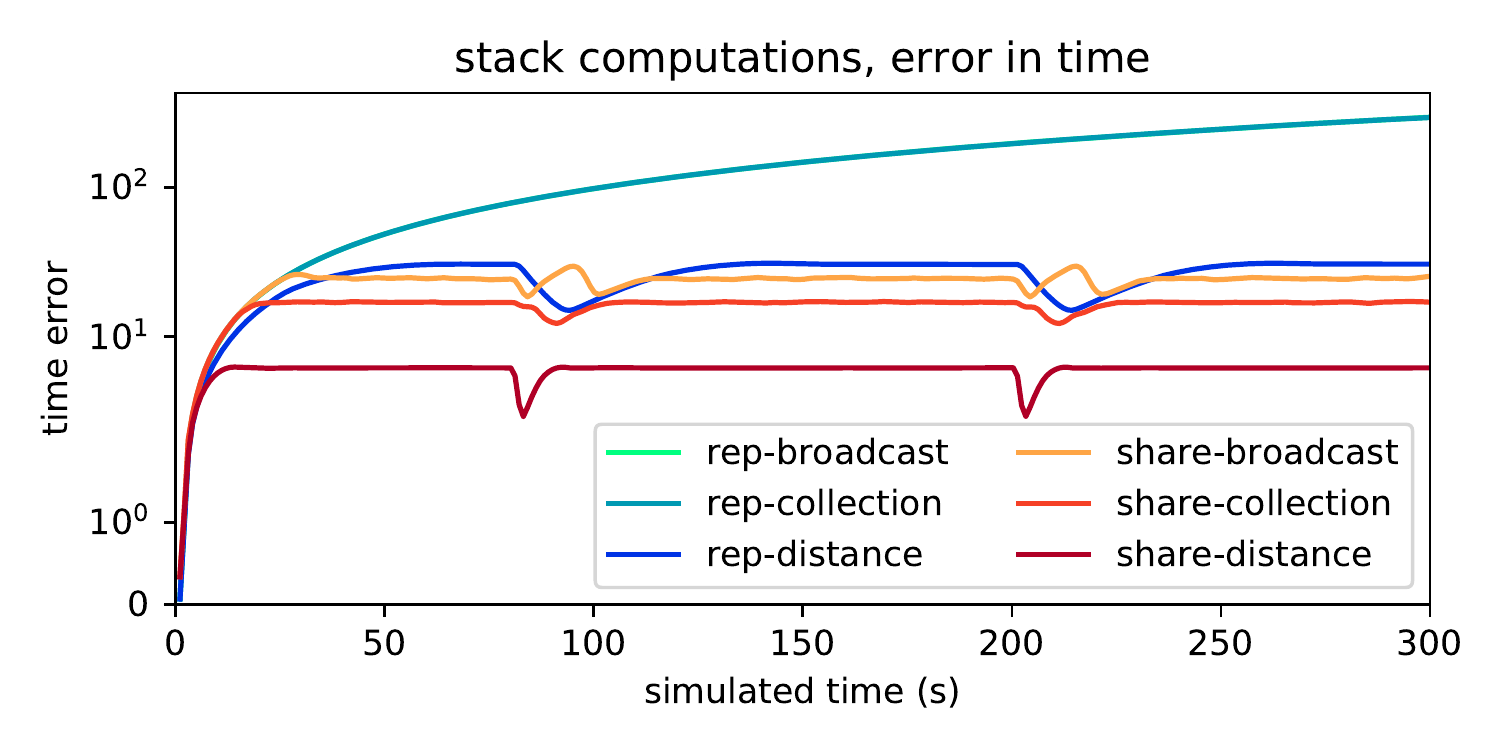}
    \caption{Performance in the corridor scenario, for both individual algorithms (top) and the composite computation (bottom).
    Vertical axis is linear in $[0, 1]$ and logarithmic above.
    Charts on the left column show distance error, while the right column shows time error.
    The versions of the algorithms implemented with \texttt{share} (warm colours) produce significantly less error and converge significantly faster in case of large disruptions than with \texttt{rep} (cold colours).
    \correction{Peaks at t=80s and t=200s are due to the algorithm re-stabilizing as a consequence of
    the active source switching between the two opposite nodes.}
    }
    \label{fig:corridor}
 \end{center}
\end{figure}

\Cref{fig:corridor} shows the results from this scenario.
Observing the behavior of the individual computations, it is immediately clear how the \texttt{share}-based version of the algorithm provides faster recovery from network input discontinuities and lower errors at the limit. 
These effects are exacerbated when multiple algorithms are composed to build aggregate applications.
The only counterexample is the limit of distance estimations, for which $\repK$ is marginally better, with a relative error less than $1\%$ lower than that of $\shareK$.

Moreover, notice that the collection algorithm with $\repK$ was not able to recover from changes at all, as shown by the linearly increasing delay in time (and the absence of spikes in distance error). The known weakness of multi-path collection strategies, that is, failing to react to changes due to the creation of information loops, proved to be much more relevant and invalidating with $\repK$ than with $\shareK$.

\begin{figure}[t]
 \begin{center}
    \includegraphics[width=.49\textwidth]{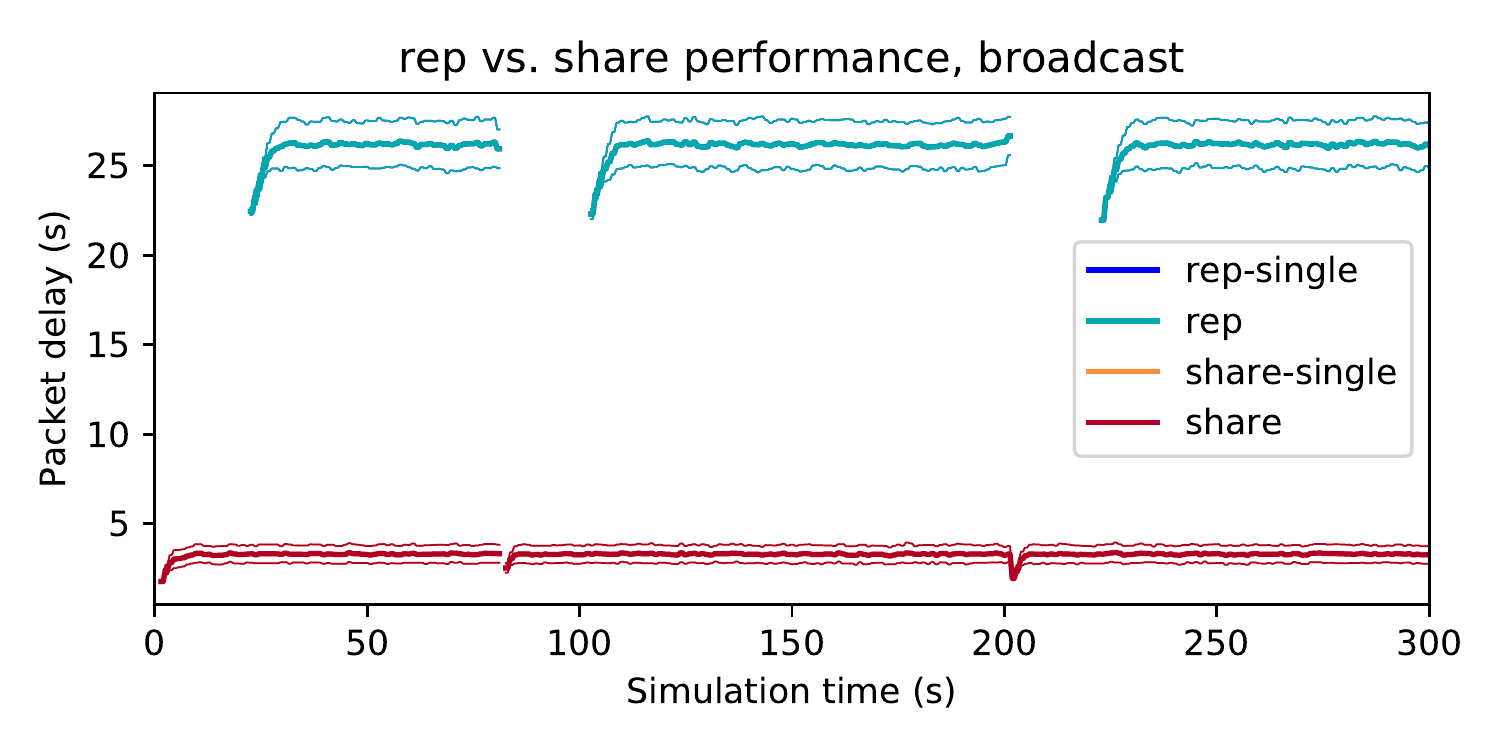}
    \includegraphics[width=.49\textwidth]{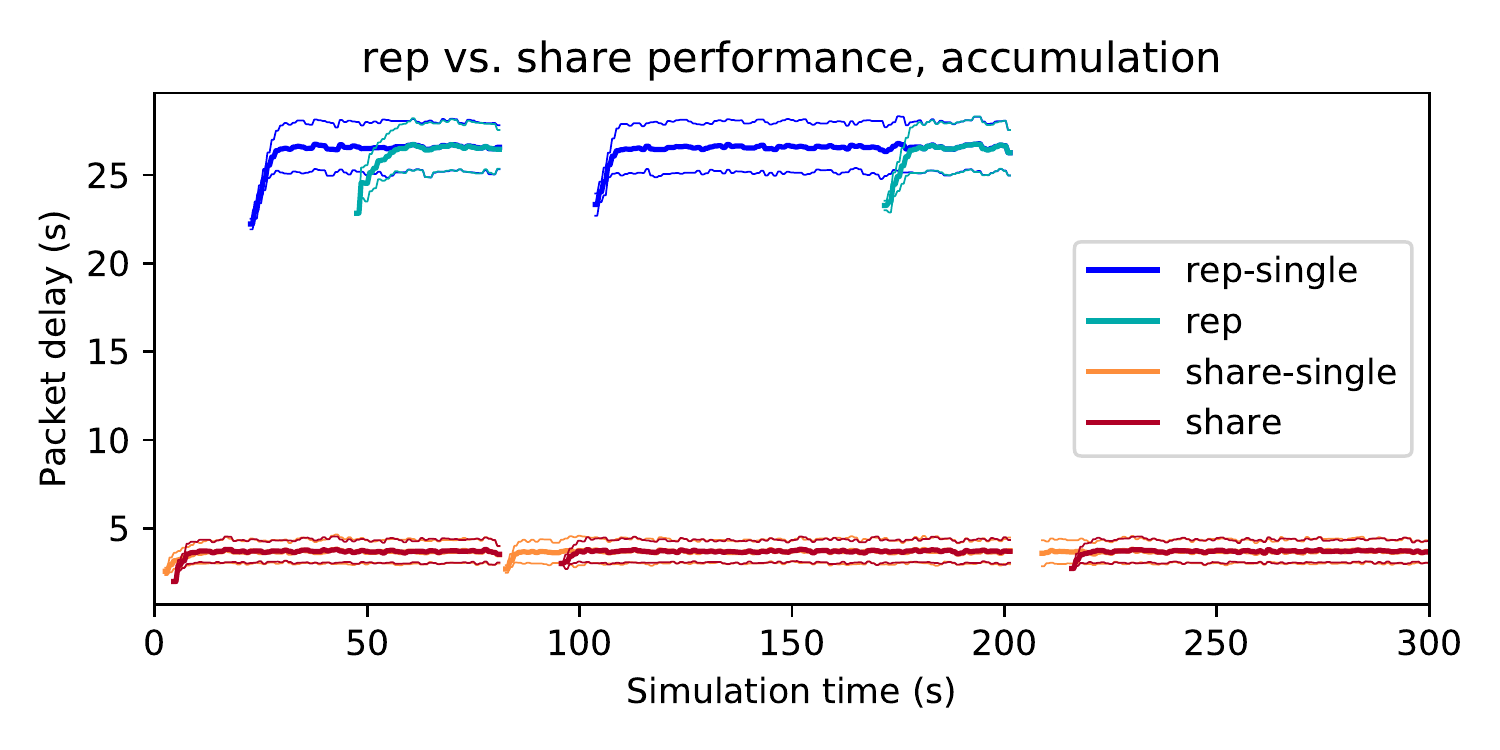}
    \includegraphics[width=.49\textwidth]{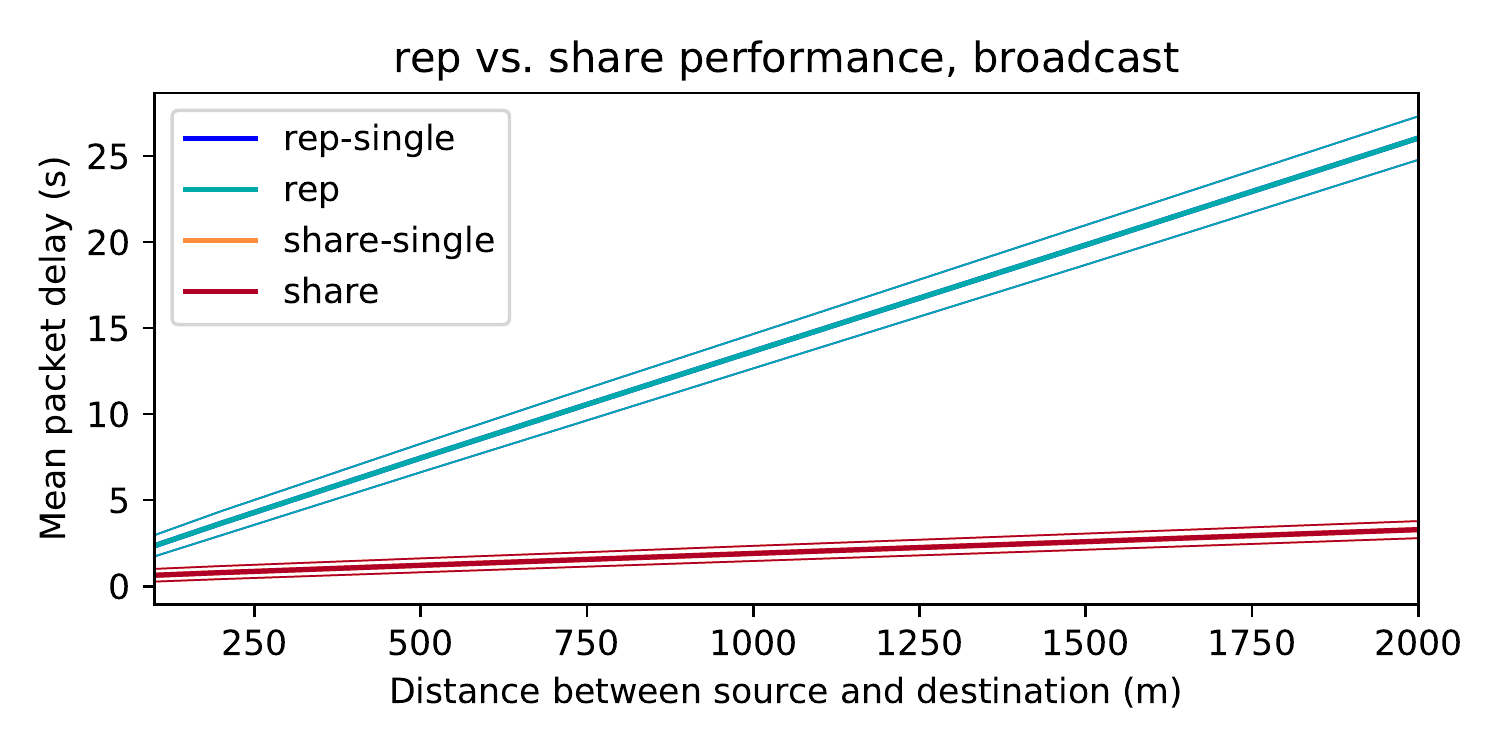}
    \includegraphics[width=.49\textwidth]{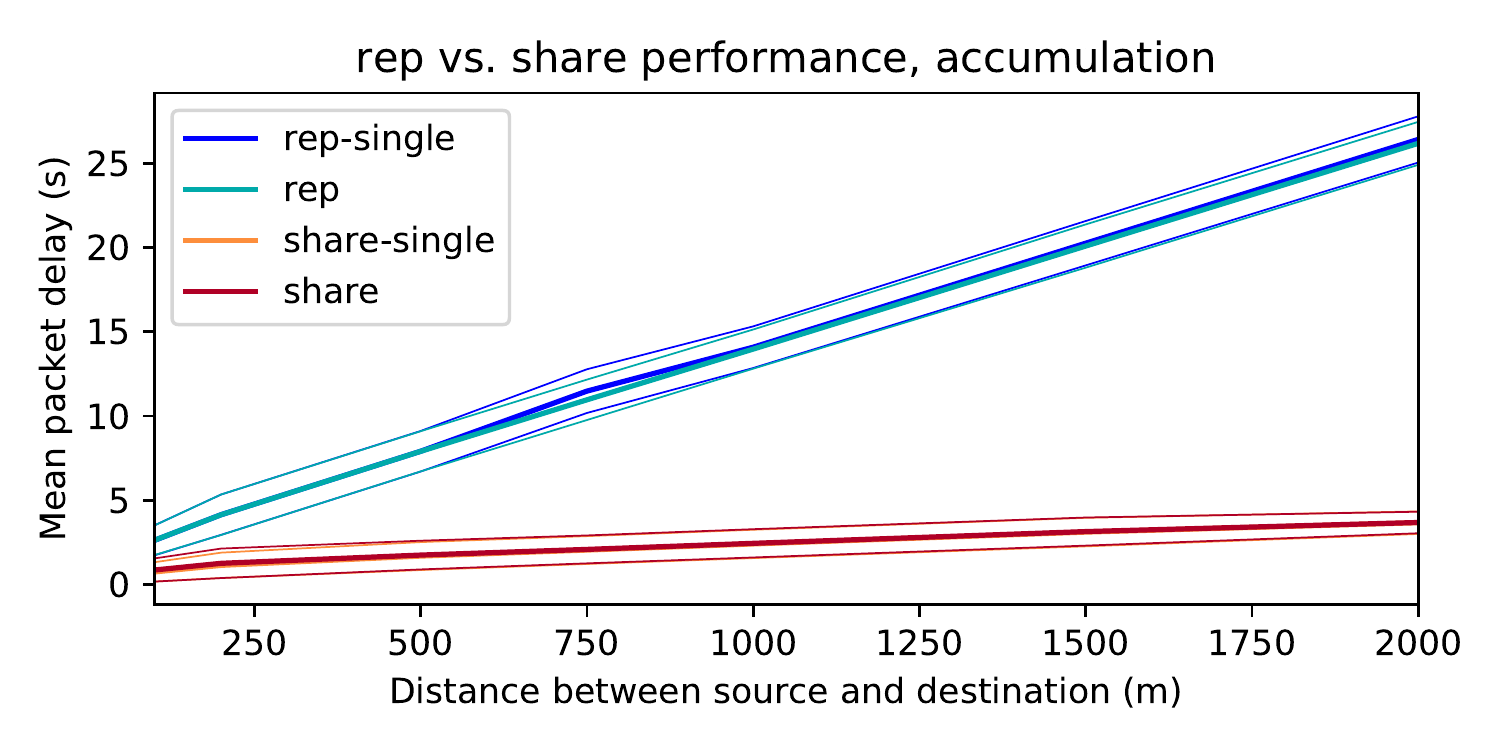}
    \caption{Performance in the corridor scenario, showing on top the packet lag between the two fixed devices for the scenario in which they are at opposite ends of the corridor, and on the bottom how the average packet lag changes with the distance between such devices.
    Broadcast data is on the left, accumulation on the right.
    Thinner lines depict mean $\pm$ standard deviation.
    Darker lines depict ``stacked'' computations, namely, they use respectively $\repK$-based or $\shareK$-based algorithms to compute distances; lighter lines depict ``single'' computations, where distances are provided by an oracle.
    The versions of the algorithms implemented with $\shareK$ (warm colours) stabilize faster, and once stabilized they provide much lower network lags. The effect stacks when multiple algorithms are used together, as shown by the chart on packet delay in accumulation (top right): the collection algorithm using the distance computed with $\repK$ requires a longer time for stabilization, after which it provides the same performance (in terms of lag) as the version relying on an oracle. Bottom charts show how both implementations scale linearly with the distance between devices (hence, for a network, linearly in its diameter); however, for $\repK$-based algorithms scaling is noticeably worse.
    \correction{Perturbations at t=80s and t=200s are due to the algorithm re-stabilizing as a consequence of
    the active source switching between the two opposite nodes}
    }
    \label{fig:corridornew}
 \end{center}
\end{figure}

Further details on the improvements introduced by $\shareK$ are depicted in \Cref{fig:corridornew}, which shows both the lag between two selected devices and how such lag is influenced by the distance between them.
Algorithms implemented on $\shareK$ provide, as expected, significantly lower network lags, and the effect is more pronounced as the distance between nodes increases: in fact, even though network lags expectedly scale linearly in both cases, $\repK$-based versions accumulate lag much more quickly.

\begin{figure}[t]
 \begin{center}
    \includegraphics[width=.495\textwidth]{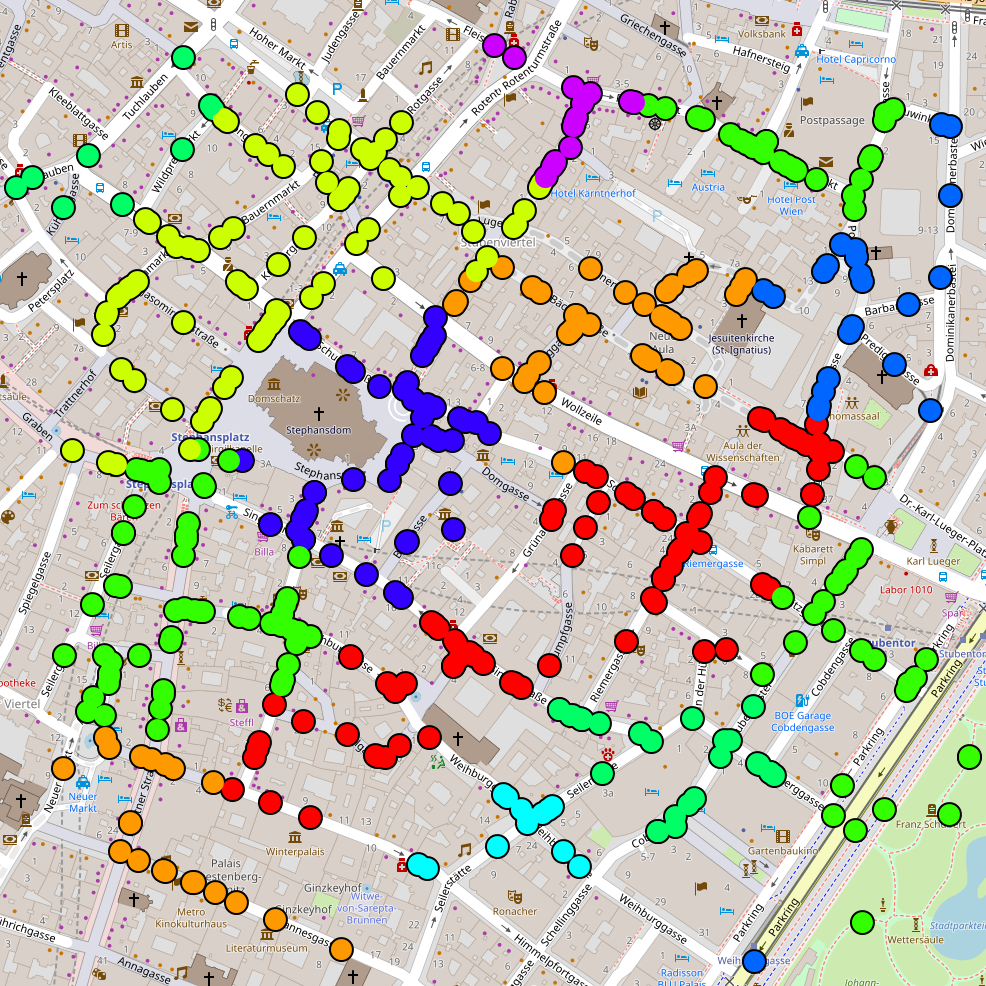}
    \includegraphics[width=.495\textwidth]{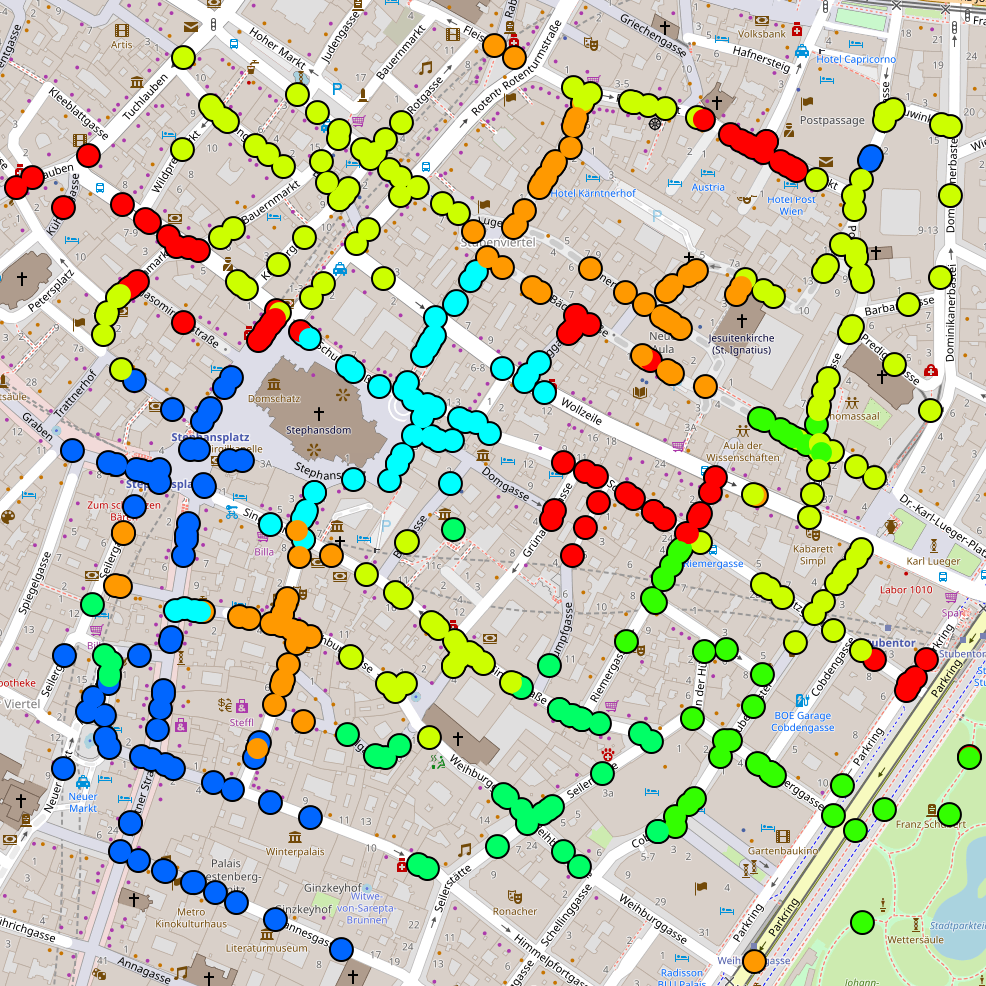}
    \caption{Snapshots of the Voronoi partitioning scenario using \texttt{share} (left) or \texttt{rep} (right).
    Colored dots are simulated devices, with each region having a different colour.
    Faster communication with \texttt{share} leads to a higher accuracy in distance estimation, allowing the {\tt share} implementation to perform a better division into regions and preventing regions from expanding beyond their limits: note the mixing of colours on the right.}
    \label{fig:voronoi}
 \end{center}
\end{figure}

In the second example, we deploy 500 devices in a city center, and let them move as though being carried by pedestrians, 
moving at walking speed ($1.4\frac{m}{s}$) towards random waypoints along roads open to pedestrian traffic (using map 
data from OpenStreetMaps \cite{osm}).
In this scenario, devices must self-organize service management regions with a radius of at most 200 meters, creating a Voronoi partition as shown in \Cref{fig:voronoi} (functions \lstinline|S| and \lstinline|voronoiPatitioningWithMetric| from {\tt protelis:coord:sparsechoice}).
We evaluate performance by measuring the number of partitions generated by the algorithm, and the average and maximum node distance error, where the error for a node $n$ measures how far a node is beyond of the maximum boundary for its cluster.
This is computed as $e_n=\max(0, d(n, l_n) - r)$, where $d$ computes the distance between two devices, $l_n$ is the leader for the cluster $n$ belongs to, and $r$ is the maximum allowed radius of the cluster.

\begin{figure}[t]
 \begin{center}
    \includegraphics[width=.49\textwidth]{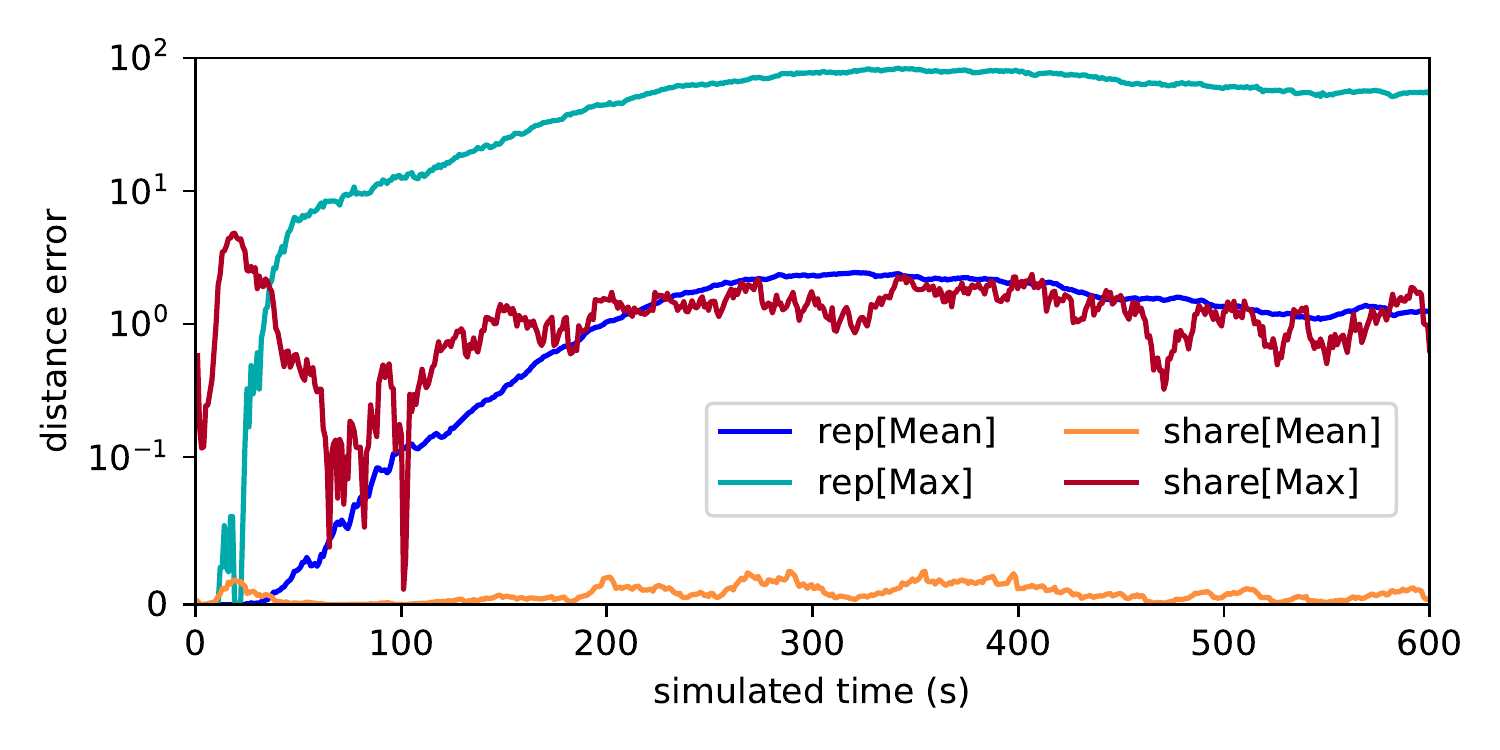}
    \includegraphics[width=.49\textwidth]{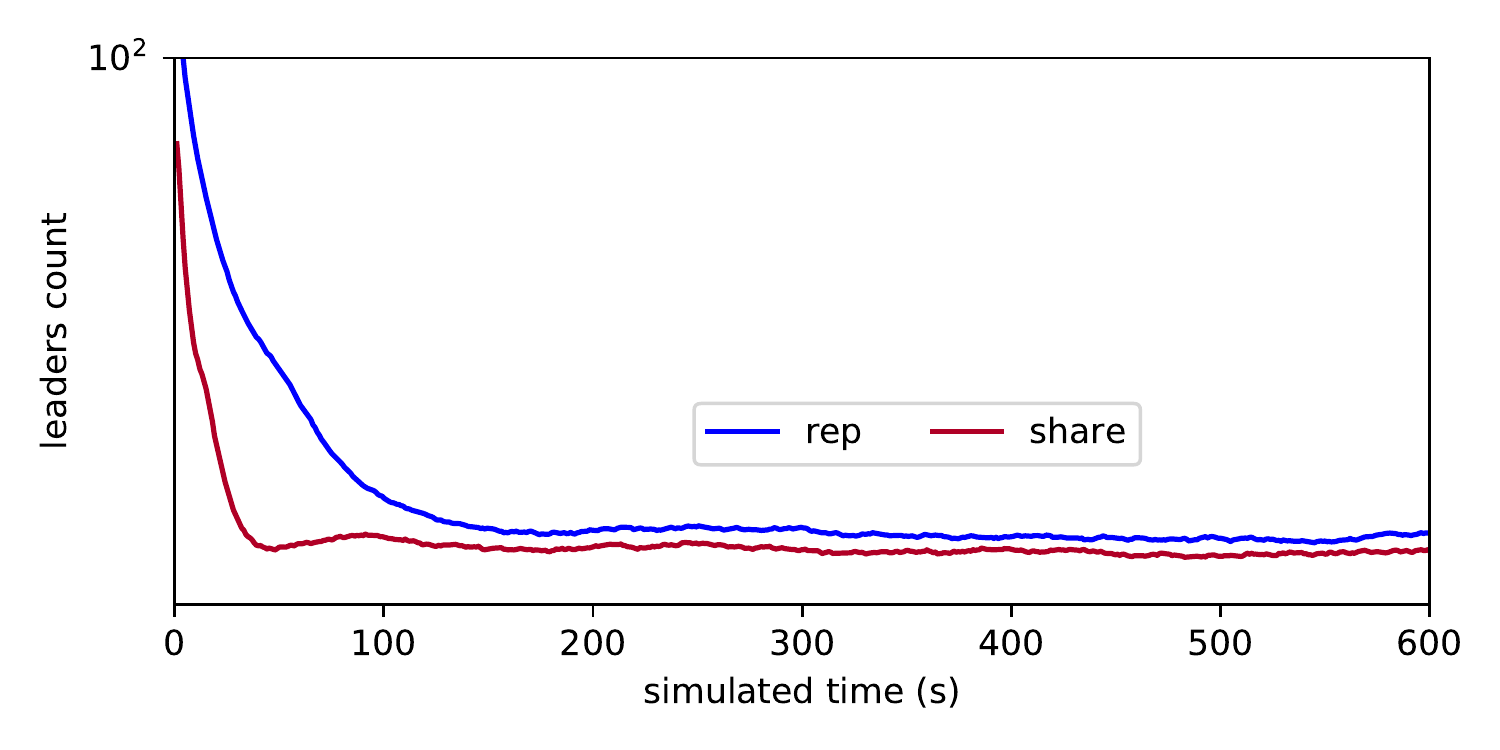}
    \caption{Performance in the Voronoi partition scenario: 
    error in distance on the left, leaders count with time on the right.
    Vertical axis is linear in $[0, 0.1]$ and logarithmic elsewhere.
    The version implemented with \texttt{share} has much lower error: the mean error is negligible, and the most incorrect value, after an initial convergence phase, is close to two orders of magnitude lower than with \texttt{rep}, as faster communication leads to more accurate distance estimates.
    The leader count shows that the systems create a comparable number of partitions, with the \texttt{share}-based featuring faster convergence to a marginally lower number due to increased consistency in partitioning.
    }
    \label{fig:vienna}
 \end{center}
\end{figure}

\Cref{fig:vienna} shows the results from this scenario, which also confirm the benefits of faster communication with {\tt share}.
The algorithm implemented with \texttt{share} has much lower error, mainly due to faster convergence of the distance estimates, and consequent higher accuracy in measuring the distance from the partition leader. Simultaneously, it creates a marginally lower number of partitions, by reducing the amount of occasional single-device regions which arise during convergence and re-organization.

\section{Conclusion and Future Work} \label{sec:conclusions}

\correction{We have introduced a novel primitive for field-based coordination, $\shareK$, allowing declarative expression of unified and coherent operation mechanisms for state-preservation, communication to neighbours, and aggregation of received messages.
More specifically, we have shown that this primitive significantly accelerated field calculus programs involving spreading of information, that programs can be automatically rewritten to use $\shareK$, and that transformation to use $\shareK$ preserves the key convergence property of self-stabilization.}
\correction{Finally, we have} made this construct available for use in applications \correction{through} an extension of the Protelis field calculus implementation and its accompanying libraries, and have empirically validated the expected improvements in performance through experiments in simulation.
\correction{Indeed, through this distribution the $\shareK$ construct is already being used in industrial applications (e.g.,~\cite{paulos2019MAP,OFFSETwebsite}). \correctionB{In these applications, every use of $\repK + \nbrK$ has been replaced by $\shareK$.
This replacement has been effected in two ways: first, by use of the new version of the Protelis library and second, by direct conversion of all application code using $\repK + \nbrK$ following the speed-improving Rewriting 3 from Section~\ref{ssec:rewritings}. Anecdotal reports of system performance from these applications show improvement} consistent with the results in this paper.}
\correction{The impact of this work is thus to significantly increase the pragmatic applicability of a wide range of results from aggregate computing.}

In future work, we plan to study for which algorithms the usage of $\shareK$ may lead to increased instability, thus fine-tuning the choice of $\repK$ and $\nbrK$ over $\shareK$ in the Protelis library.
Furthermore, we intend to fully analyze the consequences of $\shareK$ for improvement of  space-time universality~\cite{a:fcuniversality}, self-adaption~\cite{BVPD-TAAS2017}, \correctionB{real-time properties \cite{a:rtssgradient}}, and 
variants of the  semantics~\cite{a:ecas:domains} of the field calculus.
It also appears likely that the field calculus can be simplified by the elimination of both $\repK$ and $\nbrK$ by finding a mapping by which $\shareK$ can also be used to implement any usage of $\nbrK$.
Finally, we believe that the improvements in performance will also have positive consequences for nearly all current and future applications that are making use of the field calculus and its implementations and derivatives.
\correction{As such, it can also suggest alternative formulations or new operators in other field-based coordination languages, such as \cite{tota,hood,VPMSZ-SCP2015,DBLP:journals/corr/Lluch-LafuenteL16,VPB-COORD2012}.}

\subsubsection*{Acknowledgements}
We thank the anonymous COORDINATION 2019 referees for their comments and suggestions on improving the presentation.

\bibliographystyle{alpha}
\bibliography{long}

\pagebreak
\appendix

\section{\correction{Proof of TCNS Completeness}} \label{apx:proofs:tcns}

\corrstart
In this section, we prove that the TCNS is able to capture the message passing details of any augmented event structure.

\noindent\textbf{Restatement of Theorem \ref{thm:tcns:completeness}} (TCNS Completeness).
\emph{Let $\aEventS = \ap{\eventS, \neigh, <, \devof}$ be an augmented event structure. Then there exist (infinitely many) system evolutions following $\aEventS$.}
\begin{proof}
	Define a set $T = \bp{\eventId^c \mid ~ \eventId \in \eventS} \cup \bp{\eventId^s \mid ~ \eventId \in \eventS}$, including two elements $\eventId^c, \eventId^s$ for every event $\eventId$ (representing the \emph{computation} and \emph{send} phase of the event).
	Define $\neigh$ on $T$ as:
	\begin{enumerate}
		\item $\eventId_1^s \neigh \eventId_2^c$ for each pair of neighbour events $\eventId_1 \neigh \eventId_2$;
		\item $\eventId_1^c \neigh \eventId_2^s$ for each pair of time-dependent events $\eventId_1 \tneigh \eventId_2$;\footnote{\correction{We recall that $\eventId_1 \tneigh \eventId_2$ iff $\eventId_2 \neigh \nextev(\eventId_1)$ and $\eventId_2 \not\neigh \eventId_1$ (c.f.~Definition \ref{def:augmentedES}).}}
		\item $\eventId^c \neigh \eventId^s$ for each event $\eventId \in \eventS$.
	\end{enumerate}
	First, we prove that the $\neigh$ relation on $T$ is acyclic due to the \emph{immediacy} property.
	Notice that $\neigh$ always alternates between \emph{computation} and \emph{send} elements of $T$, and in a chain of $\neigh$ every other transition must be of type (1). Suppose then by contradiction that $\eventId_1^s \neigh \eventId_2^c \neigh \ldots \neigh \eventId_{2n}^c \neigh \eventId_1^s$ is a cycle in $T$. If no transition of type (2) is present, the cycle in $T$ corresponds to a cycle $\eventId_2 \neigh \eventId_4 \neigh \ldots \neigh \eventId_{2n} \neigh \eventId_2$ in $\eventS$ which is a contradiction. Then some transitions of type (2) must be present: assume they are $\eventId^c_{2k_i} \neigh \eventId^s_{2k_i+1}$ corresponding to $\eventId_{2k_i} \tneigh \eventId_{2k_i+1}$ for $i \le m$ and $m \le n$, $k_i \le n$ increasing. Then $\eventId^s_{2k_i+1} \neigh \ldots \neigh \eventId^c_{2k_{i+1}}$ corresponds to a chain $\eventId_{2k_i+1} \neigh \ldots \neigh \eventId_{2k_{i+1}}$ in $\eventS$, hence in particular $\eventId^s_{2k_i+1} < \eventId^c_{2k_{i+1}}$. Thus $\eventId_{2k_1} \tneigh \eventId_{2k_1+1} < \eventId_{2k_2} \tneigh \ldots < \eventId_{2k_1}$ is a cyclic sequence contradicting \emph{immediacy}, concluding the proof of the claim that $\neigh$ is acyclic on $T$.
	
	Since $\neigh$ is acyclic on $T$, there exists at least one ordering of $T = \ap{\eventId_1^{x_1}, \ldots, \eventId_\ell^{x_\ell}}$ compatible with $\neigh$, i.e.~ such that $\eventId_i^{x_i} \neigh \eventId_j^{x_j} \Rightarrow i < j$. Define by induction a system evolution $\System_i$ for $i \le \ell$ translating the elements of $T$ (in order), starting from the empty system evolution without transitions $\System_0 = \SystS{\emptyset,\emptyset}{\emptyset,\emptyset}$.	
	
	Consider a step $i \le \ell$ and let $\deviceId_i = \devof(\eventId_i)$.
	If $x_i = c$ (we are at a \emph{computation} element of $T$), add the following two transitions $\System_i = \System_{i-1} \nettran{}{\envact}{} \Cfg' \nettran{}{\deviceId_i+}{} \Cfg''$:
	\begin{itemize}
		\item
		first, an $\envact$ transition inserting $\deviceId_i$ into the domain of the final system configuration in $\System_{i-1}$ (if not already present);
		\item 
		then, a $\deviceId_i+$ transition representing the computation, where the filter $\filter$ clears out from the value-tree environment $\Field(\deviceId_i)$ the value trees corresponding to devices not in $X = \bp{\devof(\eventId') \mid ~ \eventId' \neigh \eventId_i}$.
	\end{itemize}
	If $x_i = s$ (we are at a \emph{send} element of $T$), add the following three transitions to the system $\System_i = \System_{i-1} \nettran{}{\envact}{} \Cfg' \nettran{}{\deviceId_i-}{} \Cfg'' \nettran{}{\envact}{} \Cfg'''$:
	\begin{itemize}
		\item
		first, an $\envact$ transition setting $\Topo(\deviceId_i)$ to $Y = \bp{\devof(\eventId') \mid ~ \eventId_i \neigh \eventId'}$, possibly adding devices in $Y$ to the domain of the system configuration if not already present;
		\item
		secondly, a $\deviceId_i-$ transition;
		\item 
		finally, another $\envact$ transition, which removes $\deviceId_i$ from the domain of the system configuration if $\nextev(\eventId_i)$ does not exist, or it does nothing if $\nextev(\eventId_i)$ exists.
	\end{itemize}
	Then, the system evolution $\System_\ell$ follows $\aEventS$ (c.f.~Definition \ref{def:ESfromSE}). Notice that  many system evolutions may follow $\aEventS$: besides the existence of many different linearisations of $T$ according to $\neigh$, $\envact$ transitions can be added in an unbounded number of ways.
\end{proof}
\corrend

\section{Proof of Self-Stabilisation} \label{apx:proofs}


In this section, we prove Theorem \ref{thm:selfstab}. First, we prove the result for the minimising pattern (Lemma \ref{lem:minimising_termination}), 
since it is technically more involved than the proof for the remainder of the fragment. 
We then prove a stronger form of the desired result (Lemma \ref{lem:stabilisation}) 
more suited for inductive reasoning, which in turn implies Theorem \ref{thm:selfstab}.

Given a \correction{closed} self-stabilising expression $\s$, we denote with $\builtindenot{}{\s} = \svalue = \envmap{\overline\deviceId}{\overline\anyvalue}$ the self-stabilising limit value of this expression in a given network \correction{graph $\GraphS$ (c.f.~Definition \ref{def:stab:expr}),} attained for every \correction{system} evolution \correction{$\System$} of a network \correction{following an $\aEventS$ with limit $\GraphS$.}
Let:
\begin{align*}
\s^r_\text{min} = \repK(\e)\{ (\xname) &\toSymK{} \funvalue^\mathsf{R}(\minHoodLoc(\funvalue^\mathsf{MP}(\nbrK\{\xname\}, \overline\s^r), \s^r), \xname, \overline\e) \} \\
\s^s_\text{min} = \shareK(\e)\{ (\xname) &\toSymK{} \funvalue^\mathsf{R}(\minHoodLoc(\funvalue^\mathsf{MP}(\xname, \overline\s^s), \s^s), \localK(\xname), \overline\e) \}
\end{align*}
be corresponding minimising patterns such that $\builtindenot{}{\overline\s^r} = \builtindenot{}{\overline\s^s} = \overline\svalue$, $\builtindenot{}{\s^r} = \builtindenot{}{\s^s} = \svalue$. Let $P = \overline\deviceId$ be a path in the network (a sequence of pairwise connected devices), and define its \emph{weight} as the result of picking the eventual value $\lvalue_1 = \svalue(\deviceId_1)$ of $\s^r$ in the first device $\deviceId_1$, and repeatedly passing it to subsequent devices through the monotonic progressive function, so that $\lvalue_{i+1} = \funvalue^\mathsf{MP}(\lvalue_i, \overline\anyvalue)$ where $\overline\anyvalue$ is the result of projecting fields in $\overline\svalue(\deviceId_{i+1})$ to their $\deviceId_i$ component (leaving local values untouched). Notice that the weight is well-defined since function $\funvalue^\mathsf{MP}$ is required to be stateless. Finally, let $\svalue_\text{out}$ be such that $\svalue_\text{out}(\deviceId) = \lvalue_\deviceId$ is the minimum weight for a path $P$ ending in $\deviceId$.
                
\begin{lem} \label{lem:minimising_termination}
	Let $\s^r_\text{min}$, $\s^s_\text{min}$ be corresponding minimising patterns, \correction{whose sub-expressions stabilise within $n^r$, $n^s$ full rounds of execution (respectively) with $n^r \ge n^s$.} Then they both stabilise to $\svalue_\text{out}$, with a bound on the number of full rounds of execution which is greater for $\s^r_\text{min}$ than for $\s^s_\text{min}$.
\end{lem}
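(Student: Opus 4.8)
The plan is to analyse the dynamics of the pattern variable $\xname$ once every (self-stabilising) sub-expression of the patterns has stabilised, i.e.\ from full round $n^r$, resp.\ $n^s$, onward. From that point the body of $\s^r_\text{min}$ (resp.\ $\s^s_\text{min}$) reduces to iterating the update $x \mapsto \funvalue^\mathsf{R}(\minHoodLoc(\funvalue^\mathsf{MP}(\cdot,\overline\svalue(\deviceId)),\svalue(\deviceId)),\cdot,\overline\anyvalue)$ over the event structure, where the first argument of $\funvalue^\mathsf{MP}$ ranges over the neighbours' values as delivered by $\nbrK\{\xname\}$ (for $\s^r_\text{min}$) or directly by the field $\xname$ (for $\s^s_\text{min}$), and the second argument of $\funvalue^\mathsf{R}$ is the device's own previous output. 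I would first prove the statement for $\s^s_\text{min}$, then reduce $\s^r_\text{min}$ to it by a delay argument, and finally compare the round bounds.

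For $\s^s_\text{min}$ the argument is two-phase. \emph{Raising transient:} the update $x_{t+1}=\funvalue^\mathsf{R}(m_t,x_t,\overline\anyvalue)$ (with $m_t$ the freshly computed minimum) either conforms, $x_{t+1}=m_t$, or strictly ascends, $x_{t+1}\vartriangleright x_t$, by the definition of raising function; since $\vartriangleleft$ is noetherian the latter occurs only finitely often, and (finitely many devices survive in the limit graph $\GraphS$) after finitely many full rounds every device is permanently in the conforming regime $x_{t+1}=m_t$. \emph{Minimum propagation:} in this regime the computation is a Bellman--Ford-style relaxation with edge operator $\funvalue^\mathsf{MP}$ and source values $\svalue(\deviceId)$; progressivity of $\funvalue^\mathsf{MP}$ makes every cycle strictly increasing, so no value can be held below its limit, and a path-induction on $\GraphS$ (using $\funvalue^\mathsf{MP}$ monotone and $\funvalue^\mathsf{R}(\lvalue_1,\lvalue_2,\overline\anyvalue)\ge\min(\lvalue_1,\lvalue_2)$) gives $x\ge\svalue_\text{out}(\deviceId)$. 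For the matching upper bound one propagates an optimal path outward: by monotonicity of $\funvalue^\mathsf{MP}$ a minimum-weight path to $\deviceId$ can be chosen whose prefix to the penultimate device $\deviceId'$ is itself minimum-weight, so once $\deviceId'$ holds $\svalue_\text{out}(\deviceId')$ the minimum computed at $\deviceId$ is $\le\funvalue^\mathsf{MP}(\svalue_\text{out}(\deviceId'),\overline\anyvalue)=\svalue_\text{out}(\deviceId)$, hence (with the lower bound) equal and stable; iterating, $\s^s_\text{min}$ stabilises to $\svalue_\text{out}$ within the length of the raising transient plus the length of a longest minimum-weight path.

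For $\s^r_\text{min}$ the same two-phase analysis applies, the one difference being that the neighbour values fed to $\funvalue^\mathsf{MP}$ come through $\nbrK\{\xname\}$: by rule \ruleNameSize{[E-NBR]} a device publishes, in a given round, the value its $\repK$-variable had at the \emph{start} of that round, i.e.\ the output it produced one round earlier. Hence each hop of the minimum-propagation argument costs \emph{two} full rounds for $\s^r_\text{min}$ where it costs one for $\s^s_\text{min}$ --- precisely the ``loop around'' delay of Figure~\ref{f:intuition}. Coupling the two executions from corresponding stages of their raising transients, the minimum available at $\deviceId$ after $r(\eventId)$ full rounds for $\s^r_\text{min}$ coincides with the one available after about $r(\eventId)/2$ full rounds for $\s^s_\text{min}$; together with the hypothesis $n^r\ge n^s$ this makes the stabilisation bound obtained for $\s^r_\text{min}$ no smaller (indeed strictly larger whenever some minimum-weight path has positive length) than the one for $\s^s_\text{min}$, while both limits equal $\svalue_\text{out}$.

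I expect two points to carry most of the work. First, the lower bound $x\ge\svalue_\text{out}(\deviceId)$ is genuinely delicate, since a value may temporarily dip below its limit during the transient; reconciling this with eventual convergence requires combining properties (ii)--(iii) of the raising function with both $<$ and $\vartriangleleft$ being noetherian and with the progressivity of $\funvalue^\mathsf{MP}$ (no ``support'' below the shortest-path value without a decreasing cycle) --- this is the part the excerpt flags as more involved. Second, turning ``each hop costs an extra round'' into a precise statement in the asynchronous full-round model of Definition~\ref{def:fullround} needs care: one must account for the self-entry that $\nbrK\{\xname\}$ also contributes (the device's own previous $\repK$ output, which $\funvalue^\mathsf{MP}$ then strictly raises, so it can never spuriously lower the computed minimum), handle devices firing at different rates via $r(\eventId)\le r(\eventId')\le r(\eventId)+1$ for same-device $\eventId\neigh\eventId'$, and check that the raising transient of $\s^r_\text{min}$ is itself no shorter than that of $\s^s_\text{min}$, so that the two transient contributions add up in a way that preserves the comparison of bounds.
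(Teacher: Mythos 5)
Your high-level ingredients (path weights, progressivity, noetherianity of both orders, the factor-two hop delay) are the right ones, but the proof architecture has a genuine gap: the two phases cannot be separated in the order you propose. Your ``raising transient'' claims that, because $\vartriangleleft$ is noetherian, the non-conforming case $x_{t+1}\vartriangleright x_t$ occurs only finitely often, so that every device eventually sits permanently in the conforming regime $x_{t+1}=m_t$ and a clean Bellman--Ford relaxation can then be analysed. This does not follow. Property (iii) of a raising function only forbids an \emph{infinite $\vartriangleleft$-ascending chain}; a conforming step resets the state to the freshly computed minimum $m_t$, which during the transient is still evolving and may be $\vartriangleleft$-below the current state, so ascents can restart from a lower point. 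An execution may therefore alternate ascents and conforming resets indefinitely unless one already knows that the $m_t$'s themselves stop decreasing --- which is precisely the minimum-propagation statement you defer to phase two. The two phases are mutually dependent, and the lower bound $x\ge\svalue_\text{out}(\deviceId)$ (which you correctly flag as the delicate point) is never actually established.

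The paper resolves this circularity by interleaving the two arguments in a Dijkstra-style induction: devices are ordered $\deviceId^0,\deviceId^1,\ldots$ by increasing limit weight $\lvalue_{\deviceId^i}$, and one proves that once all $\deviceId^j$ with $j<i$ have stabilised, the \emph{global minimum of the pattern's value over the not-yet-stable devices} strictly increases at every full round (every two rounds for $\repK+\nbrK$): the $\minHoodLoc$ output there is strictly above that minimum by progressivity of $\funvalue^\mathsf{MP}$ and the inductive hypothesis, and properties (ii)--(iii) of $\funvalue^\mathsf{R}$ keep the overall output above it as well. Noetherianity of $<$ then forces this minimum past $\lvalue_{\deviceId^i}$, after which the first argument of $\funvalue^\mathsf{R}$ at $\deviceId^i$ is \emph{constant}, and only then is noetherianity of $\vartriangleleft$ invoked, per device, to force conformance. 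If you want to keep your two-phase narrative you would have to prove a monotone lower bound on the running minimum \emph{before} assuming the raising function has settled, which essentially reconstructs the paper's induction.
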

\begin{proof}
	Let $\lvalue_\deviceId$ be the minimal weight for a path $P$ ending in $\deviceId$, and let $\deviceId^0, \deviceId^1, \ldots$ be the list of all devices $\deviceId$ ordered by increasing $\lvalue_\deviceId$. Notice that the path $P$ of minimal weight $\lvalue_{\deviceId^i}$ for device $i$ can only pass through nodes such that $\lvalue_{\deviceId^j} \leq \lvalue_{\deviceId^i}$ (thus s.t. $j < i$). In fact, whenever a path $P$ contains a node $j$ the weight of its prefix until $j$ is at least $\lvalue_{\deviceId^j}$; thus any longer prefix has weight strictly greater than $\lvalue_{\deviceId^j}$ since $\funvalue^\mathsf{MP}$ is progressive.
	
	\correction{Let $\System$ be a system evolution following $\aEventS = \ap{\eventS, \neigh, <, \devof}$ with limit $\GraphS$.}
	We now prove by complete induction on $i$ that after a certain number of \correction{full rounds of execution} $\correction{n}^r_i$, $\correction{n}^s_i$ expressions $\s^r_\text{min}$, $\s^s_\text{min}$ stabilise to $\lvalue_{\deviceId^i}$ in device $\deviceId^i$ \correction{and assume values $\ge \lvalue_{\deviceId^i}$ in devices $\deviceId^j$ with $j \ge i$.}
	
	\correction{By inductive hypothesis,} assume that devices $\deviceId^j$ with $j < i$ are all self-stabilised from a certain number of \correction{full rounds of execution} $\correction{n}^r_{i-1}$, $\correction{n}^s_{i-1}$. \correction{Thus,} their limit values are available to neighbours \correction{after $n^r_{i-1} + 2$, $n^s_{i-1} + 1$ full rounds of execution respectively.}
	Consider the evaluation of the expressions $\s^r_\text{min}$, $\s^s_\text{min}$ in a device $\deviceId^k$ with $k \geq i$. Since the local argument $\lvalue$ of \texttt{minHoodLoc} is also the weight of the single-node path $P = \deviceId^k$, it has to be at least $\lvalue \geq \lvalue_{\deviceId^k} \geq \lvalue_{\deviceId^i}$. Similarly, the restriction $\fvalue'$ of the field argument $\fvalue$ of \texttt{minHoodLoc} to devices $\deviceId^j$ with $j < i$ has to be at least $\fvalue' \geq \lvalue_{\deviceId^k} \geq \lvalue_{\deviceId^i}$ since it corresponds to weights of (not necessarily minimal) paths $P$ ending in $\deviceId^k$ (obtained by extending a minimal path for a device $\deviceId^j$ with $j < i$ with the additional node $\deviceId^k$). Finally, the complementary restriction $\fvalue''$ of $\fvalue$ to devices $\deviceId^j$ with $j \geq i$ is strictly greater than the minimum value for whole $\s^r_\text{min}$, $\s^s_\text{min}$ expression among \correction{all} devices \correction{$\deviceId^j$ with $j \geq i$} (delayed by one round for $\repK+\shareK$), since $\funvalue^\mathsf{MP}$ is progressive.
	
	It follows that as long as the minimum value for the whole expressions among non-stable devices is lower than $\lvalue_{\deviceId^i}$, the result of the \texttt{minHoodLoc} subexpression is \emph{strictly greater} than this minimum value. The same holds for the overall value, since it is obtained by combining the output of \texttt{minHoodLoc} with the previous value for $\xname$ through the rising function $\funvalue^\mathsf{R}$, and a rising function has to be equal to the first argument (the \texttt{minHoodLoc} result strictly greater than the minimum), or $\vartriangleright$ than the second. In the latter case, it also needs to be greater or equal to the first argument (again, strictly greater than the minimum) or strictly greater than the second argument\footnote{It cannot be equal to the second argument, as it is $\vartriangleright$-greater than it.} (not below the minimum value).
	
	Thus, every full round of execution (two full rounds for $\repK+\nbrK$, in order to allow value changes to be received) the minimum value among non-stable devices has to increase, until it eventually surpasses $\lvalue_{\deviceId^i}$ \correction{since $<$ is noetherian. This happens within at most $n^r_{i-1} + 2x$, $n^s_{i-1} + x$ full rounds of execution respectively, where $x$ is the length of the longest increasing sequence between $\lvalue_{\deviceId^{i-1}}$ and $\lvalue_{\deviceId^i}$ (longest sequence up to $\lvalue_{\deviceId^i}$ if $i=0$).} From that point on, that minimum cannot drop below $\lvalue_{\deviceId^i}$, and the output of \texttt{minHoodLoc} in $\deviceId^i$ stabilises to $\lvalue_{\deviceId^i}$. In fact, if $P$ is a path of minimum weight for $\deviceId^i$, then either:
	\begin{itemize}
		\item $P = \deviceId^i$, so that $\lvalue_{\deviceId^i}$ is exactly the local argument of the \texttt{minHoodLoc} operator, hence also the output of it (since the field argument is greater than $\lvalue_{\deviceId^i}$).
		\item $P = Q, \deviceId^i$ where $Q$ ends in $\deviceId^j$ with $j < i$. Since $\funvalue^\mathsf{MP}$ is monotonic non-decreasing, the weight of $Q', \deviceId^i$ (where $Q'$ is minimal for $\deviceId^j$) is not greater than that of $P$; in other words, $P' = Q', \deviceId^i$ is also a path of minimum weight. It follows that $\fvalue(\deviceId^j)$ (where $\fvalue$ is the field argument of the \texttt{minHoodLoc} operator) is exactly $\lvalue_{\deviceId^i}$.
	\end{itemize}

	Since the order $\vartriangleleft$ is noetherian, the rising function \correction{on $\deviceId^i$} has to select its first argument \correction{in a number of rounds $y$ at most equal to the longest increasing sequence from $\lvalue_{\deviceId^i}$. Thus,} it will select the output of the \texttt{minHoodLoc} subexpression, which is $\lvalue_{\deviceId^i}$, \correction{after $n^r_{i-1} + 2x + y$, $n^s_{i-1} + x + y$ full rounds of execution.} From that point on, the minimising expression will have self-stabilised on device $\deviceId^i$ to $\lvalue_{\deviceId^i}$, \correction{and every device $\deviceId^j$ with $j \ge i$ will attain values $\ge \lvalue_{\deviceId^i}$,} concluding the inductive step and the proof.
\end{proof}

Let $\svalue$ be a \correction{computational field.} We write $\applySubstitution{\s}{\substitution{\xname}{\svalue}}$ to indicate an aggregate process in which each device is computing a possibly different substitution $\applySubstitution{\s}{\substitution{\xname}{\svalue(\deviceId)}}$ of the same expression.

\begin{lem} \label{lem:stabilisation}
	Assume that every built-in operator is self-stabilising. Let $\s^r$ be an expression in the self-stabilising fragment of \cite{Viroli:TOMACS_selfstabilisation}, $\s^s$ its non-equivalent translation with $\shareK$, and $\overline\svalue$ be a sequence of computational fields \correction{on $\GraphS$} of the same length as the free variables $\overline\xname$ occurring in $\s^r$, $\s^s$. Then $\applySubstitution{\s^r}{\substitution{\overline\xname}{\overline\svalue}}$, $\applySubstitution{\s^s}{\substitution{\overline\xname}{\overline\svalue}}$ self-stabilise to the same limit, and the second does so with a smaller bound on the number of full rounds of execution.
\end{lem}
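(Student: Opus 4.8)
The plan is to prove the lemma by structural induction on the grammar of the self-stabilising fragment in Figure~\ref{fig:fragment}, treating the free variables $\overline\xname$ as already-stabilised inputs fixed to $\overline\svalue$ and exploiting the compositionality of self-stabilisation (Definition~\ref{def:selfstabilisation}) together with the fact that Rewriting~\ref{re:noneq} touches only the outermost $\repK$ of each production. The base productions --- a variable, a value, a \texttt{let}-expression, a built-in or user-defined function call, an $\ifK$, or an $\nbrK$ --- are left literally unchanged by the rewriting; since built-ins are self-stabilising by assumption, such an expression applied to sub-expressions that are stabilising (inductive hypothesis) is itself stabilising to the same limit, with the same round bound on both sides. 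For the \emph{minimising} pattern the claim is exactly Lemma~\ref{lem:minimising_termination}, whose hypotheses hold because by the inductive hypothesis the sub-expressions $\overline\s$ stabilise within $n^r \ge n^s$ full rounds on the two sides. The remaining work is thus concentrated in the converging and acyclic cases.

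For the \emph{converging} pattern, once $\s$ and $\overline\e$ have stabilised (after $n^r \ge n^s$ full rounds, using the remark following Definition~\ref{def:fullround} that the round counter grows by at most one per device step), I would show that the state variable is driven to $\builtindenot{}{\s}$ by a discrete contraction. Unfolding rule \ruleNameSize{[E-SHARE]} (resp.\ \ruleNameSize{[E-REP]} together with \ruleNameSize{[E-NBR]}), each new device value is $\funvalue^\mathsf{C}$ applied to a field of neighbours' values of the construct and to the stabilised field $\nbrK\{\s\}$; by the converging property, every update strictly decreases the maximum, over devices not yet at the limit, of the distance to $\builtindenot{}{\s}$, unless that maximum is already zero. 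As this distance ranges over a well-founded subset of $\mathbb{R}$, it reaches zero after finitely many full rounds, giving the common limit. The only semantic difference between the two sides is that $\shareK$ feeds $\funvalue^\mathsf{C}$ the neighbours' \emph{most recent} values of the construct whereas $\repK+\nbrK$ feeds the values that circulated one round earlier; a pointwise coupling of the two iterations in the contraction metric then shows the $\shareK$ side never lags behind, and the extra hop $\nbrK$ needs to propagate each update makes its round bound strictly larger.

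For the \emph{acyclic} pattern, I would induct on the rank of a device in the strict partial order induced by the stabilised potential $\builtindenot{}{\s}$, which is well-founded on the finite graph $\GraphS$: the filter $\muxK(\nbrlt(\s), \cdot, \s)$ discards every neighbour whose potential is not strictly smaller --- in particular the device's own contribution --- so a device's value depends only on strictly-lower-potential neighbours and on the stabilised $\overline\s$. Hence once all devices of strictly smaller rank have stabilised, a device at the next rank stabilises after one further full round with $\shareK$, but needs an additional round with $\repK+\nbrK$ (the value must first be stored by $\repK$ before $\nbrK$ can ship it); assembling over ranks yields the limit and the tighter bound. Collecting the per-case bounds, and again using that the round counter increases by at most one along each device's event chain, gives the bound claimed in the statement; Theorem~\ref{thm:selfstab} then follows by instantiating $\overline\svalue$ with the closing substitution.

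The step I expect to be the main obstacle is the exact bookkeeping of full rounds across the composition of cases, and in particular making the one-round ``discount'' of $\shareK$ precise at the level of the operational semantics: one must show rigorously that the field delivered as $\xname$ by \ruleNameSize{[E-SHARE]} coincides with the field delivered by $\nbrK\{\xname\}$ under \ruleNameSize{[E-REP]}/\ruleNameSize{[E-NBR]} but with each entry shifted earlier by exactly one round of the emitting device. Once that alignment lemma is in place, the contraction argument for the converging pattern and the rank induction for the acyclic pattern are routine, and the minimising case is already covered by Lemma~\ref{lem:minimising_termination}.
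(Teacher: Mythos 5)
Your proposal follows essentially the same route as the paper's proof: a structural induction over the grammar of Figure~\ref{fig:fragment} in which the non-pattern productions are discharged by the inductive hypothesis and the built-in assumption, the minimising pattern is delegated to Lemma~\ref{lem:minimising_termination}, the converging pattern is handled by showing the maximum distance to the limit strictly decreases each full round for $\shareK$ (every two rounds for $\repK+\nbrK$) over a well-founded set, and the acyclic pattern by induction on devices ordered by increasing stabilised potential. The one-round-shift ``alignment'' observation you flag as the main obstacle is exactly the fact the paper uses implicitly when bounding neighbours' values by $d^r_{n-2}$ versus $d^s_{n-1}$ in the converging case, so your plan is sound and matches the published argument.
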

\begin{proof}
	\correction{Let $\System$ be a system evolution following $\aEventS = \ap{\eventS, \neigh, <, \devof}$ with limit $\GraphS$.}
	The proof proceeds by induction on the syntax of expressions and programs. The given expressions $\s^r$, $\s^s$ could be:
	\begin{itemize}
		\item A variable $\xname_i$, so that $\applySubstitution{\s^r}{\substitution{\overline\xname}{\overline\svalue}} = \applySubstitution{\s^s}{\substitution{\overline\xname}{\overline\svalue}} = \svalue_i$ are already self-stabilised and identical.
		
		\item A value $\anyvalue$, so that $\applySubstitution{\s^r}{\substitution{\overline\xname}{\overline\svalue}} = \applySubstitution{\s^s}{\substitution{\overline\xname}{\overline\svalue}} = \anyvalue$ are already self-stabilised and identical.
		
		\correction{\item A $\mathtt{let}$-expression $\letK \xname = \s^r_1 \inK \s^r_2$, $\letK \xname = \s^s_1 \inK \s^s_2$. By inductive hypothesis, the sub-expressions $\s^r_1$, $\s^s_1$ stabilise to $\svalue$ within $n^r_1 \ge n^s_1$ full rounds of execution. After that, $\letK \xname = \s_1 \inK \s_2$ evaluates to the same value as the expression $\applySubstitution{\s_2}{\substitution{\xname}{\svalue}}$ which is self-stabilising by inductive hypothesis in a number of full rounds of execution $n^r_2 \ge n^s_2$. Thus, the whole $\mathtt{let}$-expression stabilises within $n^r_1+n^r_2 \ge n^s_1+n^s_2$ full rounds of execution.}
		
		\item A functional application $\funvalue^r(\overline\s^r)$, $\funvalue^s(\overline\s^s)$. \correction{By inductive hypothesis,} all expressions $\overline\s^r$, $\overline\s^s$ self-stabilise to $\overline\svalue$ after a certain amount of full rounds of execution (lower for $\overline\s^s$). After stabilisation of the arguments, if $\funvalue^r = \funvalue^s = \funvalue$ is a built-in function then $\funvalue(\overline\s^r)$, $\funvalue(\overline\s^s)$ \correction{stabilises by the assumption on built-ins with the same number of additional full rounds of execution.} Otherwise, $\funvalue^r(\overline\s^r)$, $\funvalue^s(\overline\s^r)$ evaluate to the same value of the expression $\applySubstitution{\body{\funvalue^r}}{\substitution{\args{\funvalue^r}}{\overline\svalue}}$ (resp. with $\funvalue^s$) which are self-stabilising in a number of full rounds of executions lower for $\funvalue^s$ by inductive hypothesis.
		
		\item A conditional $\s^r = \ifK (\s^r_1) \{\s^r_2\} \{\s^r_3\}$, $\s^s = \ifK (\s^s_1) \{\s^s_2\} \{\s^s_3\}$. \correction{By inductive hypothesis,} expressions $\s^r_1$, $\s^s_1$ self-stabilise to $\svalue_\textit{guard}$ (with fewer rounds for share). Let $\correction{\GraphS}_\truevalue$ be the sub-\correction{graph} consisting of devices $\deviceId$ such that $\svalue_\textit{guard}(\deviceId) = \truevalue$, and analogously $\correction{\GraphS}_\falsevalue$. Assume that $\s^r_2$, $\s^s_2$ self-stabilise to $\svalue_\truevalue$ in $\correction{\GraphS}_\truevalue$ and $\s^r_3$, $\s^s_3$ to $\svalue_\falsevalue$ in $\correction{\GraphS}_\falsevalue$ (with fewer rounds for share). Since a conditional is computed in isolation in the above defined sub-environments, $\s^r$, $\s^s$ self-stabilise to $\svalue  = \svalue_\truevalue \cup \svalue_\falsevalue$ \correction{(with fewer rounds for share).}
		
		\item A neighbourhood field construction $\nbrK\{\s^r\}$, $\nbrK\{\s^s\}$. \correction{By inductive hypothesis,} expressions $\s^r$, $\s^s$ self-stabilise to $\svalue$ after some rounds of computation (fewer for share). Then $\nbrK\{\s^r\}$, $\nbrK\{\s^s\}$ self-stabilise to the corresponding $\svalue'$ after one additional full round of execution, where $\svalue'(\deviceId)$ is $\svalue$ restricted to $\neighof(\deviceId)$.
		
		\item A converging pattern $\s^r_c$, $\s^s_c$:
		\begin{align*}
		\s^r_c = \repK(\e)\{ (\xname) &\toSymK{} \funvalue^\mathsf{C}(\nbrK\{\xname\}, \nbrK\{\s^r\}, \overline\e) \} \\
		\s^s_c = \shareK(\e)\{ (\xname) &\toSymK{} \funvalue^\mathsf{C}(\xname, \nbrK\{\s^s\}, \overline\e) \}
		\end{align*}
		\correction{By inductive hypothesis,} $\s^r$, $\s^s$ self-stabilise (the latter with fewer rounds) to a same $\svalue$. Given any index $\correction{n}$, let $d^r_{\correction{n}}$, $d^s_{\correction{n}}$ be the maximum distances $\s^r_c - \svalue(\correction{\devof(\eventId)})$, $\s^s_c - \svalue(\correction{\devof(\eventId)})$ \correction{realised during events $\eventId$ of the $n$-th full round of execution.
		
		We prove that $d^s_n$ is strictly decreasing with $n$, while $d^r_n \ge d^r_{n-1}$, $d^r_n > d^r_{n+2}$ strictly decreases every two rounds. Since distances are computed on a well-founded set, it will follow that they will became zero for a sufficiently large $n$ (smaller for share), thus} $\s_c^r$, $\s_c^s$ stabilise as well to the same $\svalue$ \correction{(with fewer rounds for share).}
		
		\correction{Consider an event on the $n$-th full round of execution. Thus, neighbours events belong to rounds of execution $\ge n-1$, hence their distance with $\svalue$ is at most $d^r_{n-2}$, $d^s_{n-1}$ respectively. It follows that the output of the converging function $\funvalue^\mathsf{C}$ must be strictly closer to $\svalue$ than $d^r_{n-2}$, $d^s_{n-1}$ respectively, concluding the proof.}
		
		\item An acyclic pattern $\s^r_a$, $\s^s_a$:
		\begin{align*}
		\s^r_a = \repK(\e)\{ (\xname) &\toSymK{} \funvalue^r(\muxK(\nbrlt(\s^r_p), \nbrK\{\xname\}, \s^r), \overline\s^r) \} \\
		\s^s_a = \shareK(\e)\{ (\xname) &\toSymK{} \funvalue^s(\muxK(\nbrlt(\s^s_p), \xname, \s^s), \overline\s^s) \}
		\end{align*}
		\correction{By inductive hypothesis,} $\s^r$, $\s^s$ self-stabilise (the latter with fewer rounds) to a same $\svalue$, and similarly for $\s^r_p$, $\s^s_p$ with $\svalue_p$ and $\overline\s^r$, $\overline\s^s$ with $\overline\svalue$.
		
		\correction{Let $\eventId$ be any firing in the first full round of execution (after stabilisation of sub-expressions)} of the device $\deviceId_0$ of minimal potential $\svalue_p(\deviceId_0)$ in the network. Since $\svalue_p(\deviceId_0)$ is minimal, $\nbrlt(\s^r_p)$, \correction{$\nbrlt(\s^s_p)$ are} false and \correction{the $\muxK$-expression} reduces to $\s^r$, \correction{$\s^s$} and the whole $\s^r_a$, \correction{$\s^s_a$} to $\funvalue^r(\s^r, \overline\s^r)$, \correction{$\funvalue^s(\s^s, \overline\s^s)$,} which self-stabilises by inductive hypothesis \correction{(with fewer rounds for share).}
		
		Let now \correction{$\eventId$ be any firing in the first (second for $\repK$) full round of execution after stabilisation of $\deviceId_0$} of the device $\deviceId_1$ of second minimal potential $\svalue_p(\deviceId_1)$. Then \correction{the $\muxK$-expression} in $\deviceId_1$ only (possibly) depends on the value of the device of minimal potential, which is already self-stabilised and available to neighbours. Thus by inductive hypothesis $\s^r_a$, \correction{$\s^s_a$} self-stabilises also in $\deviceId_1$ \correction{(with fewer rounds for share).} By repeating the same reasoning on all devices in order of increasing potential, we obtain a final \correction{number of rounds (smaller for share)} after which all devices have self-stabilised.
		
		\item A minimising $\repK$: this case is proved for closed expressions in Lemma \ref{lem:minimising_termination}, and its generalisation to open expressions is straightforward. \qedhere
	\end{itemize}
\end{proof}

\end{document}